\newcommand{\x}{\bm x}
\renewcommand{\t}{\bm t}
\newcommand{\z}{\bm z}
\newcommand{\y}{\bm y}
\newcommand{\nod}{\emptyset}
\newcommand{\R}{\mathds R}
\renewcommand{\(}{\left(}
\renewcommand{\)}{\right)}
\renewcommand{\u}{\mathbf{u}}
\newcommand{\f}{v}
\newcommand{\g}{g}
\newcommand{\ff}{\bm v}
\newcommand{\G}{G}
\newcommand{\F}{F}
\newcommand{\twu}{\tilde{u}}
\newcommand{\twbu}{\tilde{\bm u}}
\renewcommand{\th}{${}^{\text{th}}$ }
\theoremstyle{plain}
\newtheorem{lemma}{Lemma}
\newtheorem{proposition}{Proposition}
\newtheorem{corollary}{Corollary}
\theoremstyle{definition}
\newtheorem{definition}{Definition}
\theoremstyle{remark}
\newtheorem{example}{Example}
\begin{document}

\title{$L_p$-Nested Symmetric Distributions}

\author{Fabian Sinz}
\email{fabee@tuebingen.mpg.de}
\address{Max Planck Institute for Biological Cybernetics\\
       Spemannstra{\ss}e 41\\
       72076 T{\"u}bingen, Germany}
\author{Matthias Bethge}
\email{mbethge@tuebingen.mpg.de}
\address{Max Planck Institute for Biological Cybernetics\\
       Spemannstra{\ss}e 41\\
       72076 T{\"u}bingen, Germany}


\maketitle

\begin{abstract}
  Tractable generalizations of the Gaussian distribution play an
  important role for the analysis of high-dimensional data. One very
  general super-class of Normal distributions is the class of
  $\nu$-spherical distributions whose random variables can be
  represented as the product $\x = r\cdot \u$ of a uniformly
  distribution random variable $\u$ on the $1$-level set of a
  positively homogeneous function $\nu$ and arbitrary positive radial
  random variable $r$. Prominent subclasses of $\nu$-spherical
  distributions are spherically symmetric distributions
  ($\nu(\x)=\|\x\|_2$) which have been further generalized to the
  class of $L_p$-spherically symmetric distributions
  ($\nu(\x)=\|\x\|_p$). Both of these classes contain the Gaussian as
  a special case. In general, however, $\nu$-spherical distributions
  are computationally intractable since, for instance, the
  normalization constant or fast sampling algorithms are unknown for
  an arbitrary $\nu$. In this paper we introduce a new subclass of
  $\nu$-spherical distributions by choosing $\nu$ to be a nested
  cascade of $L_p$-norms. This class, which we consequently call {\em
    $L_p$-nested symmetric distributions} is still computationally
  tractable, but includes all the aforementioned subclasses as a
  special case. We derive a general expression for $L_p$-nested
  symmetric distributions as well as the uniform distribution on the
  $L_p$-nested unit sphere, including an explicit expression for the
  normalization constant. We state several general properties of
  $L_p$-nested symmetric distributions, investigate its marginals,
  maximum likelihood fitting and discuss its tight links to well known
  machine learning methods such as Independent Component Analysis
  (ICA), Independent Subspace Analysis (ISA) and mixed norm
  regularizers. Finally, we derive a fast and exact sampling algorithm
  for arbitrary $L_p$-nested symmetric distributions, and introduce
  the Nested Radial Factorization algorithm (NRF), which is a form of
  non-linear ICA that transforms any linearly mixed, non-factorial
  $L_p$-nested source into statistically independent signals.
\end{abstract}
\keywords{ parametric density model, symmetric distribution,
  $\nu$-spherically symmetric distributions, non-linear independent
  component analysis, independent subspace analysis, robust Bayesian
  inference, mixed norm density model, uniform distributions on mixed
  norm spheres, nested radial factorization }

\section{Introduction}
\label{sec:Introduction}
High-dimensional data analysis virtually always starts with the
measurement of first and second-order moments that are sufficient to
fit a multivariate Gaussian distribution, the maximum entropy
distribution under these constraints. Natural data, however, often
exhibit significant deviations from a Gaussian distribution. In order
to model these higher-order correlations, it is necessary to have more
flexible distributions available. Therefore, it is an important
challenge to find generalizations of the Gaussian distribution which
are, on the one hand, more flexible but, on the other hand, still
exhibit enough structure to be computationally and analytically
tractable.  In particular, probability models with an explicit
normalization constant are desirable because they make direct model
comparison possible by comparing the likelihood of held out test
samples for different models. Additionally, such models often allow
for a direct optimization of the likelihood.

One way of imposing structure on probability distributions is to fix
the general form of the iso-density contour lines. This approach was
taken by \cite{fernandez:1995}. They modeled the contour lines by the
level sets of a positively homogeneous function of degree one,
i.e. functions $\nu$ that fulfill $\nu(a\cdot \x)=a \cdot \nu(\x)$ for
$\x\in\R^n$ and $a\in \R_0^+$. The resulting class of $\nu$-spherical
distributions have the general form $p(\x) = \rho(\nu(\x))$ for an
appropriate $\rho$ which causes $p(\x)$ to integrate to one. Since the
only access of $\rho$ to $\x$ is via $\nu$ one can show that, for a
fixed $\nu$, those distributions are generated by a univariate radial
distribution. In other words, $\nu$-spherically distributed random
variables can be represented as a product of two independent random
variables: one positive radial variable and another variable which is
uniform on the $1$-level set of $\nu$. This property makes this class
of distributions easy to fit to data since the maximum likelihood
procedure can be carried out on the univariate radial distribution
instead of the joint density. Unfortunately, deriving the
normalization constant for the joint distribution in the general case
is intractable because it depends on the surface area of those level
sets which can usually not be computed analytically.

Known tractable subclasses of $\nu$-spherical distributions are the
Gaussian, elliptically contoured, and $L_p$-spherical
distributions. The Gaussian is a special case of elliptically
contoured distributions. After centering and whitening $\mathbf{x} :=
C^{-1/2} (\mathbf{s} - E[\mathbf{s}])$ a Gaussian distribution is
spherically symmetric and the squared $L_2$-norm $||\mathbf{x}||_2^2 =
x_1^2 + \dots + x_n^2$ of the samples follow a $\chi^2$-distribution
(i.e. the radial distribution is a $\chi$-distribution). Elliptically
contoured distributions other than the Gaussian are obtained by using
a radial distribution different from the $\chi$-distribution
\citep{kelker:1970,fang:1990}.  

The extension from $L_2$- to $L_p$-spherically symmetric distributions
is based on replacing the $L_2$-norm by the $L_p$-norm
\[\nu(\x) = \|\x\|_p = \(\sum_{i=1}^n |x_i|^p\)^\frac{1}{p}
,\:p >0\] in the definition of the density. That is, the density of
$L_p$-spherical distributions can always be written in the form $p(\x)
= \rho(||\x||_p)$.  Those distributions have been studied by
\cite{osiewalski:1993} and \cite{gupta:1997}.  We will adopt the
naming convention of \cite{gupta:1997} and call $\|\x\|_p$ an {\em
  $L_p$-norm} even though the triangle inequality only holds for $p\ge
1$.  $L_p$-spherically symmetric distribution with $p\not=2$ are no
longer invariant with respect to rotations (transformations from
$SO(n)$). Instead, they are only invariant under permutations of the
coordinate axes.  In some cases, it may not be too restrictive to
assume permutation or even rotational symmetry for the data. In other
cases, such symmetry assumptions might not be justified and let the
model miss important regularities.

Here, we present a generalization of the class of $L_p$-spherically
symmetric distribution within the class of $\nu$-spherical
distributions that makes weaker assumptions about the symmetries in
the data but still is analytically tractable. Instead of using a
single $L_p$-norm to define the contour of the density, we use a
nested cascade of $L_p$-norms where an $L_p$-norm is computed over
groups of $L_p$-norms over groups of $L_p$-norms ..., each of which
having a possibly different $p$. Due to this nested structure we call
this new class of distributions {\em $L_p$-nested symmetric
  distributions}. The nested combination of $L_p$-norms preserves
positive homogeneity but does not require permutation invariance
anymore. While $L_p$-nested distributions are still invariant under
reflections of the coordinate axes, permutation symmetry only holds
within the subspaces of the $L_p$-norms at the bottom of the
cascade. As demonstrated in \cite{sinz:2010}, one possible application
domain of $L_p$-nested symmetric distributions are patches of natural
images. In the current paper, we would like to present a formal
treatment of this class of distributions. We ask readers interested in
the application of this distributions to natural images to refer to
\cite{sinz:2010}.

We demonstrate below that the construction of the nested $L_p$-norm
cascade still bears enough structure to compute the Jacobian of
polar-like coordinates similar to those of \cite{song:1997} and
\cite{gupta:1997}. With this Jacobian at hand it is possible to
compute the univariate radial distribution for an arbitrary
$L_p$-nested density and to define the uniform distribution on the
$L_p$-nested unit sphere $\mathds L_\nu = \{\x \in \mathds R^n|
\nu(\x)=1\}$. Furthermore, we compute the surface area of the
$L_p$-nested unit sphere and, therefore, the general normalization
constant for $L_p$-nested distributions. By deriving these general
relations for the class of $L_p$-nested distributions we have
determined a new class of tractable $\nu$-spherical distributions
which is so far the only one containing the Gaussian, elliptically
contoured, and $L_p$-spherical distributions as special cases.

$L_p$-spherically symmetric distributions have been used in various
contexts in statistics and machine learning. Many results carry over
to $L_p$-nested symmetric distributions which allow a wider
application range. \cite{osiewalski:1993} showed that the posterior on
the location of a $L_p$-spherically symmetric distributions together
with an improper Jeffrey's prior on the scale does not depend on the
particular type of $L_p$-spherically symmetric distribution
used. Below, we show that this results carries over to $L_p$-nested
symmetric distributions. This means that we can robustly determine the
location parameter by Bayesian inference for a very large class of
distributions.

A large class of machine learning algorithms can be written as an
optimization problem on the sum of a regularizer and a loss
functions. For certain regularizers and loss functions, like the
sparse $L_1$ regularizer and the mean squared loss, the optimization
problem can be seen as the Maximum A Posteriori (MAP) estimate of a
stochastic model in which the prior and the likelihood are the
negative exponentiated regularizer and loss terms. Since $p(\bm x)
\propto \exp(-||\x||_p^p)$ is an $L_p$-spherically symmetric model,
regularizers which can be written in terms of a norm have a tight link
to $L_p$-spherically symmetric distributions. In an analogous way,
$L_p$-nested distributions exhibit a tight link to mixed-norm
regularizers which have recently gained increasing interest in the
machine learning community \citep[see
e.g.][]{zhao:2008,yuan:2006,kowalski:2008}. $L_p$-nested symmetric
distributions can be used for a Bayesian treatment of mixed-norm
regularized algorithms. Furthermore, they can be used to understand
the prior assumptions made by such regularizers. Below we discuss an
implicit dependence assumptions between the regularized variables that
follows from the theory of $L_p$-nested symmetric distributions.

Finally, the only factorial $L_p$-spherically symmetric distribution
\citep{sinz:2009a}, the $p$-generalized Normal distribution, has been
used as an ICA model in which the marginals follow an exponential
power distribution. This class of ICA is particularly suited for
natural signals like images and sounds
\citep{twlee:2000,zhang:2004,lewicki:2002}. Interestingly,
$L_p$-spherically symmetric distributions other than the
$p$-generalized Normal give rise to a non-linear ICA algorithm called
Radial Gaussianization for $p=2$ \citep{lyu:2009} or Radial
Factorization for arbitrary $p$ \citep{sinz:2009}. As discussed
below, $L_p$-nested distributions are a natural extension of the
linear $L_p$-spherically symmetric ICA algorithm to ISA, and give rise
to a more general non-linear ICA algorithm in the spirit of Radial
Factorization.

The remaining part of the paper is structured as follows: in Section
\ref{sec:jacobian} we define polar-like coordinates for $L_p$-nested
symmetrically distributed random variables and present an analytical
expression for the determinant of the Jacobian for this coordinate
transformation. Using this expression, we define the uniform
distribution on the $L_p$-nested unit sphere and the class of
$L_p$-nested symmetric distributions for an arbitrary $L_p$-nested
function in Section \ref{sec:distribution}. In Section
\ref{sec:marginals} we derive an analytical form of $L_p$-nested
symmetric distributions when marginalizing out lower levels of the
$L_p$-nested cascade and demonstrate that marginals of $L_p$-nested
symmetric distributions are not necessarily
$L_p$-nested. Additionally, we demonstrate that the only factorial
$L_p$-nested symmetric distribution is necessarily $L_p$-spherical and
discuss the implications of this result for mixed norm
regularizers. In Section \ref{sec:fitting} we propose an algorithm for
fitting arbitrary $L_p$-nested models and derive a sampling scheme for
arbitrary $L_p$-nested symmetric distributions. In Section
\ref{sec:robust} we generalize a result by \cite{osiewalski:1993} on
robust Bayesian inference on the location parameter to $L_p$-nested
symmetric distribution. In Section \ref{sec:isaica} we discuss the
relationship of $L_p$-nested symmetric distributions to ICA, ISA and
their possible role as prior on hidden variable in over-complete
linear models. Finally, we derive a non-linear ICA algorithm for
linearly mixed non-factorial $L_p$-nested sources in Section
\ref{sec:nonlinearICA} which we call Nested Radial Factorization
(NRF).

\section{$L_p$-nested functions, Coordinate Transformation and
  Jacobian}
\label{sec:jacobian}
Consider the function
\begin{align}
f(\x)&=\left(|x_{1}|^{p_{\emptyset}} + \left(|x_{2}|^{p_{1}}+|x_{3}|^{p_{1}}\right)^{\frac{p_{\emptyset}}{p_{1}}}\right)^{\frac{1}{p_{\emptyset}}}.
\label{eqn:exampleFct}
\end{align}
with $p_\nod,p_1\in \R^+$. This function is obviously a cascade of two
$L_p$-norms and is thus positively homogeneous of degree one. Figure
\ref{subfig:exampleTree1} shows this function visualized as a
tree. Naturally, any tree like the ones in Figure
\ref{fig:exampleTrees} corresponds to a function of the kind of
equation (\ref{eqn:exampleFct}). In general, the $n$ leaves of the
tree correspond to the $n$ coefficients of the vector $\x\in\R^n$ and
each inner node computes the $L_p$-norm of its children using its
specific $p$. We call the class of functions which is generated in
that way {\em $L_p$-nested} and the corresponding distributions, that
are symmetric or invariant with respect to it, {\em $L_p$-nested
  symmetric distributions}.

\begin{figure}[!ht]
  \begin{center}
    \subfigure[Equation (\ref{eqn:exampleFct}) as tree.]{
      \includegraphics[width=6.5cm]{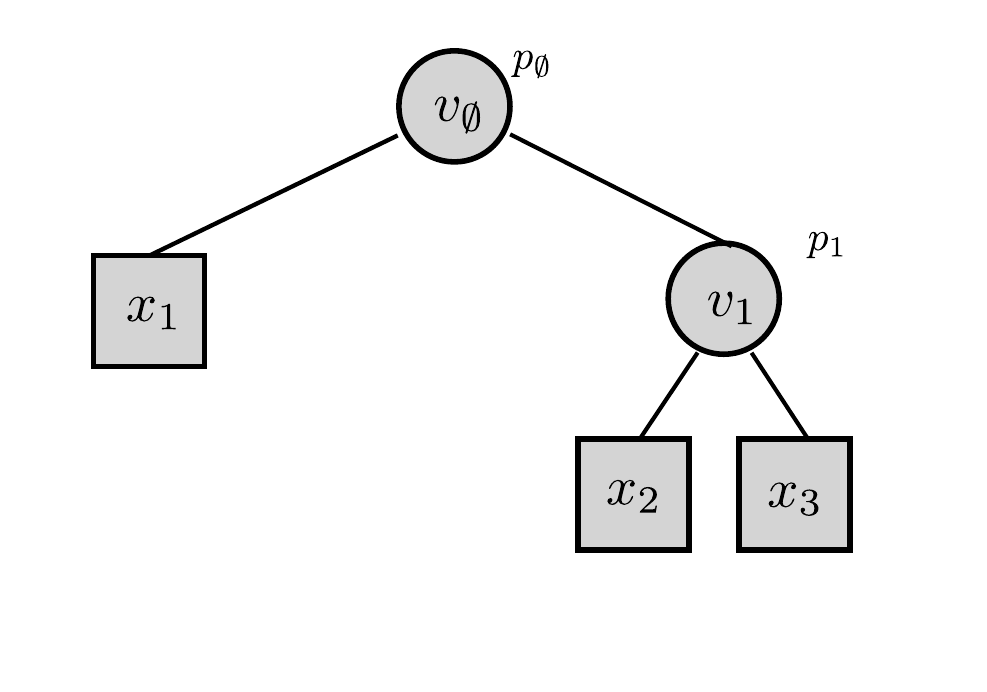}
      \label{subfig:exampleTree1}
    }
    \subfigure[Equation (\ref{eqn:exampleFct}) as tree in multi-index notation.]{
      \includegraphics[width=6.5cm]{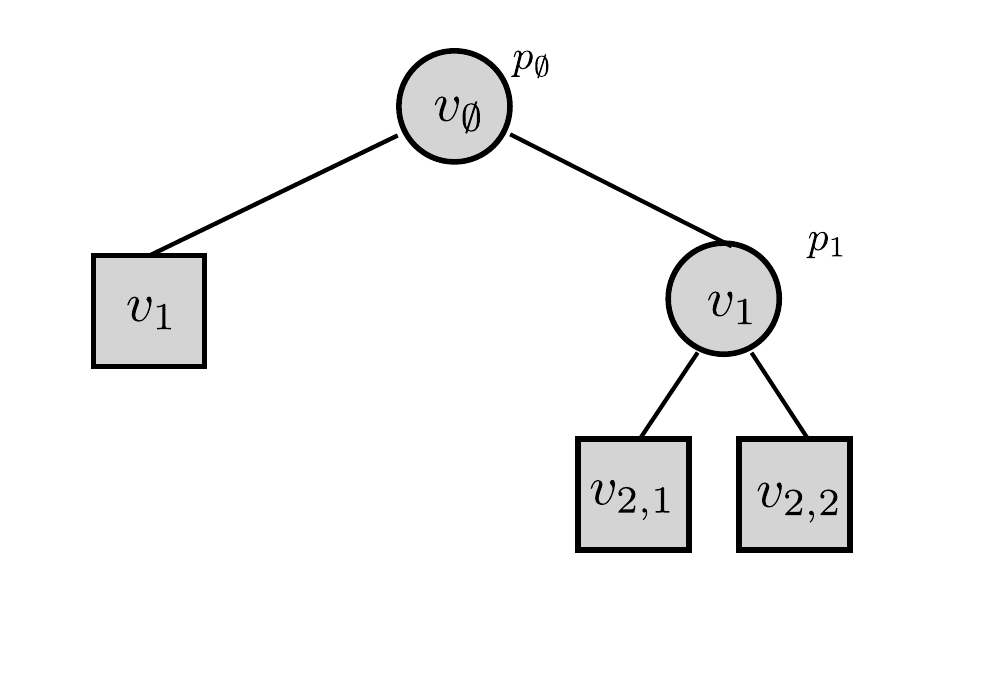}
      \label{subfig:exampleTree2}
    }
\end{center}
\caption{\label{fig:exampleTrees} Equation (\ref{eqn:exampleFct})
  visualized as a tree with two different naming conventions. Figure
  \subref{subfig:exampleTree1} shows the tree where the nodes are
  labeled with the coefficients of $\x\in\R^n$. Figure
  \subref{subfig:exampleTree2} shows the same tree in multi-index
  notation where the multi-index of a node describes the path from the
  root node to that node in the tree. The leaves $\f_1, \f_{2,1}$ and
  $\f_{2,2}$ still correspond to $x_1, x_2$ and $x_3$, respectively,
  but have been renamed to the multi-index notation used in this
  article.}
\end{figure}

$L_p$-nested functions are much more flexible in creating different
shapes of level sets than single $L_p$-norms. Those level sets become
the iso-density contours in the family of $L_p$-nested symmetric
distributions. Figure \ref{fig:contour_zoo} shows a variety of
contours generated by the simplest non-trivial $L_p$-nested function
shown in equation (\ref{eqn:exampleFct}). The shapes show the unit
spheres for all possible combinations of
$p_\nod,p_1\in\{0.5,1,2,10\}$. On the diagonal, $p_\nod$ and $p_1$ are
equal and therefore constitute $L_p$-norms. The corresponding
distributions are members of the $L_p$-spherically symmetric class.

\begin{figure}[!h]
  \begin{center}
    \includegraphics[width=15.5cm]{./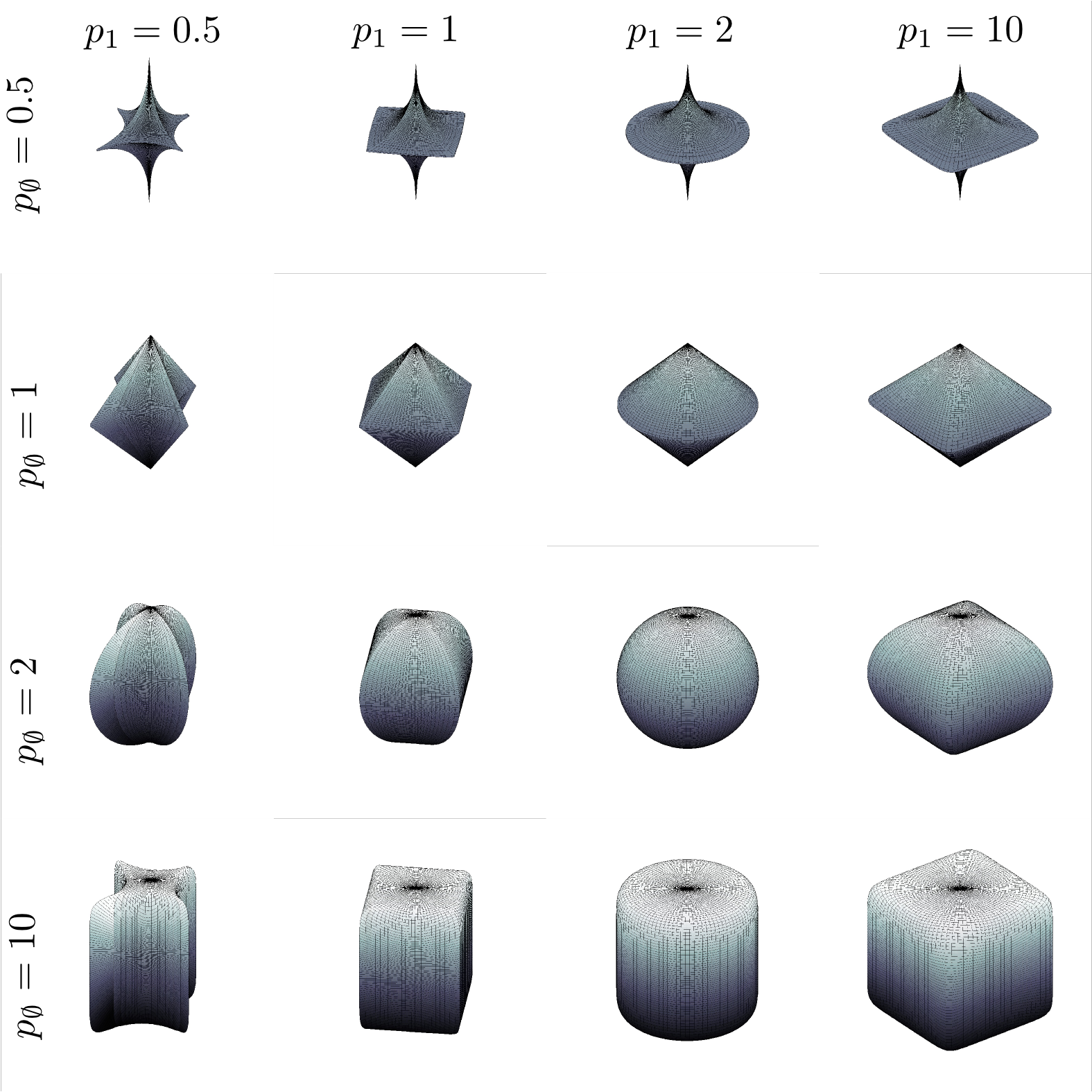}
  \end{center}
  \caption{\label{fig:contour_zoo} Variety of contours created by the
     $L_p$-nested function of equation (\ref{eqn:exampleFct}) for all combinations of
    $p_\nod,p_1\in \{0.5,1,2,10\}$.}
\end{figure}

In order to make general statements about general $L_p$-nested
functions, we introduce a notation that is suitable for the tree
structure of $L_p$-nested functions. As we will heavily use that
notation in the remainder of the paper, we would like to emphasize the
importance of the following paragraphs. We will illustrate the
notation with an example below. Additionally, Figure
\ref{fig:exampleTrees} and Table \ref{tab:notation} can be used for
reference.
\begin{table}
\begin{centering}
\renewcommand{\arraystretch}{1.2}
\begin{tabular}{ll}
\hline 
$f(\cdot)=f_{\emptyset}(\cdot)$ & $L_{p}$-nested function\tabularnewline
\hline 
$I=i_{1},...,i_{m}$ & Multi-index denoting a node in the tree. The single indices describe \tabularnewline
 & the path from the root node to the respective node $I$.\tabularnewline
\hline 
$\bm{x}_{I}$ & All entries in $\bm{x}$ that correspond to the leaves in the subtree
under\tabularnewline
 & the node $I$.\tabularnewline
\hline 
$\bm{x}_{\widehat I}$ & All entries in $\bm{x}$ that are not leaves in the subtree
under\tabularnewline
 & the node $I$.\tabularnewline
\hline 
$f_{I}(\cdot)$ & $L_{p}$-nested function corresponding to the subtree under the node
$I$.\tabularnewline
\hline 
$\f_{\emptyset}$  & Function value at the root node\tabularnewline
\hline 
$\f_{I}$ & Function value at an arbitrary node with multi-index $I$.\tabularnewline
\hline 
$\ell_{I}$ & The number of direct children of a node $I$.\tabularnewline
\hline 
$n_{I}$ & The number of leaves in the subtree under the node $I$.\tabularnewline
\hline 
$\ff_{I,1:\ell_{I}}$ & Vector with the function values at the direct children of a node $I$.\tabularnewline
\hline
\end{tabular}
\par\end{centering}

\caption{\label{tab:notation} Summary of the notation used for $L_{p}$-nested functions in this
article.}

\end{table}

We use multi-indices to denote the different nodes of the tree
corresponding to an $L_p$-nested function $f$. The function $f=f_\nod$
itself computes the value $\f_\nod$ at the root node (see Figure
\ref{fig:exampleTrees}). Those values are denoted by variables $\f$.
The functions corresponding to its children are denoted by
$f_1,...,f_{\ell_\nod}$, i.e.  $f(\cdot) = f_\nod(\cdot) =
\|(f_1(\cdot),...,f_{\ell_\nod}(\cdot))\|_{p_\nod}$. We always use the
letter ``$\ell$'' indexed by the node's multi-index to denote the
total number of direct children of that node. The functions of the
children of the $i$\th child of the root node are denoted by
$f_{i,1},...,f_{i,\ell_i}$ and so on.  In this manner, an index is
added for denoting the children of a particular node in the tree and
each multi-index denotes the path to the respective node in the tree.
For the sake of compact notation, we use upper case letters to denote
a single multi-index $I = i_1,...,i_\ell$. The range of the single
indices and the length of the multi-index should be clear from the
context.  A concatenation $I,k$ of a multi-index $I$ with a single
index $k$ corresponds to adding $k$ to the index tuple, i.e.
$I,k=i_1,...,i_m,k$.  We use the convention that $I,\nod=I$.  Those
coefficients of the vector $\x$ that correspond to leaves of the
subtree under a node with the index $I$ are denoted by $\x_I$. The
complement of those coefficients, i.e. the ones that are not in the
subtree under the node $I$, are denoted by $\x_{\widehat I}$. The
number of leaves in a subtree under a node $I$ is denoted by $n_I$. If
$I$ denotes a leaf then $n_I=1$.

The $L_p$-nested function associated with the subtree under a node $I$
is denoted by $$f_I(\x_I) =
||(f_{I,1}(\x_{I,1}),...,f_{I,\ell_I}(\x_{I,\ell_I}))^\top ||_{p_I}.$$
Just like for the root node, we use the variable $\f_I$ to denote the
function value $\f_I=f_I(\x_I)$ of a subtree $I$. A vector with the
function values of the children of $I$ is denoted with bold font
$\ff_{I,1:\ell_I}$ where the colon indicates that we mean the vector
of the function values of the $\ell_I$ children of node $I$:
\begin{align*}
  f_I(\x_I) &= ||(f_{I,1}(\x_{I,1}),...,f_{I,\ell_I}(\x_{I,\ell_I}))^\top ||_{p_I}\\
  &= ||(\f_{I,1},...,\f_{I,\ell_I} )^\top||_{p_I} = ||\ff_{I,1:\ell_I}||_{p_I}.
\end{align*}

Note that we can assign an arbitrary $p$ to leaf nodes since $p$ for
single variables always cancel. For that reason we can choose an
arbitrary $p$ for convenience and fix its value to $p=1$. Figure
\ref{subfig:exampleTree2} shows the multi-index notation for our
example of equation (\ref{eqn:exampleFct}).

To illustrate the notation: Let $I=i_1,...,i_d$ be the multi-index of
a node in the tree. $i_1,...,i_d$ describes the path to that node,
i.e. the respective node is the $i_d^{th}$ child of the $i_{d-1}^{th}$
child of the $i_{d-2}^{th}$ child of the ... of the $i_1^{th}$ child
of the root node. Assume that the leaves in the subtree below the node
$I$ cover the vector entries $x_2,...,x_{10}$. Then
$\x_I=(x_2,...,x_{10})$, $\x_{\widehat I}=(x_1,x_{11},x_{12},...)$, and $n_{I}=9$. Assume that node $I$ has
$\ell_I=2$ children. Those would be denoted by $I,1$ and $I,2$. The
function realized by node $I$ would be denoted by $f_I$ and only acts
on $\x_I$. The value of the function would be $f_I(\x_I)=\f_I$ and the
vector containing the values of the children of $I$ would be
$\ff_{I,1:2}=(\f_{I,1},\f_{I,2})^\top=(f_{I,1}(\x_{I,1}),f_{I,2}(\x_{I,2}))^\top$.

We now introduce a coordinate representation that is especially
tailored to $L_p$-nested symmetrically distributed variables:  One of
the most important consequence of the positive homogeneity of $f$ is
that it can be used to ``normalize'' vectors and, by that property,
create a polar like coordinate representation of a vector $\x$. Such
polar-like coordinates generalize the coordinate representation for
$L_p$-norms by \cite{gupta:1997}.

\begin{definition}[Polar-like Coordinates]
  \label{def:coordinates}
  We define the following polar-like coordinates
  for a vector $\x\in\R^n$:
  \begin{align*}
    u_{i} &= \frac{x_{i}}{f(\x)}\mbox{ for }i=1,...,n-1\\
    r &= f(\x).
  \end{align*}
  The inverse coordinate transformation is given by 
  \begin{align*}
    x_{i} &= ru_{i}\mbox{ for }i=1,...,n-1\\
    x_{n} &= r\Delta_n u_{n}
  \end{align*}
  where $\Delta_n=\mathrm{sgn}\, x_{n}$ and $u_{n}=\frac{|x_n|}{f(\x)}$.
\end{definition}

Note that $u_n$ is not part of the coordinate representation since
normalization with $1/f(\x) $ decreases the degrees of
freedom $\u$ by one, i.e. $u_n$ can always be computed from all other
$u_i$ by solving $f(\u)=f\(\x/f(\x)\)=1$ for $u_n$. We only use the
term $u_n$ for notational simplicity. With a slight abuse of notation,
we will use $\u$ to denote the normalized vector $\x/f(\x)$ or only
its first $n-1$ components. The exact meaning should always be clear
from the context.

The definition of the coordinates is exactly the same as the one by
\cite{gupta:1997} with the only difference that the $L_p$-norm is
replaced by an $L_p$-nested function. Just as in the case of
$L_p$-spherical coordinates, it will turn out that the determinant of
the Jacobian of the coordinate transformation does not depend on the
value of $\Delta_n$ and can be computed analytically. The determinant
is essential for deriving the uniform distribution on the unit
$L_p$-nested sphere $\mathds L_f$, i.e. the $1$-level set of
$f$. Apart from that, it can be used to compute the radial
distribution for a given $L_p$-nested distribution. We start by
stating the general form of the determinant in terms of the partial
derivatives $\frac{\partial u_n}{\partial u_k}$, $u_k$ and
$r$. Afterwards we demonstrate that those partial derivatives have a
special form and that most of them cancel in Laplace's expansion of
the determinant.

\begin{lemma}[Determinant of the Jacobian]
  \label{lem:generalDeterminant}
  Let $r$ and $\u$ be defined as in Definition
  \ref{def:coordinates}. The general form of the determinant of the
  Jacobian $\mathcal J=\(\frac{\partial x_i}{\partial y_j}\)_{ij}$ of
  the inverse coordinate transformation for $y_1=r$ and $y_i=u_{i-1}$
  for $i=2,...,n$, is given by
  \begin{align}
    |\det\mathcal{J}| &= r^{n-1}\left(-\sum_{k=1}^{n-1}\frac{\partial
        u_{n}}{\partial u_{k}}\cdot
      u_{k}+u_{n}\right)\label{eq:generalDeterminant}.
  \end{align}
\end{lemma}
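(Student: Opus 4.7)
The plan is to write out the Jacobian matrix $\mathcal{J}$ explicitly and then evaluate the determinant by Laplace expansion along its last row, exploiting the extreme sparsity of the resulting minors.

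First I would set up the matrix. Differentiating the inverse coordinate transformation gives, for the first column, $\partial x_i/\partial r = u_i$ (for $i=1,\dots,n-1$) and $\partial x_n/\partial r = \Delta_n u_n$. For the $(k+1)$-th column (for $k=1,\dots,n-1$), $\partial x_i/\partial u_k = r\delta_{ik}$ for $i=1,\dots,n-1$, while $\partial x_n/\partial u_k = r\Delta_n\,\partial u_n/\partial u_k$. Hence $\mathcal{J}$ has the form
\begin{equation*}
\mathcal{J}=\begin{pmatrix}
u_1 & r & 0 & \cdots & 0\\
u_2 & 0 & r & \cdots & 0\\
\vdots & & & \ddots & \\
u_{n-1} & 0 & 0 & \cdots & r\\
\Delta_n u_n & r\Delta_n\frac{\partial u_n}{\partial u_1} & r\Delta_n\frac{\partial u_n}{\partial u_2} & \cdots & r\Delta_n\frac{\partial u_n}{\partial u_{n-1}}
\end{pmatrix},
\end{equation*}
so the only nonzero entries sit in the first column, on the shifted ``diagonal'' $\mathcal{J}_{i,i+1}=r$, and in the last row.

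Next I would expand $\det\mathcal{J}$ along the last row. The $(n,1)$-minor is the $(n-1)\times(n-1)$ diagonal matrix $rI_{n-1}$, contributing $(-1)^{n+1}\Delta_n u_n\cdot r^{n-1}$. For $k\in\{1,\dots,n-1\}$, the $(n,k{+}1)$-minor, obtained by deleting row $n$ and column $k{+}1$, has a row (the former $k$-th row) whose only nonzero entry is $u_k$ in column $1$, because the $r$ that used to stand in column $k{+}1$ has been deleted. A second Laplace expansion of this minor along that row reduces it to $r^{n-2}$ times $u_k$ times a sign, yielding after combining all sign factors a contribution $(-1)^n r^{n-1}\Delta_n u_k\,\partial u_n/\partial u_k$. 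Summing over $k$ and over the two cases gives
\begin{equation*}
\det\mathcal{J}=(-1)^n r^{n-1}\Delta_n\Bigl(\sum_{k=1}^{n-1}\frac{\partial u_n}{\partial u_k}u_k-u_n\Bigr).
\end{equation*}

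Finally I would take absolute values. Since $|\Delta_n|=1$, the claimed identity follows once I verify that the bracketed expression has a definite sign (so that it really equals $-(\sum_k u_k\,\partial u_n/\partial u_k)+u_n$ up to the absolute value sign). This is where I would invoke Euler's relation for the positively homogeneous function $f$: differentiating the constraint $f(u_1,\dots,u_{n-1},u_n(\u))=1$ with respect to $u_k$ gives $\partial u_n/\partial u_k=-(\partial f/\partial u_k)/(\partial f/\partial u_n)$, and Euler's identity $\sum_{i=1}^n u_i\,\partial f/\partial u_i=f(\u)=1$ combines these into
\begin{equation*}
-\sum_{k=1}^{n-1}\frac{\partial u_n}{\partial u_k}u_k+u_n=\frac{1}{\partial f/\partial u_n},
\end{equation*}
which is positive on the relevant branch of the unit sphere. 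The main obstacle is not the Laplace expansion itself (it is largely mechanical) but the careful bookkeeping of signs across the two nested expansions, together with the sign argument in the last step needed to drop the absolute value to the outside; this is the place where a small index error would ruin the formula.
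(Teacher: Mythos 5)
Your proof is correct and takes essentially the same route as the paper's: write the Jacobian out explicitly and Laplace-expand it (you expand along the last row with the $r$-column placed first, the paper expands along the last column after factoring $r$ and $\Delta_n$ out, which is the same computation up to a permutation of columns), and both arrive at $|\det\mathcal J| = r^{n-1}\bigl|{-}\sum_k u_k\,\partial u_n/\partial u_k + u_n\bigr|$. Your closing step via Euler's identity, showing the bracket equals $1/(\partial f/\partial u_n)>0$ so the absolute value can be dropped, is a small but genuine addition: the paper omits this sign check in the lemma's proof and only implicitly settles it later, when Proposition \ref{pro:DetJacobian} exhibits the bracket as a product of manifestly nonnegative $\G$-terms.
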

\begin{proof}The proof can be found in the Appendix \ref{app:determinantOfJacobian}.
\end{proof}

The problematic part in equation (\ref{eq:generalDeterminant}) are the
terms $\frac{\partial u_{n}}{\partial u_{k}}$, which obviously involve
extensive usage of the chain rule. Fortunately, most of them cancel when
inserting them back into equation (\ref{eq:generalDeterminant}),
leaving a comparably simple formula. The remaining part of this
section is devoted to computing those terms and demonstrating how they
vanish in the formula for the determinant. Before we state the general
case we would like to demonstrate the basic mechanism through a simple
example. We urge the reader to follow the next example as it
illustrates all important ideas about the coordinate transformation
and its Jacobian.

\begin{example}
\label{ex:determinantEx1}
Consider an $L_p$-nested function very similar to our introductory
example of equation (\ref{eqn:exampleFct}):
 $$f(\x)=\left(\left(|x_{1}|^{p_{1}}+|x_{2}|^{p_{1}}\right)^{\frac{p_{\emptyset}}{p_{1}}}+|x_{3}|^{p_{\emptyset}}\right)^{\frac{1}{p_{\emptyset}}}.$$ 
 Setting $\u = \frac{\x}{f(\x)}$ and solving for $u_3$ yields
\begin{align}
  f(\u) = 1 &\Leftrightarrow\:\:\: u_3 =
  \left(1-\left(|u_{1}|^{p_{1}}+|u_{2}|^{p_{1}}\right)^{\frac{p_{\emptyset}}{p_{1}}}\right)^{\frac{1}{p_{\emptyset}}} \label{eq:u3}
\end{align}
We would like to emphasize again, that $u_3$ is actually not part of
the coordinate representation and only used for notational
simplicity. By construction, $u_3$ is always positive. This is no
restriction since Lemma \ref{sec:jacobian} shows that the determinant
of the Jacobian does not depend on its sign. However, when computing
the volume and the surface area of the $L_p$-nested unit sphere it
will become important since it introduces a factor of $2$ to account
for the fact that $u_3$ (or $u_n$ in general) can in principle also
attain negative values. 

Now, consider
 \begin{align*}
 \G_2(\u_{\widehat 2}) &= g_{2}(\u_{\widehat 2})^{1-p_{\emptyset}} =  \left(1-\left(|u_{1}|^{p_{1}}+|u_{2}|^{p_{1}}\right)^{\frac{p_{\emptyset}}{p_{1}}}\right)^{\frac{1-p_{\emptyset}}{p_{\emptyset}}}\\
 \F_1(\u_{1}) &= f_{1}(\u_{1})^{p_{\emptyset}-p_{1}} =  \left(|u_{1}|^{p_{1}}+|u_{2}|^{p_{1}}\right)^{\frac{p_{\emptyset}-p_{1}}{p_{1}}},
\end{align*}
where the subindices of $\u,\:f,\:g,\:\G$ and $\F$ have to be read as
multi-indices. The function $g_I$ computes the value of the node $I$
from all other leaves that are not part of the subtree under $I$
by fixing the value of the root node to one. 

$\G_{2}(\bm u_{\widehat 2})$ and $\F_{1}(\u_{1})$ are terms that arise
from applying the chain rule when computing the partial derivatives
$\frac{\partial u_3}{\partial u_k}$. Taking those partial
derivatives can be thought of as pealing off layer by layer of Equation
\eqref{eq:u3} via the chain rule. By doing so, we ``move'' on a path
between $u_3$ and $u_k$. Each application of the chain rule
corresponds to one step up or down in the tree. First, we move upwards
in the tree, starting from $u_3$. This produces the $\G$-terms. In
this example, there is only one step upwards, but in general, there
can be several, depending on the depth of $u_n$ in the tree. Each step
up will produce one $\G$-term. At some point, we will move downwards
in the tree to reach $u_k$. This will produce the $\F$-terms. While
there are as many $\G$-terms as upward steps, there is one term less
when moving downwards. Therefore, in this example, there is one term
$\G_{2}(\bm u_{\widehat 2})$ which originates from using the chain
rule upwards in the tree and one term $\F_{1}(\u_{1})$ from using it
downwards. The indices correspond to the multi-indices of the
respective nodes. 

Computing the derivative yields
\begin{align*}
\frac{\partial u_3}{\partial u_{k}} &= -\G_{2}(\bm u_{\widehat 2})\F_{1}(\u_{1})\Delta_{k}|u_{k}|^{p_{1}-1}.
\end{align*}
By inserting the results in equation (\ref{eq:generalDeterminant}) we
obtain
\begin{align*}
  \frac{1}{r^2}|\mathcal J|&=  \sum_{k=1}^{2}\G_{2}(\bm u_{\widehat 2})\F_{1}(\u_{1})|u_{k}|^{p_{1}}+u_3\\
    &= \G_{2}(\bm u_{\widehat 2}) \( \F_{1}(\u_{1})\sum_{k=1}^{2}|u_{k}|^{p_{1}}
    +1-\F_1(\u_{1})\F_{1}(\u_{1})^{-1}\left(|u_{1}|^{p_{1}}+|u_{2}|^{p_{1}}\right)^{\frac{p_{\emptyset}}{p_{1}}} \)\\
    &= \G_{2}(\bm u_{\widehat 2}) \( \F_{1}(\u_{1})\sum_{k=1}^{2}|u_{k}|^{p_{1}}
    +1-\F_1(\u_{1})\sum_{k=1}^{2}|u_{k}|^{p_{1}}  \)\\
    &= \G_2(\bm u_{\widehat 2}).
\end{align*}
\end{example}

The example suggests that the terms from using the chain rule
downwards in the tree cancel while the terms from using the chain rule
upwards remain.  The following proposition states that this is true in
general.

\begin{proposition}[Determinant of the Jacobian]
  \label{pro:DetJacobian} Let $\mathcal L$ be the set of multi-indices
  of the path from the leaf $u_n$ to the root node (excluding the root
  node) and let the terms $\G_{I,\ell_I}(\bm u_{\widehat{I,\ell_I}})$ recursively be defined as
\begin{align}
  \G_{I,\ell_I}(\bm u_{\widehat{I,\ell_I}}) &=  \g_{I,{\ell_I}}(\u_{\widehat{I,\ell_I}})^{p_{I,{\ell_I}}-p_{I}}  =\left(\g_{I}(\u_{\widehat{I}})^{p_{I}}-\sum_{j=1}^{\ell-1}f_{I,j}(\u_{I,j})^{p_{I}}\right)^{\frac{p_{I,{\ell_I}}-p_{I}}{p_{I}}} \label{eq:defF},
\end{align} 
where each of the functions $g_{I,\ell_I}$ computes the value of the
$\ell$\th child of a node $I$ as a function of its neighbors $(I,1)$,
$...$, $(I,\ell_I-1)$ and its parent $I$ while fixing the value of the
root node to one. This is equivalent to computing the value of the
node $I$ from all coefficients $\u_{\widehat{I}}$ that are not leaves
in the subtree under $I$. Then, the determinant of the Jacobian for an
$L_p$-nested function is given by
\begin{align*}
\det|\mathcal{J}| & = r^{n-1}\prod_{L\in\mathcal L}\G_L(\bm u_{\widehat L}).
\end{align*}
\end{proposition}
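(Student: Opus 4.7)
\noindent\textit{Proof plan.} By Lemma~\ref{lem:generalDeterminant} the claim reduces to showing
\[
 u_n \;-\; \sum_{k=1}^{n-1}\frac{\partial u_n}{\partial u_k}\, u_k \;=\; \prod_{L\in\mathcal L}\G_L(\bm u_{\widehat L}).
\]
The plan is to compute each partial by implicit differentiation of the constraint $f(\u)=1$, which gives $\partial u_n/\partial u_k = -\partial_{u_k}f(\u) / \partial_{u_n}f(\u)$. Since $f$ is a cascade of $L_p$-norms, each numerator and denominator unrolls along the unique root-to-leaf path via the chain rule: every interior step $I\to(I,j)$ contributes a factor $f_I^{1-p_I}f_{I,j}^{p_I-1}$, and the leaf step contributes $|u_k|^{p-1}\Delta_k$ with $p$ the parent's exponent. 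In the quotient, the factors on the shared initial segment of the two root-to-leaf paths cancel down to the deepest common ancestor $I_{d(k)}\in\mathcal L$. Using $f(\u)=1$ together with definition~\eqref{eq:defF}, the residual upward factors along $\mathcal L$ strictly below $I_{d(k)}$ combine into exactly the $\G_L$'s appearing in the target product, while the residual downward factors from $I_{d(k)}$ to $u_k$ are $\F$-type terms of the same shape as in Example~\ref{ex:determinantEx1}.

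\smallskip
Substituting this formula into the right-hand side of Lemma~\ref{lem:generalDeterminant}, I would partition the sum over $k$ by the divergence depth $d=d(k)$. For each fixed $d$, the inner sum $\sum_{k:\,d(k)=d}(\text{downward }\F\text{-factors})\cdot\Delta_k|u_k|^{p_{I_d}-1}u_k$ collapses, by iterated use of the identity $\sum_j f_{I,j}^{p_I}=f_I^{p_I}$ along each descent, into $\sum_{j\neq \ell_{I_d}} f_{I_d,j}^{p_{I_d}} = f_{I_d}^{p_{I_d}}-f_{I_d,\ell_{I_d}}^{p_{I_d}}$. The subtracted piece $f_{I_d,\ell_{I_d}}^{p_{I_d}}$ is precisely what is required to absorb the contribution from the next-deeper level of $\mathcal L$, reproducing one step of the two-level cancellation carried out explicitly in Example~\ref{ex:determinantEx1}. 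Iterating this from the deepest level of $\mathcal L$ upward, together with the trailing $+u_n$ from Lemma~\ref{lem:generalDeterminant}, telescopes the entire expression into $\prod_{L\in\mathcal L}\G_L(\bm u_{\widehat L})$.

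\smallskip
To make the iterated cancellation rigorous, I would organize the argument as an induction on the depth $d$ of the leaf $u_n$ in the tree. The base case $d=1$ is a plain $L_p$-norm and recovers the classical formula of \cite{gupta:1997}. For the inductive step I would split the $n-1$ indices into those whose leaves lie in the subtree $T$ rooted at $I_1\in\mathcal L$ and those that do not: for $k\notin T$ the derivative is controlled by the outermost $L_{p_\nod}$-norm alone and, after summation, produces the root-level factor $\G_{I_1}(\bm u_{\widehat{I_1}})$; for $k\in T$ the constraint $f(\u)=1$ pins $f_{I_1}(\u_{I_1})$ to a value determined by $\u_{\widehat{I_1}}$, so the inductive hypothesis applies inside $T$ with that value in place of the unit radius and yields the remaining product over $\mathcal L\setminus\{I_1\}$. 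The main obstacle throughout is the combinatorial bookkeeping---tracking which $f_I^{p_I}$ and $|u_k|^{p}$ factors appear with which exponents, and verifying that every downward $\F$-factor produced by the chain rule is annihilated by a matching $\F$-factor arising from the sibling-subtree summation. The induction on depth is designed precisely to reduce this to a single one-level cancellation at the root, which is exactly the manipulation performed explicitly in Example~\ref{ex:determinantEx1}.
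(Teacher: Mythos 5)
Your plan follows the same route as the paper's Appendix~\ref{app:determinantOfJacobian} proof: reduce to Lemma~\ref{lem:generalDeterminant}, express each $\partial u_n/\partial u_k$ as a product of upward $\G$-terms and downward $\F$-terms along the two root-to-leaf paths (the content of Lemmas~\ref{lem:DerRec} and~\ref{lem:derivatives}), and then telescope the sum level by level using $\sum_j f_{I,j}^{p_I}=f_I^{p_I}$ together with the defining relation for $\g$ so that all $\F$-terms cancel and only $\prod_{L\in\mathcal L}\G_L(\bm u_{\widehat L})$ survives. The paper organizes the telescoping as an explicit bottom-up factoring of one $\G_L$ at a time rather than your induction on the depth of $u_n$, but the cancellation mechanism is identical, so this is essentially the same proof.
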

\begin{proof}
The proof can be found in the Appendix \ref{app:determinantOfJacobian}.
\end{proof}
Let us illustrate the determinant with two examples:  
\begin{example}
 Consider a normal
  $L_p$-norm $$f(\x)=\(\sum_{i=1}^n|x_i|^p\)^\frac{1}{p}$$ which is
  obviously also an $L_p$-nested function.  Resolving the equation for
  the last coordinate of the normalized vector $\u$ yields
  $g_n(\u_{\widehat{n}})=u_n=\(1-\sum_{i=1}^{n-1}|u_i|^p\)^\frac{1}{p}$. Thus, the term
  $\G_n(\bm u_{\widehat n})$ term is given by
  $\(1-\sum_{i=1}^{n-1}|u_i|^p\)^\frac{1-p}{p}$ which yields a
  determinant of $|\det \mathcal
  J|=r^{n-1}\(1-\sum_{i=1}^{n-1}|u_i|^p\)^\frac{1-p}{p}$. This is
  exactly the one derived by \cite{gupta:1997}.
\end{example}

\begin{example}
  Consider the introductory example
\begin{align*}
f(\x)&=\left(|x_{1}|^{p_{\emptyset}} + \left(|x_{2}|^{p_{1}}+|x_{3}|^{p_{1}}\right)^{\frac{p_{\emptyset}}{p_{1}}}\right)^{\frac{1}{p_{\emptyset}}}.
\end{align*}
Normalizing and resolving  for the last coordinate yields 
\begin{align*}
u_3 &= \( \(1- |u_1|^{p_\nod} \)^\frac{p_1}{p_\nod} -
|u_2|^{p_1}  \)^\frac{1}{p_1}
\end{align*}
and the terms $\G_2(\bm u_{\widehat 2})$ and $\G_{2,2}(\bm u_{\widehat{2,2}})$ of the determinant $|\det \mathcal
J| = r^{2} \G_{2}(\bm u_{\widehat 2})\G_{2,2}(\bm u_{\widehat{2,2}})$ are given by
\begin{align*}
\G_{2}(\bm u_{\widehat 2}) &=   \(1- |u_1|^{p_\nod} \)^\frac{p_1-p_\nod}{p_\nod}\\
\G_{2,2}(\bm u_{\widehat{2,2}}) &= \( \(1- |u_1|^{p_\nod} \)^\frac{p_1}{p_\nod} -
|u_2|^{p_1}  \)^\frac{1-p_1}{p_1}.
\end{align*}
Note the difference to Example \ref{ex:determinantEx1} where $x_3$ was
at depth one in the tree while $x_3$ is at depth two in the current
case. For that reason, the determinant of the Jacobian in Example
\ref{ex:determinantEx1} only involved one $\G$-term while it has two
$\G$-terms here.
\end{example}

\section{$L_p$-Nested Symmetric and $L_p$-Nested Uniform Distribution}
\label{sec:distribution}
In this section, we define the $L_p$-nested symmetric and the
$L_p$-nested uniform distribution and derive their partition
functions. In particular, we derive the surface area of an arbitrary
$L_p$-nested unit sphere $\mathds L_f = \{\x \in \R^n\;|\; f(\x)=1\}$ corresponding to an
$L_p$-nested function $f$. By equation (5) of \cite{fernandez:1995}
every $\nu$-spherically symmetric and hence any $L_p$-nested density
has the form
\begin{align}
\rho (\x) = \frac{\varrho(f(\x))}{f(\x)^{n-1}\mathcal S_f(1)},\label{eqn:generalNuSphericalDistribution}
\end{align}
where $\mathcal S_f$ is the surface area of $\mathds L_f$ and
$\varrho$ is a density on $\R^+$. Thus, we need to compute the surface
area of an arbitrary $L_p$-nested unit sphere to obtain the partition
function of equation (\ref{eqn:generalNuSphericalDistribution}).

\begin{proposition}[Volume and Surface of the $L_p$-nested Sphere]
  \label{pro:VolumeSurface}
  Let $f$ be an $L_p$-nested function and let $\mathcal I$ be the set
  of all multi-indices denoting the inner nodes of the tree structure
  associated with $f$.  The volume $\mathcal V_f(R)$ and the surface
  $\mathcal S_f(R)$ of the $L_p$-nested sphere with radius $R$ are
  given by
\begin{align}
\mathcal V_f(R) &= \frac{R^n 2^n}{n}\prod_{I\in \mathcal
  I}\frac{1}{p_I^{\ell_I-1}}\prod_{k=1}^{\ell_I-1}
B\left[\frac{\sum_{i=1}^{k}
    n_{I,k}}{p_I},\frac{n_{I,k+1}}{p_I}\right] \label{eq:volbeta}\\
&= \frac{R^n 2^n}{n}\prod_{I\in \mathcal
  I}\frac{\prod_{k=1}^{\ell_I}\Gamma\left[\frac{n_{I,k}}{p_I}\right]}{p_I^{\ell_I-1}\Gamma\left[\frac{n_I}{p_I}\right]}\label{eq:volgamma}
\\
\mathcal S_f(R) &= R^{n-1} 2^n\prod_{I\in \mathcal
  I}\frac{1}{p_I^{\ell_I-1}}\prod_{k=1}^{\ell_I-1}
B\left[\frac{\sum_{i=1}^{k}
    n_{I,k}}{p_I},\frac{n_{I,k+1}}{p_I}\right]\label{eq:surbeta}\\
&= R^{n-1} 2^n\prod_{I\in \mathcal
  I}\frac{\prod_{k=1}^{\ell_I}\Gamma\left[\frac{n_{I,k}}{p_I}\right]}{p_I^{\ell_I-1}\Gamma\left[\frac{n_I}{p_I}\right]}\label{eq:surgamma}
\end{align}
where $B[a,b]=\frac{\Gamma[a]\Gamma[b]}{\Gamma[a+b]}$ denotes the
$\beta$-function. 
\end{proposition}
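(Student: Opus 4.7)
The plan is to prove all four identities simultaneously by tree induction, establishing the analogous formulas for every subtree function $f_I$. First, positive homogeneity of $f_I$ gives $\mathcal V_{f_I}(R)=R^{n_I}\mathcal V_{f_I}(1)$, so $\mathcal S_{f_I}(R)=\frac{d}{dR}\mathcal V_{f_I}(R)=n_I R^{n_I-1}\mathcal V_{f_I}(1)$, and it suffices to prove the volume formula at $R=1$. The equivalence of the Gamma form and the Beta form reduces to the telescoping identity
\[
\prod_{k=1}^{\ell_I}\Gamma\!\left[\tfrac{n_{I,k}}{p_I}\right]=\Gamma\!\left[\tfrac{n_I}{p_I}\right]\prod_{k=1}^{\ell_I-1}B\!\left[\tfrac{\sum_{i=1}^{k} n_{I,i}}{p_I},\tfrac{n_{I,k+1}}{p_I}\right],
\]
which follows directly from $B[a,b]=\Gamma[a]\Gamma[b]/\Gamma[a+b]$. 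So I only need to establish the Gamma form of $\mathcal V_{f_I}(1)$ for each subtree.

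The base case is a leaf $I$, where $f_I(x)=|x|$, $n_I=1$, and the formula correctly returns $2$ (empty product over internal descendants). For the inductive step at an inner node $I$ with children $I,1,\dots,I,\ell_I$, I introduce auxiliary radii $\f_k=f_{I,k}(\x_{I,k})\ge 0$ and apply the polar-like decomposition of Section~\ref{sec:jacobian} separately to each subtree $f_{I,k}$. Because the integrand depends on $\x_{I,k}$ only through $\f_k$, integrating out the unit $f_{I,k}$-sphere measure collapses each $d\x_{I,k}$ to $\f_k^{n_{I,k}-1}\mathcal S_{f_{I,k}}(1)\,d\f_k$, yielding
\[
\mathcal V_{f_I}(R)=\prod_{k=1}^{\ell_I}\mathcal S_{f_{I,k}}(1)\cdot\int_{\ff\ge 0,\,\|\ff\|_{p_I}\le R}\prod_{k=1}^{\ell_I}\f_k^{n_{I,k}-1}\,d\ff,
\]
with each $\mathcal S_{f_{I,k}}(1)$ supplied by the inductive hypothesis.

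The residual integral is a classical Dirichlet integral: the substitution $w_k=\f_k^{p_I}$ reduces it to
\[
p_I^{-\ell_I}\,R^{n_I}\,\frac{\prod_k\Gamma[n_{I,k}/p_I]}{\Gamma[n_I/p_I+1]},
\]
and $\Gamma[n_I/p_I+1]=(n_I/p_I)\Gamma[n_I/p_I]$ reproduces exactly the node-$I$ factor in (\ref{eq:volgamma}). Unrolling the recursion, each inner node contributes its $\Gamma/p$ factor and each of the $n$ leaves contributes a factor of $2$ from the base case, producing (\ref{eq:surgamma}) and hence (\ref{eq:volgamma}). The delicate step is justifying the polar decomposition of $d\x_{I,k}$ uniformly across the tree — but this is precisely what Proposition~\ref{pro:DetJacobian} provides when applied to each $f_{I,k}$, with the two choices of sign of each leaf coordinate accounting for the factors of $2$. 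Once the decomposition is in hand, the rest is Dirichlet algebra and bookkeeping over the tree.
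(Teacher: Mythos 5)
Your proof is correct, and while it rests on the same core mechanism as the paper's — contracting subtrees into their radii from the leaves upward and evaluating a Dirichlet-type integral at each inner node — it is packaged differently in three ways worth noting. First, you run a clean structural induction on subtrees with $\mathcal S_{f_{I,k}}(1)$ as the inductive quantity, whereas the paper manipulates a single global integral $\int_{f(\x)\le R}d\x$, contracting one inner node at a time and tracking the invariant $\int_{f(\x_{\widehat{\mathcal J}},\ff_{\mathcal J})\le R}\prod_{J}\f_J^{n_J-1}$ as the set of contracted nodes grows; your version makes the bookkeeping essentially automatic (the union of the children's inner-node sets plus $I$ itself is the inner-node set of the subtree at $I$, and $\prod_k 2^{n_{I,k}}=2^{n_I}$). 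Second, at each node you evaluate a single Liouville integral over all $\ell_I$ children on the region $\{\ff\ge 0,\ \|\ff\|_{p_I}\le R\}$ via $w_k=\f_k^{p_I}$, which lands directly on the Gamma form $\prod_k\Gamma[n_{I,k}/p_I]\,/\,(p_I^{\ell_I}\Gamma[n_I/p_I+1])$ of (\ref{eq:volgamma}); the paper instead first applies the Gupta--Song coordinate change at node $I$ and then solves an $(\ell_I-1)$-dimensional Dirichlet integral over the angular variables, which produces the telescoping Beta-product form (\ref{eq:volbeta}) and leaves the radius $\f_I$ behind with the factor $\f_I^{n_I-1}$. Third, you collect the $2^n$ from the leaf base cases (sign of each coordinate) rather than from the paper's up-front restriction to the positive orthant. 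What your route buys is a shorter, more self-contained induction and a direct derivation of the Gamma form; what the paper's route buys is that its intermediate objects are exactly the layer-marginal integrals reused in Proposition \ref{pro:layermarginals} and reversed in the sampling scheme of Section \ref{sec:sampling}, so the longer bookkeeping is not wasted. The one step you flag as delicate — applying the polar decomposition to each $f_{I,k}$ separately and integrating out the angular part to get $\f_k^{n_{I,k}-1}\mathcal S_{f_{I,k}}(1)\,d\f_k$ — is indeed sound, since each $f_{I,k}$ is itself $L_p$-nested and the constraint $\|\ff\|_{p_I}\le R$ depends on $\x_{I,k}$ only through $\f_k$; just note that for leaf children ($n_{I,k}=1$) the decomposition degenerates to $x\mapsto(\mathrm{sgn}\,x,|x|)$ with $\mathcal S(1)=2$, consistent with your base case.
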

\begin{proof}
The proof can be found in the Appendix \ref{app:VolumeSurface}.
\end{proof}

Inserting the surface area in equation
\ref{eqn:generalNuSphericalDistribution}, we obtain the general form
of an $L_p$-nested symmetric distribution for any given radial density $\varrho$.

\begin{corollary}[$L_p$-nested Symmetric Distribution]
\label{cor:lpnestedSymDistr}
Let $f$ be an $L_p$-nested function and $\varrho$ a density on
$\R^+$. The corresponding $L_p$-nested symmetric distribution is given
by
\begin{align}
 \rho(\x) &= \frac{\varrho(f(\x))}{f(\x)^{n-1}\mathcal S_f(1)}\nonumber\\
 &=  \frac{\varrho(f(\x))}{2^n f(\x)^{n-1}}\prod_{I\in \mathcal
  I}p_I^{\ell_I-1}\prod_{k=1}^{\ell_I-1}
B\left[\frac{\sum_{i=1}^{k}
    n_{I,k}}{p_I},\frac{n_{I,k+1}}{p_I}\right]^{-1} \label{eq:generalLpNested}.
\end{align}
\end{corollary}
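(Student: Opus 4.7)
The plan is to treat this corollary as a direct substitution, since all of the nontrivial work already sits inside Proposition \ref{pro:VolumeSurface} and inside the general representation of $\nu$-spherical densities given by \cite{fernandez:1995}. The starting point is equation (\ref{eqn:generalNuSphericalDistribution}), which asserts that any density whose iso-contours are level sets of a positively homogeneous function $\nu$ of degree one has the form $\varrho(\nu(\x))/(\nu(\x)^{n-1}\mathcal S_\nu(1))$. Since any $L_p$-nested function $f$ is positively homogeneous of degree one (the root node is an $L_{p_\emptyset}$-norm of positively homogeneous arguments), $L_p$-nested symmetric distributions are a subclass of the $\nu$-spherical class, and this representation applies verbatim with $\nu = f$.

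The second and only substantive step is to plug in the closed-form expression for the unit surface area obtained in Proposition \ref{pro:VolumeSurface}. Specializing equation (\ref{eq:surbeta}) at $R=1$ gives
\begin{align*}
\mathcal S_f(1) &= 2^n\prod_{I\in \mathcal I}\frac{1}{p_I^{\ell_I-1}}\prod_{k=1}^{\ell_I-1} B\!\left[\tfrac{\sum_{i=1}^{k} n_{I,k}}{p_I},\tfrac{n_{I,k+1}}{p_I}\right].
\end{align*}
Inverting this product flips the $2^n$ into the denominator, turns each $p_I^{\ell_I-1}$ into a numerator factor, and raises each beta function to $-1$, which is exactly the form claimed in (\ref{eq:generalLpNested}). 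The factor $f(\x)^{n-1}$ in the denominator is inherited unchanged from (\ref{eqn:generalNuSphericalDistribution}).

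There is no real obstacle; the only sanity check worth making explicit is the compatibility of the $R$-scaling in Proposition \ref{pro:VolumeSurface} with the Fernández form, namely that $\mathcal S_f(R) = R^{n-1}\mathcal S_f(1)$, so that writing the normaliser as $f(\x)^{n-1}\mathcal S_f(1)$ is consistent with integrating the radial factor on $\R^+$ against $\varrho$. With that remark the corollary follows by direct algebraic rearrangement, and the proof can be stated in one or two lines.
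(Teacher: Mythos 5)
Your proposal is correct and is exactly the route the paper takes: the paper offers no separate proof for this corollary beyond the remark that one inserts the surface area from Proposition \ref{pro:VolumeSurface} into equation (\ref{eqn:generalNuSphericalDistribution}), which is precisely your substitution argument. Your added sanity check that $\mathcal S_f(R)=R^{n-1}\mathcal S_f(1)$ is a reasonable (and correct) extra remark but does not change the approach.
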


The results of \cite{fernandez:1995} imply that for any
$\nu$-spherically symmetric distribution, the radial part is
independent of the directional part, i.e. $r$ is independent of
$\u$. The distribution of $\u$ is entirely determined by the choice of
$\nu$, or by the $L_p$-nested function $f$ in our case. The
distribution of $r$ is determined by the radial density
$\varrho$. Together, an $L_p$-nested symmetric distribution is
determined by both, the $L_p$-nested function $f$ and the choice
of $\varrho$. 
From equation (\ref{eq:generalLpNested}), we can see that its density
function must be the inverse of the surface area of $\mathds L_f$ times
the radial density when transforming
(\ref{eqn:generalNuSphericalDistribution}) into the coordinates of
Definition \ref{def:coordinates} and separating $r$ and $\u$ (the
factor $f(\x)^{n-1}=r$ cancels due to the determinant of the
Jacobian). For that reason we call the distribution of $\u$ {\em
  uniform on the $L_p$-sphere $\mathds L_f$} in analogy to
\cite{song:1997}. Next, we state its form in terms of the coordinates
$\u$.

\begin{proposition}[$L_p$-nested Uniform Distribution]
\label{pro:LpNestedUniform}
Let $f$ be an $L_p$-nested function. Let $\mathcal L$ be set set of
multi-indices on the path from the root node to the leaf corresponding
to $x_n$. The uniform distribution on the $L_p$-nested unit sphere,
i.e. the set $\mathds L_f = \{\x\in\R^n|f(\x)=1\}$ is given by the
following density over $u_1,...,u_{n-1}$
\begin{align*}
\rho(u_1,,...,u_{n-1}) &=  \frac{\prod_{L\in\mathcal L}\G_L(\bm u_{\widehat L})}{2^{n-1}}\prod_{I\in \mathcal
  I} p_I^{\ell_I-1} \prod_{k=1}^{\ell_I-1}
B\left[\frac{\sum_{i=1}^{k}
    n_{I,k}}{p_I},\frac{n_{I,k+1}}{p_I}\right]^{-1} 
\end{align*}
\end{proposition}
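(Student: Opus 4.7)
The plan is to derive the marginal density of the directional coordinates $u_1,\ldots,u_{n-1}$ by pushing forward an arbitrary $L_p$-nested symmetric density through the polar-like coordinate transformation of Definition \ref{def:coordinates} and reading off the factor that depends only on $\u$.

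First I would start from the general form given in Corollary \ref{cor:lpnestedSymDistr}, namely
\[
\rho(\x) \;=\; \frac{\varrho(f(\x))}{f(\x)^{n-1}\,\mathcal S_f(1)},
\]
for an arbitrary radial density $\varrho$ on $\R^+$. Next, I would apply the transformation $\x\mapsto(r,u_1,\ldots,u_{n-1})$ and pull in the Jacobian determinant from Proposition \ref{pro:DetJacobian}, $|\det\mathcal J| = r^{n-1}\prod_{L\in\mathcal L}\G_L(\bm u_{\widehat L})$. A key bookkeeping point is that this map from $\R^n$ to $\R^+\times\R^{n-1}$ is two-to-one: the pair $(r,\u)$ determines $x_1,\ldots,x_{n-1}$ uniquely but only $|x_n|$, so the two preimages corresponding to $\Delta_n=\pm 1$ must be summed. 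Since $f$ depends on $x_n$ only through $|x_n|$, the density is invariant under the sign flip and the two contributions coincide, producing an overall factor of $2$.

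Combining these three ingredients, the joint density of $(r,\u)$ becomes
\[
\rho(r,\u) \;=\; 2\cdot\frac{\varrho(r)}{r^{n-1}\,\mathcal S_f(1)}\cdot r^{n-1}\prod_{L\in\mathcal L}\G_L(\bm u_{\widehat L})
\;=\; \varrho(r)\cdot\frac{2\prod_{L\in\mathcal L}\G_L(\bm u_{\widehat L})}{\mathcal S_f(1)}.
\]
Observe that the powers of $r$ coming from the Jacobian and from $f(\x)^{n-1}$ in the denominator cancel exactly; this is the crux of the calculation, and it is precisely the reason why the directional factor is independent of the choice of $\varrho$. The density then separates as a product of a function of $r$ and a function of $\u$, so the marginal density of $\u$ is read off as the second factor (with $\int_0^\infty\varrho(r)\,dr=1$). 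This independence of $r$ and $\u$ is also the content of the Fernandez--Osiewalski--Steel representation for $\nu$-spherical laws, so identifying the $\u$-factor as the uniform distribution on $\mathds L_f$ is justified.

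Finally, I would substitute the explicit surface area from Proposition \ref{pro:VolumeSurface},
\[
\mathcal S_f(1)=2^n\prod_{I\in\mathcal I}\frac{1}{p_I^{\ell_I-1}}\prod_{k=1}^{\ell_I-1}B\!\left[\tfrac{\sum_{i=1}^{k}n_{I,k}}{p_I},\tfrac{n_{I,k+1}}{p_I}\right],
\]
so that the $2^n$ combines with the leading $2$ from the two-to-one map to yield $2^{n-1}$ in the denominator, and the reciprocals of the $\beta$-function terms and the $p_I^{\ell_I-1}$ factors move to the numerator. The resulting expression matches the formula in the statement. The main obstacle I anticipate is the careful accounting of the factor of $2$: one must not double-count signs of $x_1,\ldots,x_{n-1}$ (which are preserved by the transformation and carried directly into $u_1,\ldots,u_{n-1}\in\R$), while correctly accounting for the single lost sign of $x_n$. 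Everything else is substitution and algebraic simplification.
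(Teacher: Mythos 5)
Your proposal is correct and rests on exactly the same ingredients as the paper's proof: the Jacobian from Proposition \ref{pro:DetJacobian}, the factor of $2$ accounting for the two half-shells ($\Delta_n=\pm1$), and the surface area from Proposition \ref{pro:VolumeSurface}. The only cosmetic difference is that you push forward a general $L_p$-nested density and read off the $\u$-marginal after separating $r$ and $\u$ (which is precisely the motivating argument the paper gives in the paragraph preceding the proposition), whereas the paper's stated proof transforms $1/\mathcal S_f(1)$ directly; the computation is the same either way.
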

\begin{proof}
  Since the $L_p$-nested sphere is a measurable and compact set, the
  density of the uniform distribution is simply one over the surface
  area of the unit $L_p$-nested sphere. The surface $\mathcal S_f(1)$
  is given by Proposition \ref{pro:VolumeSurface}. Transforming
  $\frac{1}{\mathcal S_f(1)}$ into the coordinates of Definition
  \ref{def:coordinates} introduces the determinant of the Jacobian
  from Proposition \ref{pro:DetJacobian} and an additional factor of
  $2$ since the $(u_1,...,u_{n-1})\in\mathds R^{n-1}$ have to account
  for both half-shells of the $L_p$-nested unit sphere, i.e. to
  account for the fact that $u_n$ could have been be positive or
  negative. This yields the expression above.
\end{proof}

\begin{example}
Let us again demonstrate the proposition at the special case where $f$
is an $L_p$-norm $f(\x) = ||\x||_p = \(\sum_{i=1}^n
|x_i|^p\)^\frac{1}{p}$.  Using Proposition \ref{pro:VolumeSurface},
the surface area is given by
\begin{align*}
\mathcal S_{||\cdot||_p} &= 2^n\frac{1}{p_\nod^{\ell_\nod-1}}\prod_{k=1}^{\ell_\nod-1}
B\left[\frac{\sum_{i=1}^{k}
    n_{k}}{p_\nod},\frac{n_{k+1}}{p_\nod}\right]
= \frac{ 2^n\Gamma^n\left[\frac{1}{p}\right]}{p^{n-1}\Gamma\left[\frac{n}{p}\right]}.
\end{align*}
The factor $\G_n(\bm u_{\widehat n})$ is given by $\(1-\sum_{i=1}^{n-1}
|u_i|^p\)^\frac{1-p}{p}$ (see the $L_p$-norm example before), which,
after including the factor $2$, yields the uniform distribution on the
$L_p$-sphere as defined in \cite{song:1997}
$$p(\u)= \frac{p^{n-1}\Gamma\left[\frac{n}{p}\right]}{2^{n-1}\Gamma^n\left[\frac{1}{p}\right]} \(1-\sum_{i=1}^{n-1}
|u_i|^p\)^\frac{1-p}{p}.$$
\end{example}

\begin{example}
  \label{ex:uniformDistributionUnitBall}
  As a second illustrative example, we consider the uniform density on
  the $L_p$-nested unit ball, i.e. the set $\{\x\in\R^n|\:f(\x)\le
  1\}$, and derive its radial distribution $\varrho$. The density of
  the uniform distribution on the unit $L_p$-nested ball does not
  depend on $\x$ and is given by $\rho(\x)=1/\mathcal
  V_f(1)$. Transforming the density into the polar-like coordinates
  with the determinant from Proposition \ref{pro:DetJacobian} yields
\begin{align*}
\frac{1}{\mathcal V_f(1)} &= \frac{nr^{n-1} \prod_{L\in\mathcal L}\G_L(\bm u_{\widehat L})}{2^{n-1}}\prod_{I\in \mathcal
  I} p_I^{\ell_I-1} \prod_{k=1}^{\ell_I-1}
B\left[\frac{\sum_{i=1}^{k}
    n_{I,k}}{p_I},\frac{n_{I,k+1}}{p_I}\right]^{-1}.
\end{align*}
After separating out the uniform distribution on the $L_p$-nested unit
sphere, we obtain the radial distribution
\begin{align*}
\varrho(r) &= nr^{n-1} \mbox{ for  } 0<r\le 1
\end{align*} which is a $\beta$-distribution with parameters $n$ and
$1$. 
\end{example}

The radial distribution from the preceding example is of great
importance for our sampling scheme derived in Section
\ref{sec:sampling}. The idea behind it is the following: First, a
sample from an ``simple'' $L_p$-nested distribution is drawn. Since
the radial and the uniform component on the $L_p$-nested unit sphere
are statistically independent, we can get a sample from the uniform
distribution on the $L_p$-nested unit sphere by simply normalizing the
sample from the simple distribution. Afterwards we can multiply it
with a radius drawn from the radial distribution of the $L_p$-nested
distribution that we actually want to sample from. The role of the
simple distribution will be played by the uniform distribution within
the $L_p$-nested unit ball. Sampling from it is basically done by
applying the steps in proof of Proposition \ref{pro:VolumeSurface}
backwards. We lay out the sampling scheme in more detail in Section
\ref{sec:sampling}.

\section{Marginals}
\label{sec:marginals}

In this section we discuss two types of marginals: First, we
demonstrate that, in contrast to $L_p$-spherically symmetric
distributions, marginals of $L_p$-nested distributions are not
necessarily $L_p$-nested again. The second type of marginals we
discuss are obtained by collapsing all leaves of a subtree into the
value of the subtree's root node. For that case we derive an
analytical expression and show that the values of the root node's
children follow a special kind of Dirichlet distribution.

\cite{gupta:1997} show that marginals of $L_p$-spherically symmetric
distributions are again $L_p$-spherically symmetric. This does not
hold, however, for $L_p$-nested symmetric distributions. This can be
shown by a simple counterexample. Consider the $L_p$-nested function
 $$f(\x)=\left(\left(|x_{1}|^{p_{1}}+|x_{2}|^{p_{1}}\right)^{\frac{p_{\emptyset}}{p_{1}}}+|x_{3}|^{p_{\emptyset}}\right)^{\frac{1}{p_{\emptyset}}}.$$ 
 The uniform distribution inside the $L_p$-nested ball corresponding
 to $f$ is given by
$$\rho(\x) =
\frac{np_1p_\nod\Gamma\left[\frac{2}{p_1}\right]\Gamma\left[\frac{3}{p_\nod}\right]}{2^3\Gamma^2\left[\frac{1}{p_1}\right]
\Gamma\left[\frac{2}{p_0}\right]\Gamma\left[\frac{1}{p_0}\right]}.$$
The marginal $\rho(x_1,x_3)$ is given by
\begin{align*}
\rho(x_1,x_3) &= \frac{np_1p_\nod\Gamma\left[\frac{2}{p_1}\right]\Gamma\left[\frac{3}{p_\nod}\right]}{2^3\Gamma^2\left[\frac{1}{p_1}\right]
\Gamma\left[\frac{2}{p_0}\right]\Gamma\left[\frac{1}{p_0}\right]} \(\(1-|x_3|^{p_\nod}\)^{\frac{p_1}{p_\nod}}-|x_1|^{p_1}\)^\frac{1}{p_1}.
\end{align*}
This marginal is $L_p$-spherically symmetric. Since any $L_p$-nested
distribution in two dimensions must be $L_p$-spherically symmetric it
cannot be $L_p$-nested symmetric as well. Figure
\ref{fig:marginalExample} shows a scatter plot of the marginal
distribution. Besides the fact that the marginals are not contained in
the family of $L_p$-nested distributions, it is also hard to derive a
general form for them. This is not surprising given that the general
form of marginals for $L_p$-spherically symmetric distributions
involves an integral that cannot be solved analytically in general and
is therefore not very useful in practice \citep{gupta:1997}. For that
reason we cannot expect marginals of $L_p$-nested symmetric
distributions to have a simple form.

\begin{figure}[!ht]
  \begin{center}
    \includegraphics[width=15cm]{./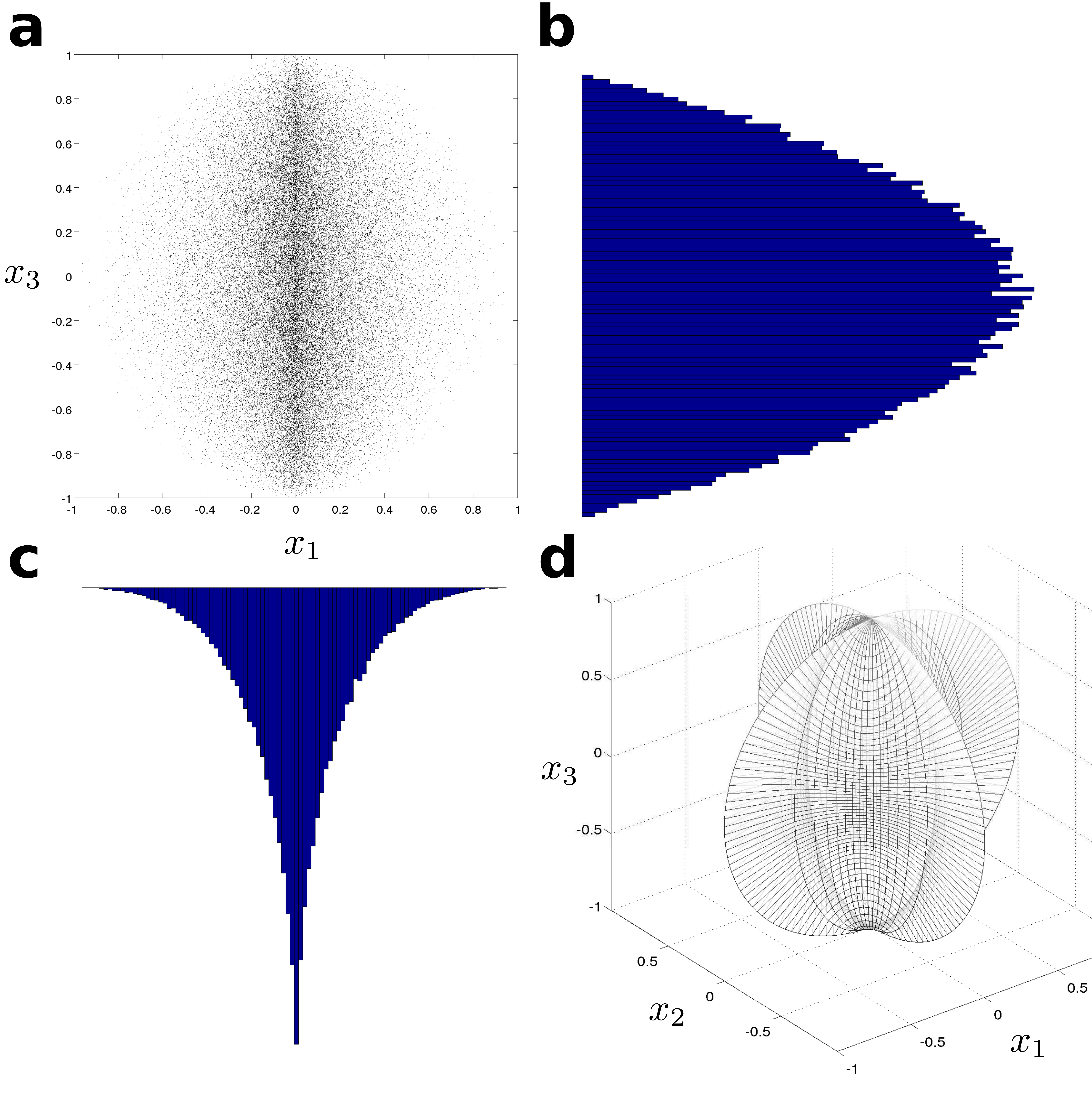}
  \end{center}
  \caption{\label{fig:marginalExample} Marginals of $L_p$-nested
    symmetric distributions are not necessarily $L_p$-nested
    symmetric: Figure ({\bf a}) shows a scatter plot of the
    $(x_1,x_2)$-marginal of the counterexample in the text with
    $p_\nod=2$ and $p_1=\frac{1}{2}$.  Figure ({\bf d}) displays the
    corresponding $L_p$-nested sphere.  ({\bf b-c}) show the
    univariate marginals for the scatter plot. Since any
    two-dimensional $L_p$-nested distribution must be $L_p$-spherical,
    the marginals should be identical. This is clearly not the
    case. Thus, ({\bf a}) is not $L_p$-nested symmetric.}
\end{figure}

In contrast to single marginals, it is possible to specify the joint
distribution of leaves and inner nodes of an $L_p$-nested tree if all
descendants of their inner nodes in question have been integrated
out. For the simple function above (the same that has been used in
Example \ref{ex:determinantEx1}), the joint distribution of $x_3$
and $\f_1=\|(x_1,x_2)^\top\|_{p_1}$ would be an example of such a
marginal. Since marginalization affects the $L_p$-nested tree
vertically, we call this type of marginals {\em layer marginals}. In
the following, we present their general form.

From the form of a general $L_p$-nested function and the corresponding
symmetric distribution, one might think that the layer marginals are
$L_p$-nested again. However, this is not the case since the
distribution over the $L_p$-nested unit sphere would deviate from the
uniform distribution in most cases if the distribution of its children
was $L_p$-spherically symmetric.

\begin{proposition}
  \label{pro:layermarginals}
  Let $f$ be an $L_p$-nested function. Suppose we integrate out
  complete subtrees from the tree associated with $f$, that is we
  transform subtrees into radial times uniform variables and integrate
  out the latter. Let $\mathcal J$ be the set of multi-indices of
  those nodes that have become new leaves, i.e. whose subtrees have
  been removed, and let $n_J$ be the number of leaves (in the original
  tree) in the subtree under the node $J$. Let $\x_{\widehat{\mathcal
      J}}\in\R^m$ denote those coefficients of $\x$ that are still
  part of that smaller tree and let $\ff_{\mathcal J}$ denote the
  vector of inner nodes that became new leaves. The joint distribution
  of $\x_{\widehat{ \mathcal J}}$ and $\ff_{\mathcal J}$ is given by
  \begin{align}
  \rho( \x_{\widehat{\mathcal J}},\ff_{\mathcal J}) &= \frac{\varrho(f(\x_{\widehat{\mathcal J}},\ff_{\mathcal J}))}{S_f(f(\x_{\widehat{\mathcal J}},\ff_{\mathcal J}))}\prod_{J\in\mathcal J}\f_{J}^{n_J-1}\label{eqn:layermarginal}.
  \end{align}
  
\end{proposition}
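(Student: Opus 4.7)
The plan is to perform a polar-like change of variables on each collapsed subtree $\x_J$ with $J \in \mathcal J$ and then integrate out the angular parts. Starting from $\rho(\x) = \varrho(f(\x))/(f(\x)^{n-1}\mathcal S_f(1))$ (Corollary \ref{cor:lpnestedSymDistr}), the key structural observation is that the density sees $\x$ only through $f(\x)$, and $f$ sees each subtree $\x_J$ only via its subtree value $f_J(\x_J) = \f_J$. Consequently, once the substitution is made, the integrand is constant along each angular fiber and depends only on the $\f_J$'s and on the untouched leaves $\x_{\widehat{\mathcal J}}$.

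For each $J \in \mathcal J$, I would invoke Proposition \ref{pro:DetJacobian} applied locally to $f_J$: the change $\x_J \leftrightarrow (r_J,\u_J,\Delta_{n_J})$ of Definition \ref{def:coordinates} carries the Jacobian $r_J^{n_J-1}\prod_{L\in\mathcal L_J}\G_L(\u_{J,\widehat L})$, where $\mathcal L_J$ is the path from $J$ to the last leaf of its subtree. Because the subtrees occupy disjoint coordinate blocks, these Jacobians multiply cleanly and the variables $\x_{\widehat{\mathcal J}}$ pass through unaffected. The radial factor $r_J^{n_J-1}$ from each subtree comes out of the angular integral immediately and, after substituting $r_J = \f_J$, reproduces the product $\prod_{J\in\mathcal J}\f_J^{n_J-1}$ appearing in the statement.

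All that remains is to integrate the angular factor $\prod_{L\in\mathcal L_J}\G_L$ over its $(n_J-1)$-dimensional coordinate patch and sum over $\Delta_{n_J}=\pm 1$. By Proposition \ref{pro:LpNestedUniform} — equivalently by the computation behind Proposition \ref{pro:VolumeSurface} — this integral is precisely the total surface area of $\mathds L_{f_J}$. Combining all contributions, using the homogeneity $\mathcal S_f(R)=R^{n-1}\mathcal S_f(1)$ to rewrite $f^{n-1}\mathcal S_f(1)$ as $\mathcal S_f(f)$, and regrouping the surface-area constants yields the density in \eqref{eqn:layermarginal}. There is no deep obstacle; the only work is bookkeeping — tracking which factor belongs to which subtree and verifying that the per-subtree surface areas compose correctly with the outer $\mathcal S_f(1)$, which follows from the nested product structure of Proposition \ref{pro:VolumeSurface}.
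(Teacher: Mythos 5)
Your argument is correct, but it is organized differently from the paper's. You perform one global block-diagonal change of variables --- the full polar-like coordinates of Definition \ref{def:coordinates} applied simultaneously to each collapsed subtree $f_J$ --- and then invoke Proposition \ref{pro:DetJacobian} for the Jacobian $\f_J^{n_J-1}\prod_{L\in\mathcal L_J}\G_L$ and Propositions \ref{pro:LpNestedUniform}/\ref{pro:VolumeSurface} to evaluate each angular integral (with the sign sum) as $\mathcal S_{f_J}(1)$. The paper instead proceeds inductively from the bottom of the tree, contracting one innermost node at a time with the $L_p$-spherical (Gupta--Song) coordinates on that node's children, and tracks by induction how the exponents accumulate to $\f_J^{n_J-1}$; it explicitly notes that the calculation parallels the proof of Proposition \ref{pro:VolumeSurface}. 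Your route buys economy: it reuses the already-proved determinant formula instead of re-deriving the contraction step, and the factorization of the integrand over the $\u_J$'s is immediate because $f$ depends on $\x_J$ only through $\f_J$. The paper's route buys transparency on one point that you compress into ``regrouping the surface-area constants'': the denominator $S_f(f(\x_{\widehat{\mathcal J}},\ff_{\mathcal J}))$ in \eqref{eqn:layermarginal} must be read as the surface area of the \emph{reduced} tree (with the original leaf counts $n_{I,k}$ retained in the Beta factors), i.e. $\mathcal S_f(f(\x))/\prod_{J\in\mathcal J}\mathcal S_{f_J}(1)$, which is what the step-by-step contraction produces automatically and what Corollary \ref{cor:layermarginals} confirms. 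Your proof needs the identity $\mathcal S_f(1)=\mathcal S_{\tilde f}(1)\prod_{J\in\mathcal J}\mathcal S_{f_J}(1)$, which you correctly attribute to the product-over-inner-nodes structure of Proposition \ref{pro:VolumeSurface}; stating that identity explicitly would close the only gap between your final expression and the form of \eqref{eqn:layermarginal}.
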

\begin{proof}The proof can be found in the Appendix \ref{app:layermarginals}.\end{proof}

Equation (\ref{eqn:layermarginal}) has an interesting special case
when considering the joint distribution of the root node's children. 

\begin{corollary}
\label{cor:layermarginals}
  The children of the root node
  $\ff_{1:\ell_\nod}=(\f_1,...,\f_{\ell_\nod})^\top$ follow the distribution
  
  \begin{align*}
    \rho(\ff_{1:\ell_\nod}) &=   \frac{p_\nod^{\ell_\nod -1} \Gamma\left[\frac{n}{p_\nod}\right]}
    {f(\f_1,...,\f_{\ell_\nod})^{n -1} 2^m \prod_{k=1}^{\ell_\nod}\Gamma\left[\frac{n_{k}}{p_\nod}\right]}
    \varrho\(f(\f_1,...,\f_{\ell_\nod})\)
    \prod_{i=1}^{\ell_\nod}\f_i^{n_i-1}
  \end{align*}
  where $m\le \ell_\nod$ is the number of leaves directly attached to
  the root node. In particular, $\ff_{1:\ell_\nod}$ can be written as
  the product $RU$, where $R$ is the $L_p$-nested radius and the
  single $|U_i|^{p_\nod}$ are Dirichlet distributed,
  i.e. $(|U_1|^{p_\nod},...,|U_{\ell_\nod}|^{p_\nod})\sim
  \text{Dir}\left[\frac{n_{1}}{p_\nod},...,\frac{n_{\ell_\nod}}{p_\nod}\right]$.
\end{corollary}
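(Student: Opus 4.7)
The corollary specializes Proposition \ref{pro:layermarginals} to $\mathcal J = \{$children of the root that are inner nodes$\}$. With this choice the collapsed tree is the root with $\ell_\nod$ effective leaves: the $m$ original leaf children and the $\ell_\nod-m$ collapsed subtrees. Because $\f_i^{n_i-1}=1$ whenever $n_i=1$, the product $\prod_{J\in\mathcal J}\f_J^{n_J-1}$ in (\ref{eqn:layermarginal}) may be extended to all $\ell_\nod$ root children without changing its value, yielding the factor $\prod_{i=1}^{\ell_\nod}\f_i^{n_i-1}$ of the statement.

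To obtain the explicit constant I substitute the Gamma form (\ref{eq:surgamma}) of $\mathcal S_f$. In the derivation of Proposition \ref{pro:layermarginals}, each inner-node child $J\in\mathcal J$ contributes a sub-sphere surface area $\mathcal S_{f_J}(1)$ when its direction $\u_J$ is integrated out, and the ratio $\prod_{J}\mathcal S_{f_J}(1)/\mathcal S_f(1)$ controls the surviving constant. Using Proposition \ref{pro:VolumeSurface} one checks that the factors indexed by inner nodes strictly below the root cancel between numerator and denominator, and that the global sign factor $2^n$ collapses to $2^m$ since $\sum_{J\in\mathcal J} n_J = n-m$. What remains is precisely the root's contribution $p_\nod^{\ell_\nod-1}\Gamma[n/p_\nod]\big/\bigl(2^m\prod_k\Gamma[n_k/p_\nod]\bigr)$, which is the prefactor stated in the corollary. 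Keeping careful track of this cancellation is the main piece of bookkeeping.

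For the representation $\ff_{1:\ell_\nod}=RU$, set $R=f(\ff_{1:\ell_\nod})$ and $U_i=\f_i/R$, so positive homogeneity forces $\sum_i|U_i|^{p_\nod}=1$. The reduced tree is a single $L_{p_\nod}$-norm in $\ell_\nod$ variables, hence Definition \ref{def:coordinates} and Proposition \ref{pro:DetJacobian} give a Jacobian equal to $R^{\ell_\nod-1}|U_{\ell_\nod}|^{1-p_\nod}$. Substituting into the joint density obtained in the previous step makes all $R$-powers cancel: the density factors into $\varrho(R)$ times a density on $U$ proportional to $\prod_{i=1}^{\ell_\nod-1}|U_i|^{n_i-1}\cdot|U_{\ell_\nod}|^{n_{\ell_\nod}-p_\nod}$. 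This shows at once that $R$ is independent of $U$ and that its marginal is the $L_p$-nested radial density $\varrho$.

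Finally, because $|U_i|^{n_i-1}=1$ for leaf children, the density on $U$ is uniform in the $m$ leaf signs; integrating them out absorbs the factor $2^m$. Substituting $V_i=|U_i|^{p_\nod}$ (so $dU_i=p_\nod^{-1}V_i^{1/p_\nod-1}\,dV_i$) combines the exponents to give a density proportional to $\prod_{i=1}^{\ell_\nod}V_i^{n_i/p_\nod-1}$ on the simplex $\sum_i V_i=1$, which is exactly the Dirichlet density $\mathrm{Dir}[n_1/p_\nod,\dots,n_{\ell_\nod}/p_\nod]$; matching normalizing constants with step two yields the expected $\Gamma[n/p_\nod]/\prod_k\Gamma[n_k/p_\nod]$.
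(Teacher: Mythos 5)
Your proof is correct and follows essentially the same route as the paper: specialize Proposition \ref{pro:layermarginals} to the root's inner-node children, observe that the reduced tree is a single $L_{p_\nod}$-norm, pass to polar-like coordinates, and apply the pointwise power transform to obtain the Dirichlet. The paper's own proof is a two-line version of this; your additional bookkeeping (the cancellation of sub-root surface-area factors giving the $2^m$ and the root-only Gamma prefactor, and the cancellation of the powers of $R$) correctly fills in details the paper leaves implicit.
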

\begin{proof}
  The joint distribution is simply the application of Proposition
  (\ref{pro:layermarginals}).  Note that
  $f(\f_1,...,\f_{\ell_\nod})=||\ff_{1:\ell_\nod}||_{p_\nod}$. Applying
  the pointwise transformation $s_i = |u_i|^{p_\nod}$ yields
  $$(|U_1|^{p_\nod},...,|U_{\ell_\nod-1}|^{p_\nod})\sim
  \text{Dir}\left[\frac{n_1}{p_\nod},...,\frac{n_{\ell_\nod}}{p_\nod}\right].$$
\end{proof}

The Corollary shows that the values $f_I(\x_I)$ at inner nodes $I$, in
particular the ones directly below the root node, deviate considerably
from $L_p$-spherical symmetry. If they were $L_p$-spherically
symmetric, the $|U_i|^p$ should follow a Dirichlet distribution with
parameters $\alpha_i=\frac{1}{p}$ as has been already shown by
\cite{song:1997}. The Corollary is a generalization of their result.

We can use the Corollary to prove an interesting fact about
$L_p$-nested symmetric distributions: The only factorial $L_p$-nested
symmetric distribution must be $L_p$-spherically symmetric.

\begin{proposition}
  \label{pro:noIndep}
  Let $\x$ be $L_p$-nested symmetric distributed with independent
  marginals. Then $\x$ is $L_p$-spherically symmetric distributed. In
  particular, $\x$ follows a  $p$-generalized Normal distribution.
\end{proposition}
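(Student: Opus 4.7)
The strategy is to reduce to the known characterization of factorial $L_{p}$-spherically symmetric distributions \citep{sinz:2009a}, which identifies them as the $p$-generalized Normal, and then invoke it. I proceed by induction on the number of leaves $n$ of the tree associated with $f$; the base case $n=2$ is precisely that $L_{p}$-spherical characterization. For $n \geq 3$ the goal is to show that every multi-level $L_{p}$-nested tree whose induced distribution has independent coordinates must in fact collapse to a single $L_{p}$-norm.

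The first step is to show that, for every root-child $I$ whose subtree contains $n_{I} \geq 2$ leaves, the marginal of $\x_{I}$ under $\rho$ is again $L_{p}$-nested symmetric with sub-function $f_{I}$. Apply Proposition \ref{pro:layermarginals} with $\mathcal J$ equal to the set of root-children other than $I$; this collapses every other root-subtree to its radial value $\f_{k}$ and produces the joint density of $\x_{I}$ together with $(\f_{k})_{k \neq I}$ in the form
\begin{equation*}
  \frac{\varrho(r)}{\mathcal{S}_{f}(r)} \prod_{k \neq I} \f_{k}^{\,n_{k}-1}, \qquad r^{p_{\nod}} \;=\; f_{I}(\x_{I})^{p_{\nod}} + \sum_{k \neq I} \f_{k}^{p_{\nod}}.
\end{equation*}
This expression depends on $\x_{I}$ only through $f_{I}(\x_{I})$, and integrating out the nuisance variables $(\f_{k})_{k \neq I}$ preserves that dependence. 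Because the coordinates of $\x_{I}$ are independent as a sub-collection of the jointly independent coordinates of $\x$, the inductive hypothesis applies: $\x_{I}$ is $q_{I}$-generalized Normal for some $q_{I}>0$, every exponent inside the subtree under $I$ equals $q_{I}$, and $f_{I}(\x_{I}) = \|\x_{I}\|_{q_{I}}$.

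It remains to tie the root exponent $p_{\nod}$ to the subtree exponents $q_{I}$ and to handle leaf root-children. The density factorizes as $\rho(\x) = \varphi(f(\x)) = \prod_{I}\psi_{I}(\x_{I})$, where the factor $\psi_{I}$ equals the one-dimensional marginal $p_{I}(x_{I})$ if $I$ is a leaf root-child and $c_{I}\exp(-\alpha_{I}\|\x_{I}\|_{q_{I}}^{q_{I}})$ if $I$ is a non-leaf root-child. For any coordinate $x_{i}$ in a non-leaf $\x_{I}$, the chain rule gives $\partial f/\partial x_{i} = f^{1-p_{\nod}}\|\x_{I}\|_{q_{I}}^{p_{\nod}-q_{I}}\,\mathrm{sgn}(x_{i})\,|x_{i}|^{q_{I}-1}$, and equating $\partial_{x_{i}}\log\rho$ computed from the two expressions for $\rho$ and cancelling $|x_{i}|^{q_{I}-1}$ yields $-\alpha_{I}q_{I} = (\log\varphi)'(f)\,f^{1-p_{\nod}}\,\|\x_{I}\|_{q_{I}}^{p_{\nod}-q_{I}}$. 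Setting every coordinate outside $\x_{I}$ to zero reduces $f$ to $\|\x_{I}\|_{q_{I}} = y$ and gives $(\log\varphi)'(y) = -\alpha_{I} q_{I} y^{q_{I}-1}$, whence $\varphi(y) \propto \exp(-\alpha_{I} y^{q_{I}})$; consistency across different $I$ forces a common $q_{I} = q$ and $\alpha_{I} = \alpha$. The analogous computation for a leaf root-child (or, when there are none, substituting $\varphi(r) \propto \exp(-\alpha r^{q})$ back into the product form of $\rho$ and comparing at two subtree-norms equal to $1$) then forces $q = p_{\nod}$. Hence every exponent in the tree equals $p_{\nod}$, $f(\x) = \|\x\|_{p_{\nod}}$, and $\rho(\x) \propto \exp(-\alpha \|\x\|_{p_{\nod}}^{p_{\nod}})$ is the $p_{\nod}$-generalized Normal, which in particular is $L_{p_{\nod}}$-spherically symmetric.

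The main obstacle is the first step: as shown in Section \ref{sec:marginals}, a general marginal of an $L_{p}$-nested symmetric distribution need not itself be $L_{p}$-nested, so one cannot blindly apply the inductive hypothesis to arbitrary marginals. What rescues the argument is that $\x_{I}$ consists precisely of all leaves under a single root-subtree, and the root combines those subtrees via a single outer $L_{p_{\nod}}$-norm; this is exactly the structural feature that lets Proposition \ref{pro:layermarginals} collapse the complement into radial nuisance variables whose integration preserves the marginal's dependence on $f_{I}(\x_{I})$ alone.
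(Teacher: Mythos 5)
Your proof is correct, but it takes a genuinely different route from the paper's. The paper argues top-down and without induction: it applies Corollary \ref{cor:layermarginals} at the root to obtain the joint density of the children's values $\ff_{1:\ell_\nod}$, which must factorize; the resulting weighted functional equation $\prod_i h_i(\f_i)=g(\|\ff_{1:\ell_\nod}\|_{p_\nod})\prod_i\f_i^{n_i-1}$ is resolved by the characterization in \cite{sinz:2009a}, pinning down the radial density, after which the density of each $\f_i$ is computed a second time one layer further down (by transforming the subtree under node $i$ to polar coordinates and integrating out the angular part) and matched against the first expression, forcing $p_i=p_\nod$ and collapsing the tree layer by layer. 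You instead induct on the number of leaves, use Proposition \ref{pro:layermarginals} in the opposite direction---to integrate out the \emph{complement} of a root-subtree, so that each root-subtree's marginal is itself $L_p$-nested symmetric with independent coordinates---and then replace the paper's density-matching step by an elementary separation-of-variables argument on $\log\rho=\log\varphi(f(\x))=\sum_I\log\psi_I(\x_I)$. What this buys is that the appeal to \cite{sinz:2009a} is localized to the base case of plain $L_p$-norms, whereas the paper needs its weighted form at every layer; the paper's version, in exchange, produces the radial density explicitly along the way. Three small points you should make explicit to close the argument: (i) the inductive hypothesis only tells you that the marginal of $\x_I$ is a strictly monotone function of $\|\x_I\|_{q_I}$, while Step 1 shows it is a function of $f_I(\x_I)$; to conclude $f_I=\|\cdot\|_{q_I}$ you need the one-line homogeneity argument that $\|\x_I\|_{q_I}=H(f_I(\x_I))$ with both sides positively homogeneous of degree one forces $H(t)=H(1)t$, and $H(1)=1$ by evaluating on a coordinate vector; (ii) differentiability of $\varphi$ on $(0,\infty)$, which follows by restricting the factorized density to a ray inside one non-leaf subtree with the other coordinates held fixed; (iii) the boundary evaluation ``set all coordinates outside $\x_I$ to zero'' can be avoided entirely, since your identity $-\alpha_Iq_I=(\log\varphi)'(f)\,f^{1-p_\nod}\|\x_I\|_{q_I}^{p_\nod-q_I}$ already forces $\|\x_I\|_{q_I}^{p_\nod-q_I}$ to be constant on level sets of $f$, on which $\|\x_I\|_{q_I}$ varies freely; this gives $q_I=p_\nod$ directly.
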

\begin{proof}
The proof can be found in the Appendix \ref{app:noIndep}.
\end{proof}

One immediate implication of Proposition \ref{pro:noIndep} is that
there is no factorial probability model corresponding to mixed norm
regularizers which have of the form $\sum_{i=1}^{k} \|\x_{I_k}\|_p^q$
where the index sets $I_k$ form a partition of the dimensions
$1,...,n$  \citep[see e.g.][]{zhao:2008,yuan:2006,kowalski:2008}. Many
machine learning algorithms are equivalent to minimizing the sum of a
regularizer $R(\bm w)$ and a loss function $L(\bm w,\x_1,...,\x_m)$
over the coefficient vector $\bm w$. If the $\exp\(-R(\bm w)\)$ and
$\exp\(-L(\bm w, \x_1,...,\x_m)\)$ correspond to normalizeable density
models, the minimizing solution of the objective function can be seen
as the Maximum A Posteriori (MAP) estimate of the posterior $p\(\bm w|
\x_1,...,\x_m\) \propto p(\bm w) \cdot p(\x_1,...,\x_m| \bm w) =
\exp\(-R(\bm w)\)\cdot \exp\(-L(\bm w, \x_1,...,\x_m)\)$. In that
sense, the regularizer naturally corresponds to the prior and the loss
function corresponds to the likelihood. Very often, regularizers are
specified as a norm over the coefficient vector $\bm w$ which in turn
correspond to certain priors. For example, in Ridge regression
\citep{hoerl:1962} the coefficients are regularized via $\|\bm
w\|_2^2$ which corresponds to a factorial zero mean Gaussian prior on
$\bm w$. The $L_1$-norm $\|\bm w\|_1$ in the LASSO estimator
\citep{tibshirani:1996}, again, is equivalent to a factorial
Laplacian prior on $\bm w$. Like in these two examples, regularizers
often correspond to a {\em factorial} prior.

Mixed norm regularizers naturally correspond to $L_p$-nested
distributions. Proposition \ref{pro:noIndep} shows that there is no
factorial prior that corresponds to such a regularizer. In particular,
it implies that the prior cannot be factorial between groups and
coefficients at the same time. This means that those regularizers
implicitly assume statistical dependencies between the coefficient
variables. Interestingly, for $q=1$ and $p=2$ the intuition behind
these regularizers is exactly that whole groups $I_k$ get switched on
at once, but the groups are sparse. The Proposition shows that this
might not only be due to sparseness but also due to statistical
dependencies between the coefficients within one group. The
$L_p$-nested distribution which implements independence between groups
will be further discussed below as a generalization of the
$p$-generalized Normal (see Section \ref{sec:isaica}). Note that the
marginals can be independent if the regularizer is of the form
$\sum_{i=1}^{k} \|\x_{I_k}\|_p^p$. However, in this case $p=q$ and the
$L_p$-nested function collapses to a simple $L_p$-norm which means
that the regularizer is not mixed norm.


\section{Estimation of and Sampling from $L_p$-Nested Symmetric
  Distributions}
\subsection{Maximum Likelihood Estimation}
\label{sec:fitting}
In this section, we describe procedures for maximum likelihood fitting
of $L_p$-nested symmetric distributions on data.  We provide a toolbox
online for fitting $L_p$-spherically symmetric and $L_p$-nested
symmetric distributions to data. The toolbox can be downloaded at
\url{http://www.kyb.tuebingen.mpg.de/bethge/code/}.

Depending on which parameters are to be estimated the complexity of
fitting an $L_p$-nested symmetric distribution varies. We start with
the simplest case and later continue with more complex
ones. Throughout this subsection, we assume that the model has the
form $p(\x) = \rho(Wx)\cdot |\det W| =
\frac{\varrho(W\x)}{f(W\x)^{n-1}\mathcal S_f(1))} \cdot |\det W |$
where $W\in \R^{n\times n}$ is a complete whitening matrix. This means
that given any whitening matrix $W_0$, the freedom in fitting $W$ is
to estimate an orthonormal matrix $Q\in SO(n)$ such that
$W=QW_0$. This is analogous to the case of elliptically contoured
distributions where the distributions can be endowed with $2$nd-order
correlations via $W$. In the following, we ignore the determinant of
$W$ since that data points can always be rescaled such that $\det W
=1$.

The simplest case is to fit the parameters of the radial distribution
when the tree structure, the values of the $p_I$ and $W$ are
fixed. Due to the special form of $L_p$-nested symmetric distributions
(\ref{eqn:generalNuSphericalDistribution}) it then suffices to carry
out maximum likelihood estimation on the radial component only, which
renders maximum likelihood estimation efficient and robust. This is
because the only remaining parameters are the parameters $\bm
\vartheta$ of the radial distribution and, therefore,
\begin{align*}
\text{argmax}_{\bm \vartheta} \log \rho(W\x|\bm \vartheta) 
&= \text{argmax}_{\bm
  \vartheta}\(-\log \mathcal S_f(f(W\x)) +\log \varrho(f(W\x)|\bm
\vartheta) \)\\
&= \text{argmax}_{\bm \vartheta}\log \varrho(f(W\x)|\bm \vartheta).
\end{align*}

In a slightly more complex case, when only the tree structure and $W$
are fixed, the values of the $p_I,\:I\in\mathcal I$ and $\bm
\vartheta$ can be jointly estimated via gradient ascent on the
log-likelihood. The gradient for a single data point $\x$ with respect
to the vector $\bm p$ that holds all $p_I$ for all $I \in\mathcal I$
is given by
\begin{align*}
\nabla_{\bm p}\log\rho(W\x)&=\frac{d}{dr}\log\varrho(f(W\x)) \cdot
\nabla_{\bm p} f(W\x)-\frac{(n-1)}{f(W\x)}\nabla_{\bm p}f(W\x)-\nabla_{\bm p}\log\mathcal{S}_{f}(1).
\end{align*}
For i.i.d. data points $\x_i$ the joint gradient is given by the sum
over the gradients for the single data points. Each of them involves
the gradient of $f$ as well as the gradient of the log-surface area of
$\mathds L_f$ with respect to $\bm p$, which can be computed via the
recursive equations
\begin{align*}
\frac{\partial}{\partial p_{J}}v_{I}&=\begin{cases}
0 & \mbox{ if }I\mbox{ is not a prefix of }J\\
v_{I}^{1-p_{I}}v_{I,k}^{p_{I}-1}\cdot\frac{\partial}{\partial
  p_{J}}v_{I,k} & \mbox{ if }I\mbox{ is a prefix of }J\\
\frac{v_{J}}{p_{J}}\left(v_{J}^{-p_{J}}\sum_{k=1}^{\ell_{J}}v_{J,k}^{p_{J}}\cdot\log v_{J,k}-\log v_{J}\right) & \mbox{ if }J=I\end{cases}
\end{align*}
and 
\begin{align*}
\frac{\partial}{\partial p_{J}}\log\mathcal{S}_{f}(1) =&
-\frac{\ell_J-1}{p_{J}} +
\sum_{k=1}^{\ell_{J}-1}\Psi\left[\frac{\sum_{i=1}^{k+1}n_{J,k}}{p_{J}}\right]\frac{\sum_{i=1}^{k+1}n_{J,k}}{p_{J}^{2}}
\\ & -\sum_{k=1}^{\ell_{J}-1} \Psi\left[\frac{\sum_{i=1}^{k}n_{J,k}}{p_{J}}\right]\frac{\sum_{i=1}^{k}n_{J,k}}{p_{J}^{2}}-\sum_{k=1}^{\ell_{J}-1}\Psi\left[\frac{n_{J,k+1}}{p_{J}}\right]\frac{n_{J,k+1}}{p_{J}^{2}},
\end{align*}
where $\Psi[t]=\frac{d}{dt}\log \Gamma[t]$ denotes the digamma
function. When performing the gradient ascent one needs to set $\bm 0$
as a lower bound for $\bm p$. Note that, in general, this optimization
might be a highly non-convex problem. 

On the next level of complexity, only the tree structure is fixed and
$W$ can be estimated along with the other parameters by joint
optimization of the log-likelihood with respect to $\bm p$, $\bm
\vartheta$ and $W$. Certainly, this optimization problem is also not
convex in general.  Usually, it is numerically more robust to whiten
the data first with some whitening matrix $W_0$ and perform a gradient
ascent on the special orthogonal group $SO(n)$ with respect to $Q$ for
optimizing $W=QW_0$. Given the gradient $\nabla_{W} \log\rho(W\x)$ of
the log-likelihood the optimization can be carried out by performing
line searches along geodesics as proposed by \cite{edelman:1999} (see
also \cite{absil:2007}) or by projecting $\nabla_{W} \log\rho(W\x)$ on
the tangent space $T_WSO(n))$ and performing a line search along
$SO(n)$ in that direction as proposed by \cite{manton:2002}.

The general form of the gradient to be used in such an optimization
scheme can be defined as
\begin{align*}
  &\nabla_{W}  \log\rho(W\x)  \\
  =& \nabla_{W} \(-(n-1)\cdot \log f(W\x) +\log \varrho(f(W\x))\)\\
  =&-\frac{(n-1)}{f(W\x)}\cdot \nabla_{\y}f\({W\x}\)\cdot \x^\top +  \frac{d \log
    \varrho(r)}{dr}\({f(W \x)}\) \cdot\nabla_{\y}f\({W\x}\)\cdot \x^\top.
\end{align*}
where the derivatives of $f$ with respect to $\y$ are defined by
recursive equations
\begin{align*}
\frac{\partial }{\partial y_i}\f_{I} &= 
\begin{cases}
0 &\text{ if } i \not\in I\\
\text{sgn}\:y_i &\text{ if } \f_{I,k} = |y_i|\\
\f_I^{1-p_I}\cdot\f_{I,k}^{p_{I}-1}\cdot \frac{\partial}{\partial
  y_i}\f_{I,k} &\text{ for }  i \in {I,k}.
\end{cases}
\end{align*}
Note, that $f$ might not be differentiable at $\y=0$. However, we can
always define a sub-derivative at zero, which is zero for $p_I\not=1$
and $[-1,1]$ for $p_I=1$. Again, the gradient for i.i.d. data points
$\x_i$ is given by the sum over the single gradients.

Finally, the question arises whether it is possible to estimate the
tree structure from data as well. So far, we were not able to come up
with an efficient algorithm to do so. A simple heuristic would be to
start with a very large tree, e.g. a full binary tree, and to
prune out inner nodes for which the parents and the children have
sufficiently similar values for their $p_I$. The intuition behind this
is that if they were exactly equal, they would cancel in the
$L_p$-nested function. This heuristic is certainly
sub-optimal. Firstly, the optimization will be time consuming since
there can be about as many $p_I$ as there are leaves in the
$L_p$-nested tree (a full binary tree on $n$ dimensions will have
$n-1$ inner nodes) and the repeated optimization after the pruning
steps. Secondly, the heuristic does not cover all possible trees on
$n$ leaves. For example, if two leaves are separated by the root node
in the original full binary tree, there is no way to prune out inner
nodes such that the path between those two nodes will not contain the
root node anymore.

The computational complexity for the estimation of all other
parameters despite the tree structure is difficult to assess in
general because they depend, for example, on the particular radial
distribution used. While the maximum likelihood estimation of a simple
log-Normal distribution only involves the computation of a mean and a
variance which are in $\mathcal O(m)$ for $m$ data points, a mixture
of log-Normal distributions already requires an EM algorithm which is
computationally more expensive. Additionally, the time it takes to
optimize the likelihood depends on the starting point as well as the
convergence rate and we neither have results about the convergence
rate nor is it possible to make problem independent statements about a
good initialization of the parameters. For this reason we only state
the computational complexity of single steps involved in the
optimization.

%
Computation of the gradient $\nabla_{\bm p}\log\rho(W\x)$ involves the
derivative of the radial distribution, the computation of the
gradients $\nabla_{\bm p}f(W\x)$ and $\nabla_{\bm p}\mathcal
S_f(1)$. Assuming that the derivative of the radial distribution can
be computed in $\mathcal O(1)$ for each single data point, the costly
steps are the other two gradients. Computing $\nabla_{\bm p}f(W\x)$
basically involves visiting each node of the tree once and performing
a constant number of operations for the local derivatives. Since every
inner node in an $L_p$-nested tree must have at least two children,
the worst case would be a full binary tree which has $2n-1$ nodes and
leaves. Therefore, the gradient can be computed in $\mathcal O(nm)$
for $m$ data points. For similar reasons, $f(W\x)$, $\nabla_{\bm
  p}\log\mathcal{S}_{f}(1)$ and the evaluation of the likelihood can
also be computed in $\mathcal O(nm)$. This means that each step in the
optimization of $\bm p$ can be done $\mathcal O(nm)$ plus the
computational costs for the line search in the gradient ascent.
When optimizing for $W=QW_0$ as well, the computational costs per step
increase to $\mathcal O(n^3 + n^2 m)$ since $m$ data points have to be
multiplied with $W$ at each iteration (requiring $\mathcal O(n^2 m)$
steps) and the line search involves projecting $Q$ back onto $SO(n)$
which requires an inverse matrix square root or a similar computation
in $\mathcal O(n^3)$.

For comparison, each step of fast ICA \citep{hyvarinen:1997a} for a
complete demixing matrix takes $\mathcal O(n^2 m)$ when using
hierarchical orthogonalization and $\mathcal O(n^2 m + n^3)$ for
symmetric orthogonalization. The same applies to fitting an ISA model
\citep{hyvarinen:2000,hyvarinen:2006,hyvarinen:2007}. A Gaussian
Scale Mixture (GSM) model does not need to estimate another orthogonal
rotation $Q$ because it belongs to the class of spherically symmetric
distributions and is, therefore, invariant under transformations from
$SO(n)$ \citep{wainwright:2000}. Therefore, fitting a GSM corresponds
to estimating the parameters of the scale distribution which is in
$\mathcal O(nm)$ in the best case but might be costlier depending on
the choice of the scale distribution.

\subsection{Sampling}
\label{sec:sampling}
In this section, we derive a sampling scheme for arbitrary
$L_p$-nested symmetric distributions which can for example be used for
solving integrals when using $L_p$-nested symmetric distributions for
Bayesian learning. Exact sampling from an arbitrary $L_p$-nested
symmetric distribution is in fact straightforward due to the following
observation: Since the radial and the uniform component are
independent, normalizing a sample from any $L_p$-nested distribution
to $f$-length one yields samples from the uniform distribution on the
$L_p$-nested unit sphere. By multiplying those uniform samples with
new samples from another radial distribution, one obtains samples from
another $L_p$-nested distribution. Therefore, for each $L_p$-nested
function $f$ a single $L_p$-nested distribution which can be easily
sampled from is enough. Sampling from all other $L_p$-nested
distributions with respect to $f$ is then straightforward due to the
method we just described. \cite{gupta:1997} sample from the
$p$-generalized Normal distribution since it has independent marginals
which makes sampling straightforward. Due to Proposition
\ref{pro:noIndep}, no such factorial $L_p$-nested distribution
exists. Therefore, a sampling scheme like for $L_p$-spherically
symmetric distributions is not applicable. Instead we choose to sample
from the uniform distribution inside the $L_p$-nested unit ball for
which we already computed the radial distribution in Example
\ref{ex:uniformDistributionUnitBall}.  The distribution has the form
$\rho(\x)=\frac{1}{\mathcal V_f(1)}$. In order to sample from that
distribution, we will first only consider the uniform distribution in
the positive quadrant of the unit $L_p$-nested ball which has the form
$\rho(\x)=\frac{2^n}{ \mathcal V_f(1)}$. Samples from the uniform
distributions inside the whole ball can be obtained by multiplying
each coordinate of a sample with independent samples from the uniform
distribution over $\{-1,1\}$.

The idea of the sampling scheme for the uniform distribution inside
the $L_p$-nested unit ball is based on the computation of the volume
of the $L_p$-nested unit ball in Proposition
\ref{pro:VolumeSurface}. The basic mechanism underlying the sampling
scheme below is to apply the steps of the proof backwards, which is
based on the following idea: The volume of the $L_p$-unit ball can be
computed by computing its volume on the positive quadrant only and
multiplying the result with $2^n$ afterwards. The key is now to not
transform the whole integral into radial and uniform coordinates at
once, but successively upwards in the tree. We will demonstrate this
through a little example which also should make the sampling scheme below
more intuitive. Consider the $L_p$-nested function
\begin{align*}
  f(\x)&=\left(|x_{1}|^{p_{\emptyset}} +
    \left(|x_{2}|^{p_{1}}+|x_{3}|^{p_{1}}\right)^{\frac{p_{\emptyset}}{p_{1}}}\right)^{\frac{1}{p_{\emptyset}}}.
\end{align*}
In order to solve the integral 
$$\int_{\{\x : f(\x)\le 1 \:\& \:\x \in \R_+^{n}\}}d\x,$$
we first transform $x_2$ and $x_3$ into radial and uniform coordinates
only. According to Proposition \ref{pro:DetJacobian} the determinant
of the mapping $(x_2,x_3)\mapsto (\f_1,\tilde{u}) =
(\|\x_{2:3}\|_{p_1},\x_{2:3}/\|\x_{2:3}\|_{p_1})$ is given by
$\f_1(1-\tilde{u}^{p_1})^\frac{1-p_1}{p_1}$. Therefore the integral transforms
into 
$$\int_{\{\x : f(\x)\le 1 \:\& \:\x \in \R_+^{n}\}}d\x = \int_{\{\f_1,x_1 :
  f(x_1,\f_1)\le 1 \:\& \:x_1,\f_1 \in \R_+\}}\int \int
\f_1(1-\tilde{u}^{p_1})^\frac{1-p_1}{p_1}dx_1 d\f_1 d\tilde{u} .$$
Now we can separate the integrals over $x_1$ and $\f_1$, and the
integral over $\tilde{u}$ since the boundary of the outer integral does only
depend on $\f_1$ and not on $\tilde{u}$:
$$\int_{\{\x : f(\x)\le 1 \:\& \:\x \in \R_+^{n}\}}d\x =\int (1-\tilde{u}^{p_1})^\frac{1-p_1}{p_1} d\tilde{u}\cdot \int_{\{\f_1,x_1 :
  f(x_1,\f_1)\le 1 \:\& \:x_1,\f_1 \in \R_+\}}\int \f_1dx_1 d\f_1 .$$
The value of the first integral is known explicitly since the integrand
equals the uniform distribution on the $\|\cdot\|_{p_1}$-unit
sphere. Therefore the value of the integral must be its normalization
constant which we can get using Proposition \ref{pro:VolumeSurface}.
$$\int (1-\tilde{u}^{p_1})^\frac{1-p_1}{p_1} d\tilde{u} =
\frac{\Gamma\left[\frac{1}{p_1}\right]^2\cdot p_1}{\Gamma\left[\frac{2}{p_1}\right]}.$$
An alternative way to arrive at this result is to use the
transformation $s=\tilde{u}^{p_1}$ and to notice that the integrand is a
Dirichlet distribution with parameters $\alpha_i=\frac{1}{p_1}$. The
normalization constant of the Dirichlet distribution and the constants
from the determinant Jacobian of the transformation yield the same
result. 

In order to compute the remaining integral, the same method can be
applied again yielding the volume of the $L_p$-nested unit ball. The
important part for the sampling scheme, however, is not the volume
itself but the fact that the intermediate results in this integration
process equal certain distributions. As shown in Example
\ref{ex:uniformDistributionUnitBall} the radial distribution of the
uniform distribution on the unit ball is $\beta \left[n,1\right]$, and
as just indicated by the example above the intermediate results can be
seen as transformed variables from a Dirichlet distribution. This fact
holds true even for more complex $L_p$-nested unit balls although the
parameters of the Dirichlet distribution can be slightly
different. Reversing the steps leads us to the following sampling
scheme. First, we sample from the $\beta$-distribution which gives us
the radius $\f_\nod$ on the root node. Then we sample from the
appropriate Dirichlet distribution and exponentiate the samples by
$\frac{1}{p_\nod}$ which transforms them into the analogs of the
variable $u$ from above. Scaling the result with the sample $\f_\nod$
yields the values of the root node's children, i.e. the analogs of
$x_1$ and $\f_1$. Those are the new radii for the levels below them
where we simply repeat this procedure with the appropriate Dirichlet
distributions and exponents. The single steps are summarized in
Algorithm \ref{alg:sampling}.

\begin{algorithm}
\begin{itemize}
   \item[\textbf{Input:}] The radial distribution $\varrho$ of an
   $L_p$-nested
   distribution $\rho$ for the $L_p$-nested function $f$.\\
   \item[\textbf{Output:}]  Sample $\x$ from $\rho$.\\
\end{itemize}
\textbf{Algorithm}

\begin{enumerate}
\item Sample $\f_\nod$ from a beta distribution $\beta \left[n,1\right]$.
\item For each inner node $I$ of the tree associated with $f$ sample
  the auxiliary variable $\bm s_I$ from a Dirichlet distribution
  $\text{Dir}\left[\frac{n_{I,1}}{p_I},...,\frac{n_{I,\ell_I}}{p_I}\right]$
  where $n_{I,k}$ are the number of leaves in the subtree under node
  $I,k$. Obtain coordinates on the $L_{p}$-nested sphere within the
  positive orthant by $ \bm s_I \mapsto \bm
  s_I^{\frac{1}{p_I}}=\twbu_I$ (the exponentiation is taken component-wise).
\item Transform these samples to Cartesian coordinates by
  $\f_I\cdot\twbu_I = \ff_{I,1:\ell_I}$ for each inner node, starting
  from the root node and descending to lower layers. The components of
  $\ff_{I,1:\ell_I}$ constitute the radii for the layer direct below
  them. If $I=\nod$, the radius had been sampled in step 1.
\item Once the two previous steps have been repeated until no inner
  node is left, we have a sample $\x$ from the uniform distribution in
  the positive quadrant. Normalize $\x$ to get a uniform sample from
  the sphere $\bm \u =\frac{\x}{f(\x)}$.
\item Sample a new radius $\tilde{\f}_\nod$ from the radial distribution of the target
  radial distribution $\varrho$ and obtain the sample via
  $\tilde{\x} = \tilde{\f}_\nod\cdot \u$.
\item Multiply each entry $x_i$ of $\tilde\x$ by an independent sample
  $z_i$ from the uniform distribution over $\{-1,1\}$.
\end{enumerate}
\caption{\label{alg:sampling} Exact sampling algorithm for
  $L_p$-nested distributions}
\end{algorithm}

The computational complexity of the sampling scheme is $\mathcal
O(n)$. Since the sampling procedure is like expanding the tree node by
node starting with the root, the number of inner nodes and leaves is
the total number of samples that have to be drawn from Dirichlet
distributions. Every node in an $L_p$-nested tree must at least have
two children. Therefore, the maximal number of inner nodes and leaves
is $2n-1$ for a full binary tree. Since sampling from a Dirichlet
distribution is also in $\mathcal O(n)$ the total computational
complexity for one sample is in $\mathcal O(n)$.

\section{Robust Bayesian Inference of the Location}
\label{sec:robust}
For $L_p$-spherically symmetric distributions with a location and a
scale parameter $p(\x|\bm\mu,\tau) = \tau^n
\rho(\|\tau(\x-\bm\mu)\|_p)$ \cite{osiewalski:1993} derived the
posterior in closed form using a prior $p(\bm \mu,\tau) = p(\mu) \cdot
c \cdot \tau^{-1}$, and showed that $p(\x,\bm\mu)$ does not depend on
the radial distribution $\varrho$, i.e. the particular type of
$L_p$-spherically symmetric distributions used for a fixed $p$. The
prior on $\tau$ corresponds to an improper Jeffrey's prior which is
used to represent lack of prior knowledge on the scale. The main
implication of their result is that Bayesian inference of the location
$\bm \mu$ under that prior on the scale does not depend on the
particular type of $L_p$-spherically symmetric distribution used for
inference. This means that under the assumption of an
$L_p$-spherically symmetric distributed variable, for a fixed $p$, one
does have to know the exact form of the distribution in order to
compute the location parameter.

It is straightforward to generalize their result to $L_p$-nested
symmetric distributions and, hence, making it applicable to a larger
class of distributions. Note that when using any $L_p$-nested
symmetric distribution, introducing a scale and a location via the
transformation $\x \mapsto \tau(\x-\bm\mu)$ introduces a factor of
$\tau^n$ in front of the distribution.

\begin{proposition}
  For fixed values $p_\nod,p_1,...$ and two independent priors
  $p(\bm\mu,\tau) = p(\bm \mu)\cdot c\tau^{-1}$ of the location $\mu$
  and the scale $\tau$ where the prior on $\tau$ is an improper
  Jeffrey's prior, the joint distribution $p(\x,\bm \mu)$ is given by
\begin{align*}
p(\x,\bm \mu) = f(\x - \bm \mu)^{-n} \cdot c \cdot \frac{1}{Z} \cdot p(\bm \mu),
\end{align*}
where $Z$ denotes the normalization constant of the $L_p$-nested
uniform distribution.
\end{proposition}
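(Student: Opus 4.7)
The plan is to compute $p(\x,\bm\mu)$ by forming the full joint $p(\x,\bm\mu,\tau)$ and marginalising out $\tau$, exploiting the scaling behaviour of $L_p$-nested symmetric densities combined with the positive homogeneity of $f$.

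First, I would write out the likelihood in the $L_p$-nested symmetric form. By Corollary \ref{cor:lpnestedSymDistr} and the change of variables $\x \mapsto \tau(\x-\bm\mu)$, which contributes the Jacobian factor $\tau^n$,
\begin{align*}
p(\x\mid\bm\mu,\tau) \;=\; \tau^n \cdot \frac{\varrho(f(\tau(\x-\bm\mu)))}{f(\tau(\x-\bm\mu))^{\,n-1}\,\mathcal S_f(1)} \;=\; \frac{\tau\,\varrho(\tau f(\x-\bm\mu))}{f(\x-\bm\mu)^{\,n-1}\,\mathcal S_f(1)},
\end{align*}
where I used positive homogeneity of degree one, $f(\tau\y)=\tau f(\y)$ for $\tau>0$, to pull $\tau$ out of $f$ and cancel powers. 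This already isolates the $\bm\mu$-dependence into $f(\x-\bm\mu)$ in a clean way.

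Next, multiplying by the prior $p(\bm\mu,\tau)=p(\bm\mu)\cdot c\,\tau^{-1}$ yields
\begin{align*}
p(\x,\bm\mu,\tau) \;=\; \frac{c\,p(\bm\mu)}{f(\x-\bm\mu)^{\,n-1}\,\mathcal S_f(1)}\;\varrho(\tau f(\x-\bm\mu)),
\end{align*}
so the improper $\tau^{-1}$ in the Jeffrey's prior exactly cancels the leading factor of $\tau$ from the likelihood, which is the key simplification. Now I marginalise over $\tau$. Substituting $s=\tau f(\x-\bm\mu)$, so $ds = f(\x-\bm\mu)\,d\tau$, gives
\begin{align*}
\int_0^{\infty}\varrho(\tau f(\x-\bm\mu))\,d\tau \;=\; \frac{1}{f(\x-\bm\mu)}\int_0^{\infty}\varrho(s)\,ds \;=\; \frac{1}{f(\x-\bm\mu)},
\end{align*}
since $\varrho$ is a probability density on $\R^+$. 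Combining gives
\begin{align*}
p(\x,\bm\mu) \;=\; f(\x-\bm\mu)^{-n}\cdot c\cdot \frac{1}{\mathcal S_f(1)}\cdot p(\bm\mu),
\end{align*}
which is the claimed identity, with $Z=\mathcal S_f(1)$ the normalising constant of the $L_p$-nested uniform distribution.

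There is no real obstacle here; the only subtlety is to recognise that positive homogeneity of $f$ lets $\tau$ be extracted identically regardless of the particular $L_p$-nested structure, and that the Jeffrey's prior is precisely what is needed to make the resulting $\tau$-integral reduce to the total mass of $\varrho$, so that the radial distribution drops out of the final expression. The result then extends the $L_p$-spherical statement of \cite{osiewalski:1993} to the entire $L_p$-nested family essentially verbatim.
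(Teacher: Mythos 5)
Your proof is correct and follows essentially the same route as the paper: form the joint $p(\x,\bm\mu,\tau)$, use positive homogeneity of $f$ to pull $\tau$ out, and integrate $\tau$ out via the substitution $s=\tau f(\x-\bm\mu)$. The only (cosmetic) difference is that you substitute the explicit form $\rho(\x)=\varrho(f(\x))/(f(\x)^{n-1}\mathcal S_f(1))$ up front so the $\tau$-integral reduces to $\int\varrho=1$, whereas the paper keeps the density generator abstract and separately derives the identity $\int s^{n-1}\rho(s)\,ds=1/Z$ from the polar-coordinate factorization; both yield $Z=\mathcal S_f(1)$.
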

\begin{proof}
Given any $L_p$-nested symmetric distribution $\rho(f(\x))$ the
transformation into the polar-like coordinates yields the following
relation 
\begin{align*}
1=\int \rho(f(\x)) d\x = \int\int \prod_{L\in\mathcal L}\G_L(\bm u_{\widehat L}) r^{n-1}  
\rho(r) dr d\u =\int \prod_{L\in\mathcal L}\G_L(\bm u_{\widehat L}) d\u \cdot \int r^{n-1}  
\rho(r) dr .
\end{align*}
Since $\prod_{L\in\mathcal L}\G_L(\bm u_{\widehat L})$ is the unnormalized uniform
distribution on the $L_p$-nested unit sphere, the integral must equal
the normalization constant that we denote with $Z$ for brevity (see
Proposition \ref{pro:LpNestedUniform} for an explicit
expression). This implies that $\rho$ has to fulfill
\begin{align*}
  \frac{1}{Z}= \int r^{n-1} \rho(r) dr .
\end{align*}
Writing down  the joint distribution of $\x,\bm \mu$ and $\tau$, and
using the substitution $s = \tau f(\x-\bm\mu)$ we obtain
\begin{align*}
  p(\x,\bm\mu) &= \int \tau^n \rho(f(\tau (\x - \bm\mu))) \cdot c
  \tau^{-1} \cdot p(\bm \mu) d\tau\\
  &= \int s^{n-1} \rho(s) \cdot c \cdot p(\bm \mu) f(\x - \bm
  \mu)^{-n} ds\\
  &= f(\x - \bm \mu)^{-n} \cdot c \cdot \frac{1}{Z} \cdot p(\bm \mu).
\end{align*}
\end{proof}

Note that this result could easily be extended to $\nu$-spherical
distributions. However, in this case the normalization constant $Z$
cannot be computed for most cases and, therefore, the posterior would
not be known explicitly.

\section{Relations to ICA,  ISA and Over-Complete Linear Models}
\label{sec:isaica}
In this section, we explain the relations among $L_p$-spherically
symmetric, $L_p$-nested symmetric, ICA and ISA models. For a general
overview see Figure \ref{fig:classdiagram2}.

The density model underlying ICA models the joint distribution of the
signal $\x$ as a linear superposition of statistically independent
hidden sources $A\y=\x$ or $\y = W\x$. If the marginals of the hidden
sources are belong to the exponential power family we obtain the
$p$-generalized Normal which is a subset of the $L_p$-spherically
symmetric class. The $p$-generalized Normal distribution $p(\y)
\propto \exp(-\tau \|\y\|_p^p)$ is a density model that is often used
in ICA algorithms for kurtotic natural signals like images and sound
by optimizing a demixing matrix $W$ w.r.t. to the model $p(\y) \propto
\exp(-\tau \|W\x\|_p^p)$
\citep{twlee:2000,zhang:2004,lewicki:2002}. It can be shown that the
$p$-generalized Normal is the only factorial model in the class of
$L_p$-spherically symmetric models \citep{sinz:2009a}, and, by
Proposition \ref{pro:noIndep}, also the only factorial $L_p$-nested
symmetric distribution.

\begin{figure}[!ht]
  \begin{center}
    \includegraphics[width=12cm]{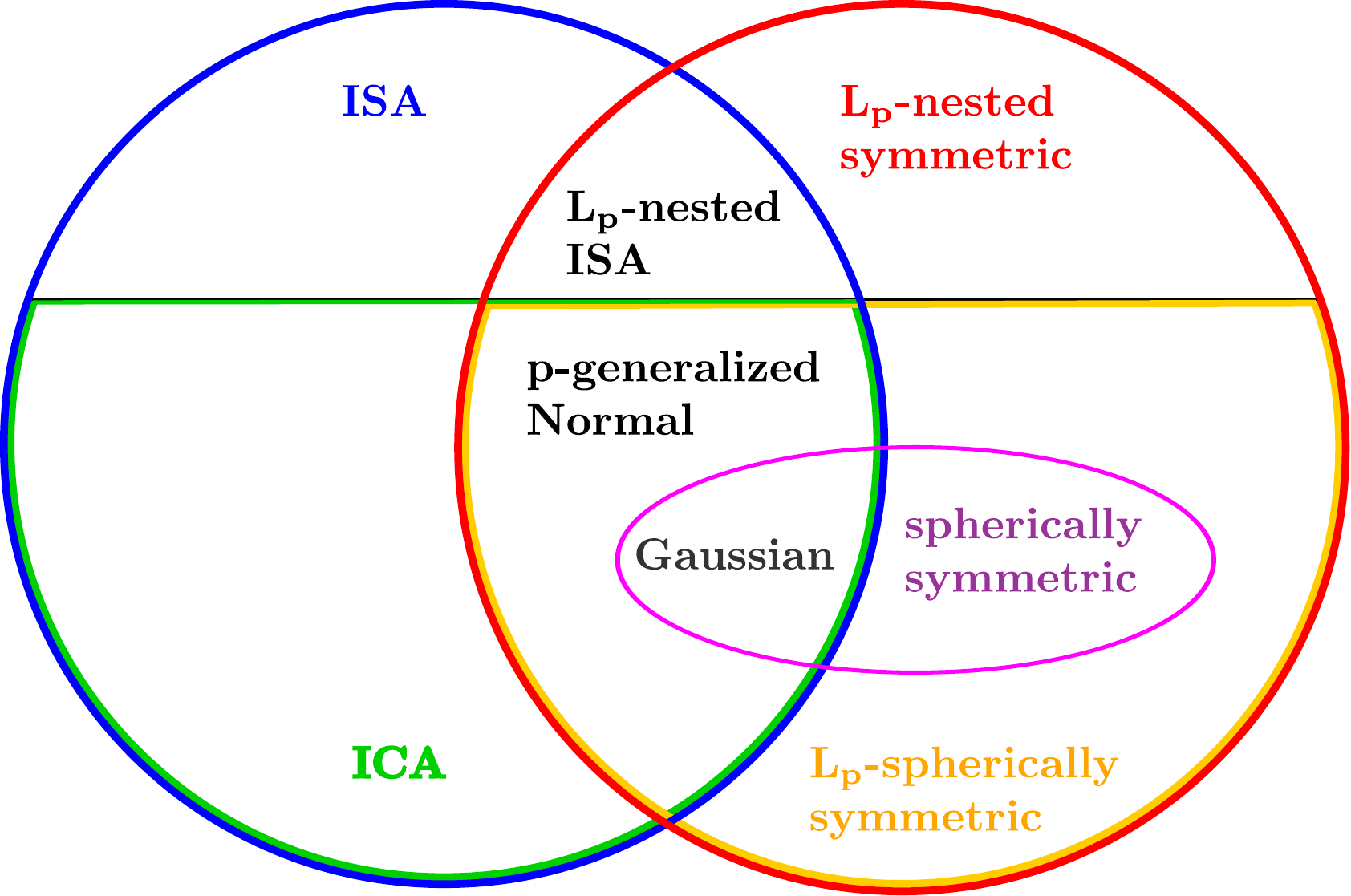}
  \end{center}
  \caption{\label{fig:classdiagram2} Relations between the different
    classes of distributions: For arbitrary distributions on the
    subspaces ISA (blue) is a superclass of ICA (green). Obviously,
    $L_p$-nested symmetric distributions (red) are a superclass of
    $L_p$-spherically symmetric distributions (yellow). $L_p$-nested
    ISA models live in the intersection of $L_p$-nested symmetric
    distributions and ISA models (intersection of red and blue). Those
    $L_p$-nested ISA models that are $L_p$-spherically symmetric are
    also ICA models (intersection of green and yellow). This is the
    class of $p$-generalized Normal distributions. If $p$ is fixed to
    two, one obtains the spherically symmetric distributions
    (pink). The only class of distributions in the intersection
    between spherically symmetric distributions and ICA models is the
    Gaussian (intersection green, yellow and pink). }
\end{figure}

An important generalization of ICA is the Independent Subspace
Analysis (ISA) proposed by \cite{hyvarinen:2000} and by
\cite{hyvarinen:2007} who used $L_p$-spherically symmetric
distributions to model the single subspaces. Like in ICA, also ISA
models the hidden sources of the signal as a product of multivariate
distributions:
$$p(\y) = \prod_{k=1}^K p(\y_{I_k}).$$ Here, $\y = W\x$
and $I_k$ are index sets selecting the different subspaces from the
responses of $W$ to $\x$. The collection of index sets $I_k$ forms a
partition of $1,...,n$. ICA is a special case of ISA in which
$I_k=\{k\}$ such that all subspaces are one-dimensional. For the ISA
models used by Hyv{\"a}rinen et al. the distribution on the subspaces
was chosen to be either spherically or $L_p$-spherically symmetric.

ICA and ISA have been used to infer features from natural signals, in
particular from natural images. However, as mentioned by several
authors \citep{zetzsche:1993,simoncelli:1997,wainwright:2000} and
demonstrated quantitatively by \cite{bethge:2006} and
\cite{eichhorn:2008}, the assumptions underlying linear ICA are not
well matched by the statistics of natural images. Although the
marginals can be well described by an exponential power family, the
joint distribution cannot be factorized with linear filters $W$.

A reliable parametric way to assess how well the independence
assumption is met by a signal at hand is to fit a more general class
of distributions that contains factorial as well as non-factorial
distributions which both can equally well reproduce the marginals. By
comparing the likelihood on held out test data between the best
fitting non-factorial and the best-fitting factorial case, one can
asses how well the sources can be described by a factorial
distribution. For natural images, for example, one can use an
arbitrary $L_p$-spherically symmetric distribution $\rho(\|W\x\|_p)$,
fit it to the whitened data and compare its likelihood on held out
test data to the one of the $p$-generalized Normal
\citep{sinz:2009}. Since any choice of radial distribution $\varrho$
determines a particular $L_p$-spherically symmetric distribution the
idea is to explore the space between factorial and non-factorial
models by using a very flexible density $\varrho$ on the radius. Note
that having an explicit expression of the normalization constant
allows for particularly reliable model comparisons via the
likelihood. For many graphical models, for instance, such an explicit
and computable expression is often not available.

\begin{figure}[!ht]
  \begin{center}
    \includegraphics[width=12cm]{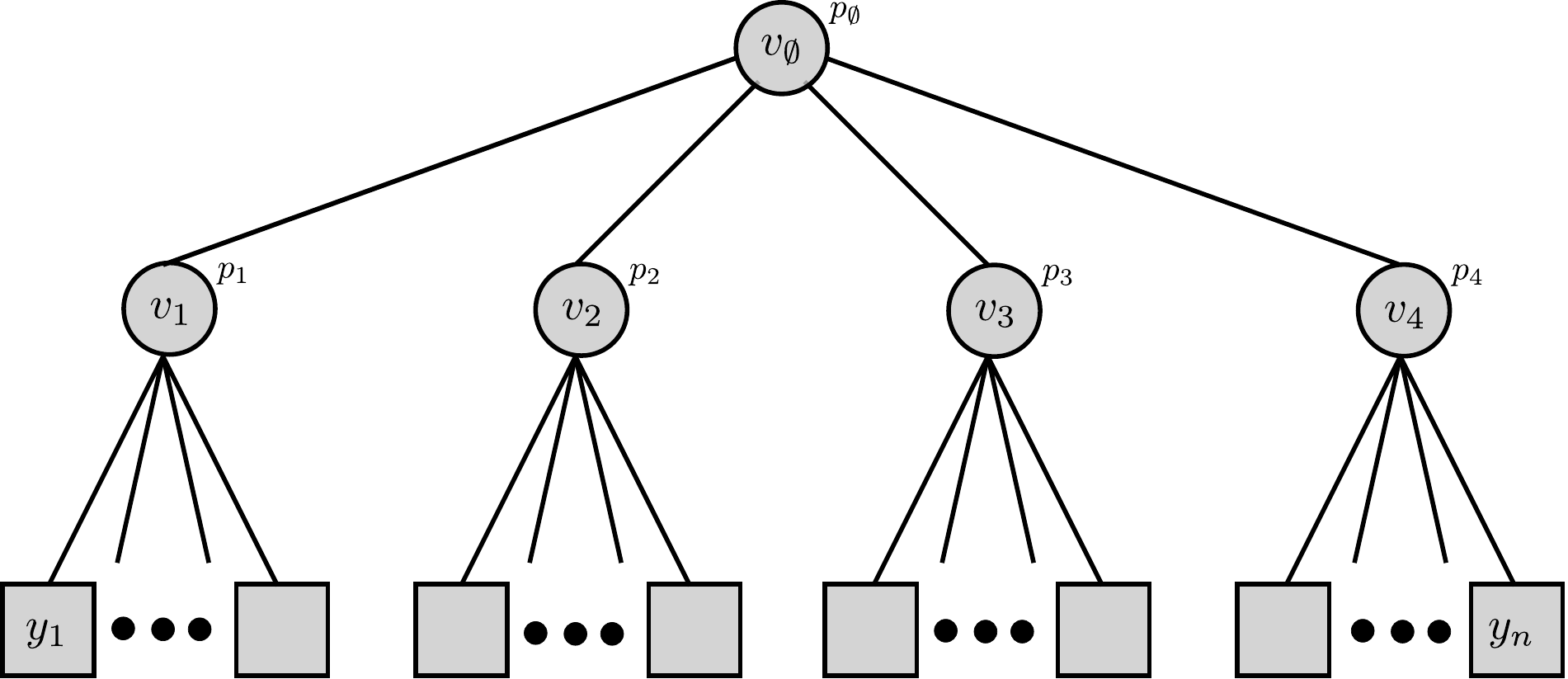}
  \end{center}
  \caption{\label{fig:ISAtree} Tree corresponding to an $L_p$-nested
    ISA model. }
\end{figure}

The same type of dependency-analysis can be carried out for ISA using
$L_p$-nested symmetric distributions \citep{sinz:2010}. Figure
\ref{fig:ISAtree} shows the $L_p$-nested tree corresponding to an ISA
with four subspaces. For general such trees, each inner node---except
the root node---corresponds to a single subspace. When using the
radial distribution
\begin{align}
\varrho_\nod(\f_\nod)= \frac{p_\nod\f_\nod^{n-1}}{\Gamma\left[\frac{n}{p_\nod}\right]s^\frac{n}{p_\nod}}\exp\(-\frac{\f_\nod^{p_\nod}}{s}\)
\label{eq:pndISAradial}
\end{align}
the subspaces $\f_1,...,\f_{\ell_\nod}$ become independent and one
obtains an ISA model of the form
\begin{align*}
  \rho(\y) &= \frac{1}{Z} \exp\(-\frac{f(\y)^{p_\nod}}{s}\)\\
  &= \frac{1}{Z} \exp\(-\frac{\sum_{k=1}^{\ell_\nod} \|\y_{I_k}\|_{p_k}}{s}\)\\
  &=
  \frac{p^{\ell_\nod}_\nod}{s^{\frac{n}{p_\nod}}\prod_{i=1}^{\ell_\nod}\Gamma\left[\frac{n_i}{p_\nod}\right]}\exp\(-\frac{\sum_{k=1}^{\ell_\nod}
    \|\y_{I_k}\|_{p_k}}{s}\)
  \prod_{k=1}^{\ell_\nod}\frac{p_k^{\ell_k-1}\Gamma\left[\frac{n_k}{p_k}\right]}{2^{n_k}\Gamma^{n_k}\left[\frac{1}{p_I}\right]},
\end{align*}
which has $L_p$-spherically symmetric distributions on each
subspace. Note that this radial distribution is equivalent to a Gamma
distribution whose variables have been raised to the power of
$\frac{1}{p_\nod}$. In the following we will denote distributions of
this type with $\gamma_p\(u,s\)$, where $u$ and $s$ are the shape and
scale parameter of the Gamma distribution, respectively. The
particular $\gamma_p$ distribution that results in independent
subspaces has arbitrary scale but shape parameter
$u=\frac{n}{p_\nod}$. When using any other radial distribution, the
different subspaces do not factorize and the distribution is also not
an ISA model. In that sense $L_p$-nested symmetric distributions are a
generalization of ISA. Note, however, that not every ISA model is also
$L_p$-nested symmetric since not every product of arbitrary
distributions on the subspaces, even if they are $L_p$-spherically
symmetric, must also be $L_p$-nested.

It is natural to ask, whether $L_p$-nested symmetric distributions can
serve as a prior distribution $p(\y|\bm \vartheta)$ over hidden
factors in over-complete linear models of the form $p(\x|W,\sigma,\bm
\vartheta) = \int p(\x|W\y,\sigma) p(\y|\bm \vartheta) d\y$, where
$p(\x|W\y)$ represents the likelihood of the observed data point $\x$
given the hidden factors $\y$ and the over-complete matrix $W$. For
example, $p(\x|W\y,\sigma) = \mathcal N(W\y,\sigma\cdot I)$ could be a
Gaussian like in \cite{olshausen:1996}. Unfortunately, such a model
would suffer from the same problems as all over-complete linear
models: While sampling from the prior is straightforward sampling from
the posterior $p(\y|\x,W,\bm \vartheta, \sigma)$ is difficult because
a whole subspace of $\y$ leads to the same $\x$. Since parameter
estimation either involves solving the high-dimensional integral
$p(\x|W,\sigma,\bm \vartheta) = \int p(\x|W\y,\sigma) p(\y|\bm
\vartheta) d\y$ or sampling from the posterior, learning is
computationally demanding in such models. Various methods have been
proposed to learn $W$, ranging from sampling the posterior only at its
maximum \citep{olshausen:1996}, approximating the posterior with a
Gaussian via the Laplace approximation \citep{lewicki:1999} or using
Expectation Propagation \citep{seeger:2008}. In particular, all of
the above studies either do not fit hyper-parameters $\bm \vartheta$
for the prior \citep{olshausen:1996,lewicki:1999} or rely on the
factorial structure of it \citep{seeger:2008}. Since $L_p$-nested
distributions do not provide such a factorial prior, Expectation
Propagation is not directly applicable. An approximation like in
\cite{lewicki:1999} might be possible, but additionally estimating the
parameters $\bm \vartheta$ of the $L_p$-nested symmetric distribution
adds another level of complexity in the estimation
procedure. Exploring such over-complete linear models with a
non-factorial prior may be an interesting direction to investigate,
but it will need a significant amount of additional numerical and
algorithmical work to find an efficient and robust estimation
procedure.

\section{Nested Radial Factorization with $L_p$-Nested Symmetric
  Distributions}
\label{sec:nonlinearICA}
$L_p$-nested symmetric distribution also give rise to a non-linear ICA
algorithm for linearly mixed non-factorial $L_p$-nested hidden sources
$\y$. The idea is similar to the Radial Factorization algorithms
proposed by \cite{lyu:2009} and \cite{sinz:2009}. For this reason, we
call it {\em Nested Radial Factorization (NRF)}. For a one layer
$L_p$-nested tree, NRF is equivalent to Radial Factorization as
described in \cite{sinz:2009}. If additionally $p$ is set to $p=2$,
one obtains the Radial Gaussianization by \cite{lyu:2009}. Therefore,
NRF is a generalization of Radial Factorization. It has been
demonstrated that Radial Factorization algorithms outperform linear
ICA on natural image patches \citep{lyu:2009,sinz:2009}. Since
$L_p$-nested symmetric distributions are slightly better in likelihood
on natural image patches \citep{sinz:2010} and since the difference
in the average log-likelihood directly corresponds to the reduction in
dependencies between the single variables \citep{sinz:2009}, NRF will
slightly outperform Radial Factorization on natural images. For other
type of data the performance will depend on how well the hidden
sources can be modeled by a linear superposition of---possibly
non-independent---$L_p$-nested symmetrically distributed sources. Here
we state the algorithm as a possible application of
$L_p$-nested symmetric distributions for unsupervised learning.

The idea is based on the observation that the choice of the radial
distribution $\varrho$ already determines the type of $L_p$-nested
symmetric distribution. This also means that by changing the radial
distribution by remapping the data, the distribution could possibly be
turned in a factorial one. Radial Factorization algorithms fit an
$L_p$-spherically symmetric distribution with a very flexible radial
distribution to the data and map this radial distribution
$\varrho_{s}$ ($s$ for \underline{s}ource) into the one of a
$p$-generalized Normal distribution by the mapping
\begin{align}
\y \mapsto \frac{(\mathcal F^{-1}_{\bot\!\!\!\bot}\circ \mathcal F_{s})(\|\y\|_p)}{\|\y\|_p}\cdot \y,\label{eq:nonlinICAMap}
\end{align}
where $\mathcal F_{\bot\!\!\!\bot}$ and $\mathcal F_{s}$ are the
cumulative distribution functions of the two radial distributions
involved. The mapping basically normalizes the demixed source $\y$ and
rescales it with a new radius that has the correct distribution.

Exactly the same method cannot work for $L_p$-nested symmetric
distributions since Proposition \ref{pro:noIndep} states that there is
no factorial distribution we could map the data to by merely changing
the radial distribution. Instead we have to remap the data in an
iterative fashion beginning with changing the radial distribution at
the root node into the radial distribution of the $L_p$-nested ISA
shown in equation (\ref{eq:pndISAradial}). Once the nodes are
independent, we repeat this procedure for each of the child nodes
independently, then for their child nodes and so on, until only leaves
are left. The rescaling of the radii is a non-linear mapping since the
transform in equation \eqref{eq:nonlinICAMap} is
non-linear. Therefore, NRF is a non-linear ICA algorithm.

\begin{figure}[!ht]
  \begin{center}
    \includegraphics[width=12cm]{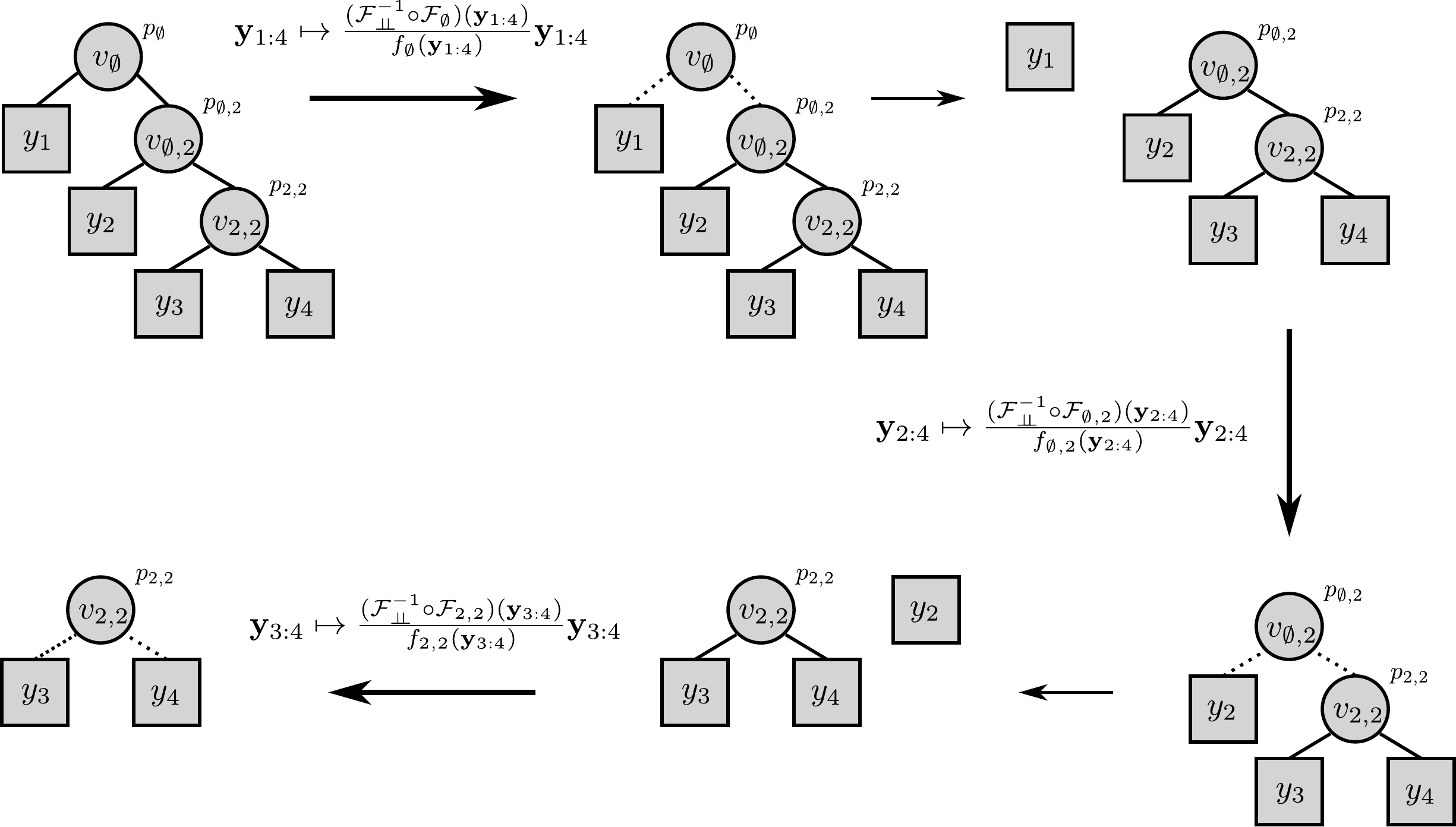}
  \end{center}
  \caption{\label{fig:nonlinearICAExample} $L_p$-nested non-linear ICA
    for the tree of Example \ref{ex:nonlinearICA}: For an arbitrary
    $L_p$-nested symmetric distribution, using equation
    (\ref{eq:nonlinICAMap}), the radial distribution can be remapped
    such that the children of the root node become independent. This
    is indicated in the plot via dotted lines. Once the data has been
    rescaled with that mapping, the children of root node can be
    separated. The remaining subtrees are again $L_p$-nested symmetric
    and have a particular radial distribution that can be remapped
    into the same one that makes their root nodes' children
    independent. This procedure is repeated until only leaves are
    left. }
\end{figure}

We demonstrate this at a simple example. 
\begin{example}
\label{ex:nonlinearICA}
Consider the function
\begin{align*}
f(\y)&=\left(|y_{1}|^{p_{\emptyset}} + \left(|y_{2}|^{p_{{\nod,2}}}+  \( |y_3|^{p_{2,2}} + |y_{4}|^{p_{{2,2}}}\)^\frac{p_{\nod,2}}{p_{2,2}} \right)^{\frac{p_{\emptyset}}{p_{{\nod,2}}}}\right)^{\frac{1}{p_{\emptyset}}}
\end{align*}
for $\y=W\x$ where $W$ as been estimated by fitting an $L_p$-nested
symmetric distribution with a flexible radial distribution to $W\x$ as
described in Section \ref{sec:fitting}. Assume that the data has
already been transformed once with the mapping of equation
\eqref{eq:nonlinICAMap}. This means that the current radial
distribution is given by (\ref{eq:pndISAradial}) where we chose $s=1$
for convenience. This yields a distribution of the form
\begin{align*}
  \rho(\y) &=
  \frac{p_\nod}{\Gamma\left[\frac{n}{p_\nod}\right]}\exp\(-|y_{1}|^{p_{\emptyset}} - \left(|y_{2}|^{p_{{\nod,2}}}+  \( |y_3|^{p_{2,2}} + |y_{4}|^{p_{{2,2}}}\)^\frac{p_{\nod,2}}{p_{2,2}} \right)^{\frac{p_{\emptyset}}{p_{{\nod,2}}}}\)\\
  &\times\frac{1}{2^n }\prod_{I\in \mathcal
    I}p_I^{\ell_I-1}\frac{\Gamma\left[\frac{n_I}{p_I}\right]}{\prod_{k=1}^{\ell_I}\Gamma\left[\frac{n_{I,k}}{p_I}\right]}.
\end{align*}
Now we can separate the distribution of $y_1$ from the distribution
over $y_2,...,y_4$. The distribution of $y_1$ is a $p$-generalized
Normal 
$$p(y_1) = \frac{p_\nod }{2\Gamma\left[\frac{1}{p_\nod}\right]} \exp\(-|y_1|^{p_\nod}\).$$
Thus the distribution of $y_2,...,y_4$ is given by
\begin{align*}
  \rho(y_2,...,y_4)
  &=\frac{p_\nod}{\Gamma\left[\frac{n_{\nod,2}}{p_\nod}\right]}
  \exp\(- \left(|y_{2}|^{p_{{\nod,2}}}+ \( |y_3|^{p_{2,2}} +
    |y_{4}|^{p_{{2,2}}}\)^\frac{p_{\nod,2}}{p_{2,2}}
  \right)^{\frac{p_{\emptyset}}{p_{{\nod,2}}}}\) \\
&\times\frac{1}{2^{n-1}
  }\prod_{I\in \mathcal I \backslash
    \nod}p_I^{\ell_I-1}\frac{\Gamma\left[\frac{n_I}{p_I}\right]}{\prod_{k=1}^{\ell_I}\Gamma\left[\frac{n_{I,k}}{p_I}\right]}.
\end{align*}
By using equation (\ref{eq:generalLpNested}) we can identify the new
radial distribution to be
$$\varrho(\f_{\nod,2}) = \frac{p_\nod \f_{\nod,2}^{n-2}}{\Gamma\left[\frac{n_{\nod,2}}{p_\nod}\right]}
\exp\(- \f_{\nod,2}^{p_\nod}\). $$ Replacing this distribution by the one for
the $p$-generalized Normal (for data we would use the mapping in
equation (\ref{eq:nonlinICAMap})), we obtain
\begin{align*}
  \rho(y_2,...,y_4)
  &=\frac{p_{\nod,2}}{\Gamma\left[\frac{n_{\nod,2}}{p_{\nod,2}}\right]}
  \exp\(- |y_{2}|^{p_{{\nod,2}}}- \( |y_3|^{p_{2,2}} +
    |y_{4}|^{p_{{2,2}}}\)^\frac{p_{\nod,2}}{p_{2,2}}\) \\
  &\times \frac{1}{2^{n-1}
  }\prod_{I\in \mathcal I \backslash
    \nod}p_I^{\ell_I-1}\frac{\Gamma\left[\frac{n_I}{p_I}\right]}{\prod_{k=1}^{\ell_I}\Gamma\left[\frac{n_{I,k}}{p_I}\right]}.
\end{align*}
Now, we can separate out the distribution of $y_2$ which is again
$p$-generalized Normal. This leaves us with the distribution for $y_3$
and $y_4$
\begin{align*}
  \rho(y_3,y_4)
  &=\frac{p_{{\nod,2}}}{\Gamma\left[\frac{n_{2,2}}{p_{\nod,2}}\right]}
  \exp\(- \( |y_3|^{p_{2,2}} + |y_{4}|^{p_{{2,2}}}\)^\frac{p_{\nod,2}}{p_{2,2}}\)
  \frac{1}{2^{n-2} }\prod_{I\in \mathcal I \backslash
    \{\nod,(\nod,2)\}}p_I^{\ell_I-1}\frac{\Gamma\left[\frac{n_I}{p_I}\right]}{\prod_{k=1}^{\ell_I}\Gamma\left[\frac{n_{I,k}}{p_I}\right]}.
\end{align*}
For this distribution we can repeat the same procedure which will also
yield $p$-generalized Normal distributions for $y_3$ and $y_4$.
\end{example}

This non-linear procedure naturally carries over to arbitrary
$L_p$-nested trees and distributions, thus yielding a general
non-linear ICA algorithm for linearly mixed non-factorial $L_p$-nested
sources. For generalizing Example \ref{ex:nonlinearICA}, note the
particular form of the radial distributions involved. As already noted
above the distribution (\ref{eq:pndISAradial}) on the root node's
values that makes its children statistical independent is that of a
Gamma distributed variable with shape parameter
$\frac{n_\nod}{p_\nod}$ and scale parameter $s$ which has been raised
to the power of $\frac{1}{p_\nod}$. In Section \ref{sec:isaica} we
denoted this class of distributions with $\gamma_p\left[u,s\right]$,
where $u$ and $s$ are the shape and the scale parameter,
respectively. Interestingly, the radial distributions of the root
node's children are also $\gamma_p$ except that the shape parameter is
$\frac{n_{\nod,i}}{p_\nod}$. The goal of the radial remapping of the
children's values is hence just changing the shape parameter from
$\frac{n_{\nod,i}}{p_\nod}$ to $\frac{n_{\nod,i}}{p_{\nod,i}}$. Of
course, it is also possible to change the scale parameter of the
single distributions during the radial remappings. This will not
affect the statistical independence of the resulting variables. In the
general algorithm, that we describe now, we choose $s$ such that the
transformed data is white.

The algorithm starts with fitting a general $L_p$-nested model of the
form $\rho(W \x)$ as described in Section \ref{sec:fitting}. Once this
is done, the linear demixing matrix $W$ is fixed and the hidden
non-factorial sources are recovered via $\y=W\x$. Afterwards, the
sources $\y$ are non-linearly made independent by calling the
recursion specified in Algorithm \ref{alg:non-linearICA} with the
parameters $W\x$, $f$ and $\varrho$, where $\varrho$ is the radial
distribution of the estimated model.

\begin{algorithm}
\begin{itemize}
  \item[\textbf{Input:}] Data point $\y$, $L_p$-nested function $f$, current
  radial distribution $\varrho_s$,
  \item[\textbf{Output:}] Non-linearly transformed data point $\y$\\
\end{itemize}
\textbf{Algorithm}
\begin{enumerate}
\item Set the target radial distribution to be
  $\varrho_{\bot\!\!\!\bot} \leftarrow\gamma_p\(\frac{n_\nod}{p_\nod},
  \frac{\Gamma\left[\frac{1}{p_\nod}\right]^\frac{p_\nod}{2}}{\Gamma\left[\frac{3}{p_\nod}\right]^\frac{p_\nod}{2}}\)$
\item Set $\y \leftarrow \frac{\mathcal F_{\bot\!\!\!\bot}^{-1}\(\mathcal
    F_{s}\(f(\y)\)\)}{f(\y)} \cdot \y$ where $\mathcal
  F$ denotes the cumulative distribution function the respective
  $\varrho$.
\item For all children $i$ of the root node that are not leaves:
  \begin{enumerate} 
    \item Set $\varrho_s \leftarrow\gamma_p\(\frac{n_{\nod,i}}{p_\nod},
  \frac{\Gamma\left[\frac{1}{p_\nod}\right]^\frac{p_\nod}{2}}{\Gamma\left[\frac{3}{p_\nod}\right]^\frac{p_\nod}{2}}\)$
  \item Set $\y_{\nod,i}\leftarrow$ NRF($\y_{\nod,i},f_{\nod,i},\varrho_s$). Note that in
    the recursion ${\nod,i}$ will become the new $\nod$.
  \end{enumerate}
\item Return $\y$
\end{enumerate}
\caption{\label{alg:non-linearICA} Recursion NRF($\y,f,\varrho_s$)}
\end{algorithm}

The computational complexity for transforming a single data point is
$\mathcal O(n^2)$ because of the matrix multiplication $W\x$. In the
non-linear transformation, each single data dimension is not rescaled
more that $n$ times which means that the rescaling is certainly also
in $\mathcal O(n^2)$.

An important aspect of NRF is that it yields a probabilistic model for
the transformed data. This model is simply a product of $n$
independent exponential power marginals. Since the radial remappings
do not change the likelihood, the likelihood of the non-linearly
separated data is the same as the likelihood of the data under
$L_p$-nested symmetric distribution that was fitted to it in the first
place. However, in some cases, one might like to fit a different
distribution to the outcome of Algorithm \ref{alg:non-linearICA}. In
that case the determinant of the transformation is necessary to
determine the likelihood of the input data---and not the transformed
one---under the model. The following lemma provides the determinant of
the Jacobian for the non-linear rescaling.

\begin{lemma}[Determinant of the Jacobian]
  \label{lem:radialDeterminant}
  Let $\z = $ NRF$(W\x,f,\varrho_s)$ as described above. Let
  $\t_I$ denote the value of $W\x$ below the inner node $I$ which have
  been transformed with Algorithm \ref{alg:non-linearICA} up to node
  $I$. Let $g_I(r) = (\mathcal F_{\varrho_{\bot\!\!\!\bot}}\circ
  \mathcal F_{\varrho_{s}})(r)$ denote the radial transform at node
  $I$ in Algorithm \ref{alg:non-linearICA}. Furthermore, let $\mathcal
  I$ denote the set of all inner nodes, excluding the leaves. Then, the
  determinant of the Jacobian $\(\frac{\partial z_i}{\partial
    x_j}\)_{ij}$ is given by
  \begin{align}
    \left|\det \frac{\partial z_i}{\partial x_j}\right| &=
    |\det W| \cdot \prod_{I\in \mathcal I} \left|\frac{g_I(f_I(\t_I))^{n_I-1}}{f_I(\t_I)^{n_I-1}} \cdot \frac{\varrho_{s}(f_I(\t_I))}{\varrho_{\bot\!\!\!\bot}(g_I(f_I(\t_I)))}\right|\label{eq:radialDeterminant}.
  \end{align}
\end{lemma}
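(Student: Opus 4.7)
The plan is to decompose the full map $\x \mapsto \z$ into a linear step $W$ followed by a sequence of radial rescalings $R_I$, one per inner node $I \in \mathcal I$, applied in the top-down order prescribed by Algorithm \ref{alg:non-linearICA}. Each $R_I$ acts as the identity on coordinates outside the subtree under $I$ and sends
$$\t_I \;\longmapsto\; \frac{g_I(f_I(\t_I))}{f_I(\t_I)}\,\t_I$$
on the $n_I$ coordinates in that subtree. Because rescalings at sibling subtrees operate on disjoint coordinate blocks, they commute, and the chain rule gives
$$\left|\det\frac{\partial \z}{\partial \x}\right| \;=\; |\det W|\cdot\prod_{I\in\mathcal I}|\det R_I'(\t_I)|,$$
where $\t_I$ is precisely the argument fed to $R_I$ in the composition, matching the description in the lemma.

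Next, I would compute each $|\det R_I'(\t_I)|$ by passing to the polar-like coordinates $(r,\u)$ of Definition \ref{def:coordinates} for the $n_I$-dimensional subvector using the $L_p$-nested function $f_I$. In those coordinates the rescaling becomes the one-dimensional map $(r,\u)\mapsto (g_I(r),\u)$, with $\u$ preserved. By Proposition \ref{pro:DetJacobian}, the absolute determinant of the Cartesian-to-polar Jacobian has the form $r^{n_I-1}\,\Phi(\u)$ for a directional factor $\Phi$ depending only on $\u$. Conjugating $r\mapsto g_I(r)$ by this change of variables, the $\Phi(\u)$ factors from the forward and inverse polar transformations cancel because $\u$ is fixed, leaving
$$|\det R_I'(\t_I)| \;=\; \frac{g_I(r)^{n_I-1}}{r^{n_I-1}}\,g_I'(r),\qquad r=f_I(\t_I).$$

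Finally, since $g_I=\mathcal F_{\varrho_{\bot\!\!\!\bot}}^{-1}\circ\mathcal F_{\varrho_s}$, as read off from step~2 of Algorithm \ref{alg:non-linearICA}, the inverse function theorem gives $g_I'(r)=\varrho_s(r)/\varrho_{\bot\!\!\!\bot}(g_I(r))$. Substituting this into the factor above and taking the product over all inner nodes recovers formula \eqref{eq:radialDeterminant}.

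The main obstacle I anticipate is the bookkeeping in the polar-conjugation step: carefully verifying that, because $R_I$ fixes the directional coordinate $\u$, the $\prod_L \G_L(\bm u_{\widehat L})$ factors produced by Proposition \ref{pro:DetJacobian} cancel between the forward polar transformation at $r$ and the inverse polar transformation at $g_I(r)$, so that the effective Jacobian on the $n_I$-dimensional subtree reduces to a one-dimensional calculation on $r$. Once this cancellation is justified, everything else is a routine application of the chain rule and the inverse function theorem.
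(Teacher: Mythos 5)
Your proof is correct, but the key computation is done by a genuinely different route than the paper's. Both arguments share the same skeleton: factor the map into $W$ followed by one radial rescaling per inner node, observe that each rescaling is the identity off the relevant coordinate block so the chain rule turns the total determinant into a product of per-node block determinants, and finish with $g_I'(r)=\varrho_s(r)/\varrho_{\bot\!\!\!\bot}(g_I(r))$ from the inverse function theorem. Where you diverge is the evaluation of each block determinant. The paper differentiates $\t_I\mapsto \frac{g_I(f_I(\t_I))}{f_I(\t_I)}\t_I$ directly in Cartesian coordinates, obtaining a rank-one perturbation of $\frac{g_I(r)}{r}I_{n_I}$, and then applies Sylvester's determinant formula together with the Euler-type identity $\t_I^{\top}\bm\zeta=f_I(\t_I)^{p_I}$ to get $\frac{g_I(r)^{n_I-1}}{r^{n_I-1}}g_I'(r)$. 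You instead conjugate by the polar-like coordinates of Definition \ref{def:coordinates}, where the rescaling becomes $(r,\u)\mapsto(g_I(r),\u)$, and let the directional factor $\prod_L\G_L(\bm u_{\widehat L})$ from Proposition \ref{pro:DetJacobian} cancel between the forward and inverse polar Jacobians since $\u$ is fixed. Your cancellation is legitimate: positive homogeneity gives $f_I(R_I(\t_I))=g_I(f_I(\t_I))$ and the normalized vector is unchanged, so the directional factor is evaluated at the same $\u$ on both sides (one should note that the polar map is a diffeomorphism only per orthant and away from a null set, but the determinant is the same on each orthant, so this costs nothing). The trade-off: the paper's route is self-contained linear algebra and needs no regularity caveats, while yours reuses the machinery already built for Proposition \ref{pro:DetJacobian} and makes it conceptually transparent that the block determinant can only depend on $r$ and not on the direction.
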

\begin{proof}The proof can be found in the Appendix \ref{app:radialJacobian}.
\end{proof}

\section{Conclusion}

\label{sec:conclusion}
In this article we presented a formal treatment of the first tractable
subclass of $\nu$-spherical distributions which generalizes the
important family of $L_p$-spherically symmetric distributions. We
derived an analytical expression for the normalization constant,
introduced a coordinate system particularly tailored to $L_p$-nested
functions and computed the determinant of the Jacobian for the
corresponding coordinate transformation. Using these results, we
introduced the uniform distribution on the $L_p$-nested unit sphere
and the general form of an $L_p$-nested distribution for arbitrary
$L_p$-nested functions and radial distributions. We also derived an
expression for the joint distribution of inner nodes of an
$L_p$-nested tree and derived a sampling scheme for an arbitrary
$L_p$-nested distribution.

$L_p$-nested symmetric distributions naturally provide the class of
probability distributions corresponding to mixed norm priors, allowing
full Bayesian inference in the corresponding probabilistic models. We
showed that a robustness result for Bayesian inference of the location
parameter known for $L_p$-spherically symmetric distributions carries
over to the $L_p$-nested symmetric class. We discussed the relations
of $L_p$-nested symmetric distributions to Indepedent Component (ICA)
and Independent Subspace Analysis (ISA), and discussed its
applicability as a prior distribution in over-complete linear
models. Finally, we showed how $L_p$-nested distributions can be used
to construct a non-linear ICA algorithm called Nested Radial
Factorization (NRF). 

The application of $L_p$-nested symmetric distribution has been
presented in a previous conference paper \citep{sinz:2010}. Code for
training this class of distribution is provided online under
\url{http://www.kyb.tuebingen.mpg.de/bethge/code/}.

\section*{Acknowledgements}
We would like to thank Eero Simoncelli for mentioning the idea of
replacing $L_p$-norms by $L_p$-nested functions to us. Furthermore, we
want to thank Sebastian Gerwinn, Suvrit Sra and Reshad Hosseini for
fruitful discussions and feedback on the manuscript. Finally, we would
like to thank the anonymous reviewers for their comments that helped
to improve the manuscript.

This work is supported by the German Ministry of Education, Science,
Research and Technology through the Bernstein prize to MB (BMBF; FKZ:
01GQ0601), a scholarship to FS by the German National Academic
Foundation, and the Max Planck Society.

\appendix
\section{Determinant of the Jacobian}
\label{app:determinantOfJacobian}
\begin{proof}[Lemma \ref{lem:generalDeterminant}]
  \label{pro:generalDeterminant}
  The proof is very similar to the one in \cite{song:1997}. To derive
  equation (\ref{eq:generalDeterminant}) one needs to expand the
  Jacobian of the inverse coordinate transformation with respect to
  the last column using the Laplace expansion of the determinant. The
  term $\Delta_n$ can be factored out of the determinant and cancels
  due to the absolute value around it. Therefore, the determinant of
  the coordinate transformation does not depend on $\Delta_n$.

  The partial derivatives of the inverse coordinate transformation are
  given by:
  \begin{align*}
    \frac{\partial}{\partial u_{k}}x_{i} &=  \delta_{ik}r\mbox{ for }1\le i,k\le n-1\\
    \frac{\partial}{\partial u_{k}}x_{n} &=  \Delta_n r\frac{\partial u_{n}}{\partial u_{k}}\mbox{ for }1\le k\le n-1\\
    \frac{\partial}{\partial r}x_{i} &=  u_{i}\mbox{ for }1\le i\le n-1\\
    \frac{\partial}{\partial r}x_{n} &=  \Delta_n u_{n}.
  \end{align*}
  Therefore, the structure of the Jacobian is given by
  \begin{align*}
    \mathcal{J} &= \(\begin{array}{cccc}
        r & \dots & 0 & u_{1}\\
        \vdots & \ddots & \vdots & \vdots\\
        0 & \dots & r & u_{n-1}\\
        \Delta_n r\frac{\partial u_{n}}{\partial u_{1}} & \dots & \Delta_n
        r\frac{\partial u_{n}}{\partial u_{n-1}} & \Delta_n
        u_{n}\end{array}\).
  \end{align*}
  Since we are only interested in the absolute value of the
  determinant and since $\Delta_n\in\{-1,1\}$, we can factor out
  $\Delta_n$ and drop it. Furthermore, we can factor out $r$ from the
  first $n-1$ columns which yields
  \begin{align*}
    |\det\mathcal{J}| &=   r^{n-1}\left|\det\(\begin{array}{cccc}
          1 & \dots & 0 & u_{1}\\
          \vdots & \ddots & \vdots & \vdots\\
          0 & \dots & 1 & u_{n-1}\\
          \frac{\partial u_{n}}{\partial u_{1}} & \dots &
          \frac{\partial u_{n}}{\partial u_{n-1}} &
          u_{n}\end{array}\)\right|.
  \end{align*}
  Now we can use the Laplace expansion of the determinant with respect
  to the last column. For that purpose, let $\mathcal{J}_{i}$ denote
  the matrix which is obtained by deleting the last column and the $i$th
  row from $\mathcal{J}$. This matrix has the following structure
  \begin{align*}
    \mathcal{J}_{i} &=  \(\begin{array}{ccccccc}
        1 &  &  & 0\\
        & \ddots &  & 0\\
        &  & 1 & 0\\
        &  &  & \vdots & 1\\
        &  &  & 0 &  & \ddots\\
        &  &  & 0 &  &  & 1\\
        \frac{\partial u_{n}}{\partial u_{1}} &  &  & \frac{\partial
          u_{n}}{\partial u_{i}} &  &  & \frac{\partial u_{n}}{\partial
          u_{n-1}}\end{array}\).
  \end{align*}
  We can transform $\mathcal{J}_{i}$ into a lower triangular matrix by
  moving the column with all zeros and $\frac{\partial u_{n}}{\partial
    u_{i}}$ bottom entry to the rightmost column of
  $\mathcal{J}_{i}$. Each swapping of two columns introduces a factor
  of $-1$. In the end, we can compute the value of
  $\det\mathcal{J}_{i}$ by simply taking the product of the diagonal
  entries and obtain $\det\mathcal{J}_{i} =  (-1)^{n-1-i} \frac{\partial u_{n}}{\partial
    u_{i}}$.
  This yields
  \begin{align*}
    |\det\mathcal{J}| &=  r^{n-1}\(\sum_{k=1}^{n}(-1)^{n+k}u_{k}\det\mathcal{J}_{k}\)\\
    &=   r^{n-1}\(\sum_{k=1}^{n-1}(-1)^{n+k}u_{k}\det\mathcal{J}_{k}+(-1)^{2n}\frac{\partial x_{n}}{\partial r}\)\\
    &=  r^{n-1}\(\sum_{k=1}^{n-1}(-1)^{n+k}u_{k}(-1)^{n-1-k}\frac{\partial u_{n}}{\partial u_{k}}+u_{n}\)\\
    &=  r^{n-1}\(-\sum_{k=1}^{n-1}u_{k}\frac{\partial
        u_{n}}{\partial u_{k}}+u_{n}\).
  \end{align*}
\end{proof}

Before proving Proposition \ref{pro:DetJacobian} stating that the
determinant only depends on the terms $\G_I(\bm u_{\widehat I})$ produced by the chain rule when
used upwards in the tree, let us quickly outline the essential
mechanism when taking the chain rule for $\frac{\partial u_n}{\partial
  u_q}$: Consider the tree corresponding to $f$. By definition $u_{n}$
is the rightmost leaf of the tree. Let $L,\ell_L$ be the multi-index
of $u_n$.  As in the example, the chain rule starts at the leaf
$u_{n}$ ascends in the the tree until it reaches the lowest node whose
subtree contains both, $u_n$ and $u_q$. At this point, it starts
descending the tree until it reaches the leaf $u_{q}$.  Depending on
whether the chain rule ascends or descends, two different forms of
derivatives occur: while ascending, the chain rule produces $\G_I(\bm u_{\widehat I})$-terms
like the one in the example above. At descending, it produces
$\F_I(\u_{I})$-terms. The general definitions of the $\G_I(\bm u_{\widehat I})$- and $\F_I(\u_{I})$-terms are
given by the recursive formulae
\begin{align}
  \G_{I,\ell_I}(\bm u_{\widehat{I,\ell_I}}) &=  \g_{I,{\ell_I}}(\u_{\widehat{I,\ell_I}})^{p_{I,{\ell_I}}-p_{I}}  =\left(\g_{I}(\u_{\widehat{I}})^{p_{I}}-\sum_{j=1}^{\ell_I-1}f_{I,j}(\u_{I,j})^{p_{I}}\right)^{\frac{p_{I,{\ell_I}}-p_{I}}{p_{I}}} \label{eq:defF}
\end{align}
and 
\begin{align*}
  \F_{I,i_{r}}(\u_{I,i_r}) &=  f_{I,i_{r}}(\u_{I,i_{r}})^{p_{I}-p_{I,i_{r}}} =
  \left(\sum_{k=1}^{\ell_{I,i_r}}f_{I,i_{r},k}(\u_{I,i_{r},k})^{p_{I,i_{r}}}\right)^{\frac{p_{I}-p_{I,i_{r}}}{p_{I,i_{r}}}}.
\end{align*}

The next two lemmata are required for the proof of Proposition
\ref{pro:DetJacobian}.  We use the somewhat sloppy notation $k \in
{I,i_r}$ if the variable $u_k$ is a leaf in the subtree below
${I,i_r}$. The same notation is used for ${\widehat{I}}$.
\begin{lemma}
  \label{lem:DerRec}Let $I=i_{1},...,i_{r-1}$ and ${I,i_{r}}$ be
  any node of the tree associated with an $L_p$-nested function
  $f$. Then the following recursions hold for the derivatives of
  $\g_{I,i_{r}}(\u_{\widehat{I,i_r}})^{p_{I,i_{r}}}$ and $f_{I,i_{r}}^{p_{I}}(\u_{I,i_{r}})$ w.r.t $u_{q}$:
  If $u_{q}$ is not in the subtree under the node $I,i_r$, i.e. $k
  \not\in I,i_r$, then
  \begin{align*}
    &\frac{\partial}{\partial u_{q}}f_{I,i_{r}}(\u_{I,i_{r}})^{p_{I}} = 0\\
    &\text{ \em and }\\
    &\frac{\partial}{\partial u_{q}}\g_{I,i_{r}}(\u_{\widehat{I,i_r}})^{p_{I,i_{r}}} 
       = \frac{p_{I,i_{r}}}{p_{I}}\G_{I,i_{r}}(\bm u_{\widehat{I,i_r}})\cdot
   \begin{cases}
     \frac{\partial}{\partial u_{q}}\g_{I}(\u_{\widehat{I}})^{p_{I}} & \mbox{ if
     }q\in  I  \\ 
     \\-\frac{\partial}{\partial u_{q}}f_{I,j}(\u_{I,j})^{p_{I}} & \mbox{ if
     }q\in I,j
   \end{cases}  
  \end{align*}
  for $q\in I,j$ and $q\not\in I,k$ for $k\not=j$.
  Otherwise 
  \begin{align*}
    \frac{\partial}{\partial u_{q}}\g_{I,i_{r}}(\u_{\widehat{I,i_r}})^{p_{I,i_{r}}} = 0 
    &\mbox{ \em and }
    \frac{\partial}{\partial u_{q}}f_{I,i_{r}}(\u_{I,i_{r}})^{p_{I}} =
    \frac{p_{I}}{p_{I,i_{r}}}\F_{I,i_{r}}(\u_{I,i_r})\frac{\partial}{\partial
      u_{q}}f_{I,i_{r},s}(\u_{I,i_{r},s})^{p_{I,i_{r}}}
  \end{align*}
  for $q\in {I,i_{r},s}$ and $q\not\in I,i_{r},k$ for $k\not=s$.
\end{lemma}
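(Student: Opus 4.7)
The statement splits into two vanishing identities and two chain-rule computations. My plan is to dispatch the vanishings by inspecting variable dependencies, then reduce both non-trivial derivatives to a single application of the chain rule using the two defining identities
\begin{align*}
g_{I,i_r}(\u_{\widehat{I,i_r}})^{p_I} &= g_I(\u_{\widehat I})^{p_I} - \sum_{j\neq i_r} f_{I,j}(\u_{I,j})^{p_I}, \\
f_{I,i_r}(\u_{I,i_r})^{p_{I,i_r}} &= \sum_{k=1}^{\ell_{I,i_r}} f_{I,i_r,k}(\u_{I,i_r,k})^{p_{I,i_r}}.
\end{align*}
The first is obtained by solving (\ref{eq:defF}) before raising to the power $p_{I,i_r}/p_I$; the second is just the $L_p$-nested recursion at node $I,i_r$.

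The vanishings are immediate: $f_{I,i_r}$ depends only on the leaves in the subtree under $I,i_r$, so its $u_q$-derivative is zero for $q\notin I,i_r$; symmetrically $g_{I,i_r}$ depends only on the coordinates $\u_{\widehat{I,i_r}}$ outside that subtree, so $\partial_{u_q} g_{I,i_r}^{p_{I,i_r}}=0$ for $q\in I,i_r$.

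For the first non-trivial derivative (case $q\notin I,i_r$), I would write $g_{I,i_r}^{p_{I,i_r}} = \bigl(g_{I,i_r}^{p_I}\bigr)^{p_{I,i_r}/p_I}$ and apply the chain rule. The outer factor is $\tfrac{p_{I,i_r}}{p_I}\,g_{I,i_r}^{p_{I,i_r}-p_I}$, which by (\ref{eq:defF}) is precisely $\tfrac{p_{I,i_r}}{p_I}\,G_{I,i_r}(\bm u_{\widehat{I,i_r}})$. Since $g_I$ and the $f_{I,j}$'s for $j\neq i_r$ depend on pairwise disjoint coordinate sets, the inner $\partial_{u_q}$ kills all but one summand of the right-hand identity: only $\partial_{u_q} g_I^{p_I}$ survives if $q\in\widehat I$, and only $-\partial_{u_q} f_{I,j}^{p_I}$ if $q\in I,j$ for some $j\neq i_r$, giving the two branches of the case distinction.

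The second non-trivial derivative (case $q\in I,i_r$) is completely analogous: I would use $f_{I,i_r}^{p_I} = \bigl(f_{I,i_r}^{p_{I,i_r}}\bigr)^{p_I/p_{I,i_r}}$, pick up the outer factor $\tfrac{p_I}{p_{I,i_r}}\,f_{I,i_r}^{p_I-p_{I,i_r}} = \tfrac{p_I}{p_{I,i_r}}\,F_{I,i_r}(\u_{I,i_r})$, and observe that in the inner sum $\sum_k \partial_{u_q} f_{I,i_r,k}^{p_{I,i_r}}$ only the unique term with $q\in I,i_r,s$ contributes. I do not foresee any real obstacle here; the lemma is a careful exercise in chain-rule bookkeeping, and the point worth emphasizing is that the $G$- and $F$-terms of (\ref{eq:defF}) are exactly what emerges from the outermost layer of the chain rule in each case, with the remaining factor being the raw derivative of the single surviving summand.
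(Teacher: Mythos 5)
Your proposal is correct and follows essentially the same route as the paper: the vanishing cases are dispatched by noting which coordinates each function depends on, and the two non-trivial derivatives are obtained by one application of the chain rule to the power identities $g_{I,i_r}^{p_{I,i_r}}=(g_{I,i_r}^{p_I})^{p_{I,i_r}/p_I}$ and $f_{I,i_r}^{p_I}=(f_{I,i_r}^{p_{I,i_r}})^{p_I/p_{I,i_r}}$, with the outer factor identified as $\tfrac{p_{I,i_r}}{p_I}\G_{I,i_r}$ (resp. $\tfrac{p_I}{p_{I,i_r}}\F_{I,i_r}$) and only one summand of the inner expression surviving. The paper merely arranges the same chain-rule computation in two stages rather than composing the exponents directly; there is no substantive difference.
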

\begin{proof} Both of the first equations are obvious, since only
  those nodes have a non-zero derivative for which the subtree
  actually depends on $u_q$. The second equations can be seen by
  direct computation
 \begin{align*}
   \frac{\partial}{\partial u_{q}}\g_{I,i_{r}}(\u_{\widehat{I,i_r}})^{p_{I,i_{r}}} &= p_{I,i_{r}}\g_{I,i_{r}}(\u_{\widehat{I,i_r}})^{p_{I,i_{r}}-1}\frac{\partial}{\partial u_{q}}\G_{I,i_{r}}(\bm u_{\widehat{I,i_r}})\\
   &= p_{I,i_{r}}\g_{I,i_{r}}(\u_{\widehat{I,i_r}})^{p_{I,i_{r}}-1}\frac{\partial}{\partial u_{q}}\(\g_{I}(\u_{\widehat{I}})^{p_{I}}-\sum_{j=1}^{{\ell_I}-1}f_{I,j}(\u_{I,j})^{p_{I}}\)^{\frac{1}{p_{I}}}\\
   &= \frac{p_{I,i_{r}}}{p_{I}}\g_{I,i_{r}}(\u_{\widehat{I,i_r}})^{p_{I,i_{r}}-1}\g_{I,i_{r}}(\u_{\widehat{I,i_r}})^{1-p_{I}}\frac{\partial}{\partial u_{q}}\(\g_{I}(\u_{\widehat{I}})^{p_{I}}-\sum_{j=1}^{{\ell_I}-1}f_{I,j}(\u_{I,j})^{p_{I}}\)\\
   &= \frac{p_{I,i_{r}}}{p_{I}}\G_{I,i_{r}}(\bm u_{\widehat{I,i_r}}) \cdot
   \begin{cases}
     \frac{\partial}{\partial u_{q}}\g_{I}(\u_{\widehat{I}})^{p_{I}} & \mbox{ if
     }q\in I\\
     \\-\frac{\partial}{\partial u_{q}}f_{I,j}(\u_{I,j})^{p_{I}} & \mbox{ if
     }q\in {I,j}
   \end{cases}
\end{align*}
 
 Similarly 
\begin{align*}
  \frac{\partial}{\partial u_{q}}f_{I,i_{r}}(\u_{I,i_{r}})^{p_{I}} &= p_{I}f_{I,i_{r}}(\u_{I,i_{r}})^{p_{I}-1}\frac{\partial}{\partial u_{q}}f_{I,i_{r}}(\u_{I,i_{r}})\\
  &= p_{I}f_{I,i_{r}}(\u_{I,i_{r}})^{p_{I}-1}\frac{\partial}{\partial u_{q}}\(\sum_{k=1}^{{\ell_{I,i_r}}}f_{I,i_{r},k}(\u_{I,i_{r},k})^{p_{I,i_{r}}}\)^{\frac{1}{p_{I,i_{r}}}}\\
 &= \frac{p_{I}}{p_{I,i_{r}}}f_{I,i_{r}}(\u_{I,i_{r}})^{p_{I}-1}f_{I,i_{r}}(\u_{I,i_{r}})^{1-p_{I,i_{r}}}\frac{\partial}{\partial u_{q}}f_{I,i_{r},s}(\u_{I,i_{r},s})^{p_{I,i_{r}}}\\
 &= \frac{p_{I}}{p_{I,i_{r}}}\F_{I,i_{r}}(\u_{I,i_r})\frac{\partial}{\partial
 u_{q}}f_{I,i_{r},s}(\u_{I,i_{r},s})^{p_{I,i_{r}}}
\end{align*}
 for $k\in {I,i_{r},s}$.
\end{proof}
The next lemma states the form of the whole derivative $\frac{\partial
  u_n}{\partial u_q}$ in terms of the $\G_I(\bm u_{\widehat I})$- and $\F_I(\u_{I})$-terms.
\begin{lemma}
  \label{lem:derivatives}Let
  $|u_{q}|=\f_{\ell_{1},...,\ell_{m},i_{1},...,i_{t}}$,
  $|u_{n}|=\f_{\ell_{1},...,\ell_{d}}$ with $m<d$.  The derivative of
  $u_{n}$ w.r.t. $u_{q}$ is given by
\begin{align*}
  \frac{\partial}{\partial u_{q}}u_{n} &= -\G_{\ell_{1},...,\ell_{d}}(\bm u_{\widehat{\ell_{1},...,\ell_{d}}})\cdot...\cdot \G_{\ell_{1},...,\ell_{m+1}}(\bm u_{\widehat{\ell_{1},...,\ell_{m+1}}})\\
  &\times \F_{\ell_{1},...,\ell_{m},i_{1}}(\u_{\ell_{1},...,\ell_{m},i_{1}})\cdot \F_{\ell_{1},...,\ell_{m},i_{1},...,i_{t-1}}(\u_{\ell_{1},...,\ell_{m},i_{1},...,i_{t-1}})\cdot\Delta_{q}|u_{q}|^{p_{\ell_{1},...,\ell_{m},i_{1},...,i_{t-1}}-1}\end{align*}
with $\Delta_{q}=\mathrm{sgn}\, u_{q}$ and $|u_{q}|^{p}=(\Delta_{q}u_{q})^{p}$.
In particular
\begin{align*}
  u_{q}\frac{\partial}{\partial u_{q}}u_{n} &=
  -\G_{\ell_{1},...,\ell_{d}}(\bm
  u_{\widehat{\ell_{1},...,\ell_{d}}})\cdot...\cdot
  \G_{\ell_{1},...,\ell_{m+1}}(\bm
  u_{\widehat{\ell_{1},...,\ell_{m+1}}})\\
  &\times
  \F_{\ell_{1},...,\ell_{m},i_{1}}(\u_{1})\cdot
  \F_{\ell_{1},...,\ell_{m},i_{1},...,i_{t-1}}(\u_{\ell_{1},...,\ell_{m},i_{1}})\cdot|u_{q}|^{p_{\ell_{1},...,\ell_{m},i_{1},...,i_{t-1}}}.\end{align*}
\end{lemma}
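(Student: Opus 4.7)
The plan is to chase the chain rule along the unique tree path joining the two leaves $u_n$ and $u_q$: first ascending from $u_n$ up to the lowest common ancestor $A = \ell_{1},\ldots,\ell_{m}$, then descending from $A$ down to $u_q$. Lemma \ref{lem:DerRec} is tailored to exactly these two moves: its first case advances the derivative one level up the tree and extracts a $\G$-factor, while its second case advances it one level down and extracts an $\F$-factor. So the proof boils down to applying Lemma \ref{lem:DerRec} exactly $d-m$ times for the ascent and $t-1$ times for the descent, and then verifying that the accompanying $p_{I,i_{r}}/p_{I}$ prefactors telescope so that no $p$'s survive in the final expression (apart from those hidden inside the $\G$'s, $\F$'s, and the exponent of $|u_q|$).

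For the ascending phase, I would introduce $J_{s} := \ell_{1},\ldots,\ell_{d-s}$, set $p_{J_{0}}=1$ by the leaf convention so that $u_{n} = g_{J_{0}}^{p_{J_{0}}}$, and then apply the first case of Lemma \ref{lem:DerRec} with $I=J_{s+1}$, $i_{r}=\ell_{d-s}$ iteratively for $s=0,\ldots,d-m-1$. For $s<d-m-1$ the leaf $u_{q}$ still lies outside the subtree under $J_{s+1}$ (its path diverges from that of $u_{n}$ at level $m+1$), so the ``$q\in I$'' sub-case applies: we pick up one $\G_{J_{s}}$ and are pushed one level up. At the final step $s=d-m-1$ the derivative has arrived at the LCA, $u_{q}$ now sits in the sibling subtree under $A,i_{1}$, and the ``$q\in I,j$'' sub-case fires with $j=i_{1}$, contributing the sign flip and leaving $\partial f_{A,i_{1}}^{p_{A}}/\partial u_{q}$. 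The $d-m$ $p$-ratios accumulated along the way telescope to $p_{J_{0}}/p_{A} = 1/p_{A}$.

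For the descending phase, I would apply the second case of Lemma \ref{lem:DerRec} iteratively for $r=1,\ldots,t-1$ with $I=\ell_{1},\ldots,\ell_{m},i_{1},\ldots,i_{r-1}$, each step pulling out an $\F_{\ell_{1},\ldots,\ell_{m},i_{1},\ldots,i_{r}}$ together with a further $p$-ratio. After $t-1$ steps the derivative has been pushed onto the leaf, where $f_{\ell_{1},\ldots,\ell_{m},i_{1},\ldots,i_{t}}^{\,p}=|u_{q}|^{p}$ with $p := p_{\ell_{1},\ldots,\ell_{m},i_{1},\ldots,i_{t-1}}$, and ordinary differentiation gives $p\,\Delta_{q}|u_{q}|^{p-1}$. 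The descending $p$-ratios together with this final $p$ telescope to $p_{A}$, which cancels the $1/p_{A}$ left over from the ascent, so the final expression is exactly the product of $\G$'s and $\F$'s times $\Delta_{q}|u_{q}|^{p-1}$ with an overall minus sign, as claimed. The ``in particular'' identity then follows immediately by multiplying through by $u_{q}$ and using $u_{q}\,\Delta_{q} = |u_{q}|$ so that $\Delta_{q}|u_{q}|^{p-1}\cdot u_{q} = |u_{q}|^{p}$.

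The principal obstacle is bookkeeping: at each node on the path one has to distinguish correctly between the child on the $u_{n}$-side (which is ``removed'' when forming $g$ from $f$) and the child on the $u_{q}$-side, and then verify that the chain of $p_{I,i_{r}}/p_{I}$ prefactors telescopes cleanly so as to cancel $p_{A}$. The degenerate cases $m=d-1$ (a single upward step, only one $\G$-factor) and $t=1$ ($u_{q}$ a direct leaf child of $A$, no $\F$-factors) should be checked as boundary cases of this telescoping, and in both the same formula reemerges.
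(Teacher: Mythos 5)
Your proposal is correct and is exactly the paper's approach: the paper's entire proof reads ``Successive application of Lemma (\ref{lem:DerRec}),'' and your write-up is simply that argument carried out explicitly, with the ascent/descent bookkeeping, the telescoping of the $p_{I,i_r}/p_I$ prefactors (using the leaf convention $p=1$ so that the residual $1/p_A$ from the ascent cancels against the $p_A$ from the descent and the final leaf derivative), and the boundary cases all checked correctly.
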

\begin{proof}
Successive application of Lemma (\ref{lem:DerRec}).
 \end{proof}

\begin{proof}[Proposition \ref{pro:DetJacobian}]
  Before we begin with the proof, note that $\F_I(\u_{I})$ and $\G_I(\bm u_{\widehat I})$ fulfill
  following equalities
  \begin{eqnarray}
    \G_{I,i_{m}}(\bm u_{\widehat{I,i_m}})^{-1}\g_{I,i_{m}}(\u_{\widehat{I,i_m}})^{p_{I,i_{m}}} & = & \g_{I,i_{m}}(\u_{\widehat{I,i_m}})^{p_{I}}\label{eq:zt1}\\
    & = &
    \g_{I}(\u_{\widehat{I}})^{p_{I}}-\sum_{k=1}^{\ell_I-1}\F_{I,k}(\u_{I,k})f_{I,k}(\u_{I,k})^{p_{I,k}}\label{eq:zt2}
  \end{eqnarray}
  and
  \begin{eqnarray}
    f_{I,i_{m}}(\u_{I,i_{m}})^{p_{I,i_{m}}} & = &
    \sum_{k=1}^{\ell_{I,i_{m}}}\F_{I,i_{m},k}(\u_{I,i_m,k})f_{I,i_{m},k}(\u_{I,i_{m},k})^{p_{I,i_{m},k}}.\label{eq:zt3}
  \end{eqnarray}

  Now let $L=\ell_{1},...,\ell_{d-1}$ be the multi-index of the parent
  of $u_n$. We compute
  $\frac{1}{r^{n-1}}|\det\mathcal{J}|$ and obtain the result by
  solving for $|\det\mathcal{J}|$. As shown in Lemma
  (\ref{lem:generalDeterminant}) $\frac{1}{r^{n-1}}|\det\mathcal{J}|$ has
  the form\begin{eqnarray*} \frac{1}{r^{n-1}}|\det\mathcal{J}| & = &
    -\sum_{k=1}^{n-1}\frac{\partial u_{n}}{\partial u_{k}}\cdot
    u_{k}+u_{n}.\end{eqnarray*} By definition
  $u_{n}=\g_{L,\ell_{d}}(\u_{\widehat{L,\ell_d}})=\g_{L,\ell_{d}}(\u_{\widehat{L,\ell_d}})^{p_{L,\ell_{d}}}$.  Now, assume
  that $u_{m},...,u_{n-1}$ are children of $L$, i.e.
  $u_{k}=\f_{L,I,i_{t}}$ for some $I,i_{t}=i_{1},...,i_{t}$ and $m\le k<n$. Remember,
  that by Lemma (\ref{lem:derivatives}) the terms
  $u_{q}\frac{\partial}{\partial u_{q}}u_{n}$ for $m\le q< n$ have
  the form
\begin{align*} u_{q}\frac{\partial}{\partial u_{q}}u_{n}
    &= -\G_{L,\ell_{d}}(\bm u_{\widehat{L,\ell_d}})\cdot \F_{L,i_{1}}(\u_{L,i_1})\cdot...\cdot
    \F_{L,I}(\u_{L,I})\cdot|u_{q}|^{p_{\ell_{1},...,\ell_{d-1},i_{1},...,i_{t-1}}}.\end{align*}
  Using equation (\ref{eq:zt2}), we can expand the determinant as follows \begin{align*}
    &   -\sum_{k=1}^{n-1}\frac{\partial u_{n}}{\partial u_{k}}\cdot u_{k}+\g_{L,\ell_{d}}(\u_{\widehat{L,\ell_d}})^{p_{L,\ell_{d}}}\\
    &= -\sum_{k=1}^{m-1}\frac{\partial u_{n}}{\partial u_{k}}\cdot u_{k}-\sum_{k=m}^{n-1}\frac{\partial u_{n}}{\partial u_{k}}\cdot u_{k}+\g_{L,\ell_{d}}(\u_{\widehat{L,\ell_d}})^{p_{L,\ell_{d}}}\\
    &= -\sum_{k=1}^{m-1}\frac{\partial u_{n}}{\partial u_{k}}\cdot
    u_{k}+\G_{L,\ell_{d}}(\bm u_{\widehat{L,\ell_d}})\left(-\sum_{k=m}^{n-1}\G_{L,\ell_{d}}(\bm u_{\widehat{L,\ell_d}})^{-1}\frac{\partial
        u_{n}}{\partial u_{k}}\cdot
      u_{k}+\G_{L,\ell_{d}}(\bm u_{\widehat{L,\ell_d}})^{-1}\g_{L,\ell_{d}}(\u_{\widehat{L,\ell_d}})^{p_{L,\ell_{d}}}\right)\\
    &= -\sum_{k=1}^{m-1}\frac{\partial u_{n}}{\partial u_{k}}\cdot u_{k}+\G_{L,\ell_{d}}(\bm u_{\widehat{L,\ell_d}})\left(-\sum_{k=m}^{n-1}\G_{L,\ell_{d}}(\bm u_{\widehat{L,\ell_d}})^{-1}\frac{\partial u_{n}}{\partial u_{k}}\cdot u_{k}+\g_{L}(\u_{\widehat{L}})^{p_{L}}-\sum_{k=1}^{{\ell_d}-1}\F_{L,k}(\u_{L,k})f_{L,k}(\u_{L,k})^{p_{L,k}}\right).
\end{align*}
Note that all terms
$\G_{L,\ell_{d}}(\bm u_{\widehat{L,\ell_d}})^{-1}\frac{\partial
  u_{n}}{\partial u_{k}}\cdot u_{k}$ for $m\le k < n$ now have the
form
\begin{align*}
  \G_{L,\ell_{d}}(\bm u_{\widehat{L,\ell_d}})^{-1}u_{k}\frac{\partial}{\partial u_{k}}u_{n} & = 
  -\F_{L,i_{1}}(\u_{L,i_1})\cdot...\cdot
  \F_{L,I}(\u_{L,I})\cdot|u_{q}|^{p_{\ell_{1},...,\ell_{d-1},i_{1},...,i_{t-1}}}
\end{align*}
since we constructed them to be neighbors of $u_{n}$. However, with
equation (\ref{eq:zt3}), we can further
expand the sum $\sum_{k=1}^{{\ell_d}-1}\F_{L,k}(\u_{L,k})f_{L,k}(\u_{L,k})^{p_{L,k}}$ down to
the leaves $u_{m},...,u_{n-1}$. When doing so we end up with
the same factors $\F_{L,i_{1}}(\u_{L,i_1})\cdot...\cdot
\F_{L,I}(\u_{L,I})\cdot|u_{q}|^{p_{\ell_{1},...,\ell_{d-1},i_{1},...,i_{t-1}}}$
as in the derivatives
$\G_{L,\ell_{d}}(\bm u_{\widehat{L,\ell_d}})^{-1}u_{q}\frac{\partial}{\partial u_{q}}u_{n}$.
This means exactly that 
$$-\sum_{k=m}^{n-1}\G_{L,\ell_{d}}(\bm u_{\widehat{L,\ell_d}})^{-1}\frac{\partial u_{n}}{\partial
  u_{k}}\cdot u_{k} = \sum_{k=1}^{{\ell_d}-1}\F_{L,k}(\u_{L,k})f_{L,k}(\u_{L,k})^{p_{L,k}}$$ 
and, therefore,
\begin{align*}
    &= -\sum_{k=1}^{m-1}\frac{\partial u_{n}}{\partial u_{k}}\cdot u_{k}+\G_{L,\ell_{d}}(\bm u_{\widehat{L,\ell_d}})\left(-\sum_{k=m}^{n-1}\G_{L,\ell_{d}}(\bm u_{\widehat{L,\ell_d}})^{-1}\frac{\partial u_{n}}{\partial u_{k}}\cdot u_{k}+\g_{L}(\u_{\widehat{L}})^{p_{L}}-\sum_{k=1}^{{\ell_d}-1}\F_{L,k}(\u_{L,k})f_{L,k}(\u_{L,k})^{p_{L,k}}\right)\\
    &= -\sum_{k=1}^{m-1}\frac{\partial u_{n}}{\partial u_{k}}\cdot u_{k}+\G_{L,\ell_{d}}(\bm u_{\widehat{L,\ell_d}})\left(\sum_{k=1}^{{\ell_d}-1}\F_{L,k}(\u_{L,k})f_{L,k}(\u_{L,k})^{p_{L,k}}+\g_{L}(\u_{\widehat{L}})^{p_{L}}-\sum_{k=1}^{{\ell_d}-1}\F_{L,k}(\u_{L,k})f_{L,k}(\u_{L,k})^{p_{L,k}}\right)\\
    &= -\sum_{k=1}^{m-1}\frac{\partial u_{n}}{\partial u_{k}}\cdot u_{k}+\G_{L,\ell_{d}}(\bm u_{\widehat{L,\ell_d}})\g_{L}(\u_{\widehat{L}})^{p_{L}}.
\end{align*}

By factoring out $\G_{L,\ell_{d}}(\bm u_{\widehat{L,\ell_d}})$ from the equation, the terms
$\frac{\partial u_{n}}{\partial u_{k}}\cdot u_{k}$ loose the
$\G_{L,\ell_{d}}$ in front and we get basically the same equation as
before, only that the new leaf (the new ``$u_{n}$'') is $\g_{L}(\u_{\widehat{L}})^{p_L}$
and we got rid of all the children of ${L}$. By repeating that
procedure up to the root node, we successively factor out all
$\G_{L'}(\bm u_{\widehat{L'}})$ for $L'\in\mathcal L$ until all terms of the sum vanish and
we are only left with $\f_{\emptyset}=1$.  Therefore, the determinant
is
\begin{align*}
  \frac{1}{r^{n-1}}|\det\mathcal{J}| & = 
  \prod_{L\in\mathcal L}\G_{L}(\bm u_{\widehat L})
\end{align*} which
completes the proof.\end{proof}

\section{Volume and Surface of the $L_p$-Nested Unit Sphere}
\label{app:VolumeSurface}
\begin{proof}[Proposition \ref{pro:VolumeSurface}]
  We obtain the volume by computing the integral $\int_{f(\x)\le
    R}d\x$. Differentiation with respect to $R$ yields the surface
  area. For symmetry reasons we can compute the volume only on the
  positive quadrant $\R_+^n$ and multiply the result with $2^n$ later
  to obtain the full volume and surface area.  The strategy for
  computing the volume is as follows. We start off with inner nodes
  $I$ that are parents of leaves only. The value $\f_I$ of such a node
  is simply the $L_{p_I}$ norm of its children. Therefore, we can
  convert the integral over the children of $I$ with the
  transformation of \cite{gupta:1997}. This maps the leaves
  $\ff_{I,1:\ell_I}$ into $\f_I$ and ``angular'' variables
  $\twbu$. Since integral borders of the original integral
  depend only on the value of $\f_I$ and not on $\twbu$, we can
  separate the variables $\twbu$ from the radial variables $\f_I$ and
  integrate the variables $\twbu$ separately. The
  integration over $\twbu$ yields a certain factor, while
  the variable $\f_I$ effectively becomes a new leaf.

  Now suppose $I$ is the parent of leaves only. Without loss of
  generality let the $\ell_I$ leaves correspond to the last $\ell_I$
  coefficients of $\x$. Let $\x\in \R_+^n$. Carrying out the first
  transformation and integration yields
  \begin{align*}
    \int_{f(\x)\le R} d\x &= \int_{f(\x_{1:n-\ell_I, \f_I})\le R}
    \int_{\twbu\in \mathcal V_+^{\ell_I-1}}
    \f_{I}^{\ell_{I}-1}\left(1-\sum_{i=1}^{\ell_{I}-1}\twu_{i}^{p_{I}}\right)^{\frac{1-p_{I}}{p_{I}}}d\f_{I}d\twbu
    d\x_{1:n-\ell_I}\\
    &=
    \int_{f(\x_{1:n-\ell_I, \f_I})\le R}\f_{I}^{n_{I}-1}d\f_{I}d\x_{1:n-\ell_I}\times\int_{\twbu\in \mathcal V_+^{\ell_I-1}}\left(1-\sum_{i=1}^{\ell_{I}-1}\twu_{i}^{p_{I}}\right)^{\frac{n_{I,\ell_{I}}-p_{I}}{p_{I}}}d\twbu.
 \end{align*}
 For solving the second integral we make the pointwise transformation
 $s_i = \twu_i^{p_I}$ and obtain
\begin{align*}
\int_{\twbu\in \mathcal
  V_+^{\ell_I-1}}\left(1-\sum_{i=1}^{\ell_{I}-1}\twu_{i}^{p_{I}}\right)^{\frac{n_{I,\ell_{I}}-p_{I}}{p_{I}}}d\twbu
&= \frac{1}{p_I^{\ell_I-1}}\int_{\sum s_i \le
  1}\left(1-\sum_{i=1}^{\ell_{I}-1}s_i\right)^{\frac{n_{I,\ell_{I}}}{p_{I}}-1}
\prod_{i=1}^{\ell_I-1} s_i^{\frac{1}{p_I}-1}d{\bm
  s_{\ell_I-1}}
\\
&= \frac{1}{p_I^{\ell_I-1}}\prod_{k=1}^{\ell_I-1}
B\left[\frac{\sum_{i=1}^{k}
    n_{I,k}}{p_I},\frac{n_{I,k+1}}{p_I}\right]\\
&= \frac{1}{p_I^{\ell_I-1}}\prod_{k=1}^{\ell_I-1}
B\left[\frac{k}{p_I},\frac{1}{p_I}\right]
\end{align*}
by using the fact that the transformed integral has the form of an
unnormalized Dirichlet distribution and, therefore, the value of the
integral must equal its normalization constant.

Now, we solve the integral 
\begin{align}
  \int_{f(\x_{1:n-\ell_I, \f_I})\le
    R}\f_{I}^{n_{I}-1}d\f_{I}d\x_{1:n-\ell_I}.\label{eq:leftIntegralStart}
\end{align}
We carry this out in exactly the same manner as we solved the previous
integral. We only need to make sure that we only contract nodes that
have only leaves as children (remember that radii of contracted nodes
become leaves) and we need to find a formula how the factors
$\f_I^{n_I-1}$ propagate through the tree.

For the latter, we first state the formula and then prove it via
induction. For notational convenience let $\mathcal J$ denote the set
of multi-indices corresponding to the contracted leaves,
$\x_{\widehat{\mathcal J}}$ the remaining coefficients of $\x$ and
$\ff_{\mathcal J}$ the vector of leaves resulting from
contraction. The integral which is left to solve after integrating
over all $\twbu$ is given by (remember that $n_{J}$ denotes real
leaves, i.e. the ones corresponding to coefficients of $\x$):
\begin{align*}
  \int_{f(\x_{\widehat{\mathcal J}},\ff_{\mathcal J})\le R}\prod_{J\in\mathcal J}\f_{J}^{n_J-1}d\ff_{\mathcal J} d\x_{\widehat{\mathcal J}}.
\end{align*}
We already proved the first induction step by computing equation
(\ref{eq:leftIntegralStart}). For computing the general induction step
suppose $I$ is an inner node whose children are leaves or contracted
leaves. Let $\mathcal J'$ be the set of contracted leaves under $I$
and ${\mathcal K}=\mathcal J \backslash \mathcal J'$.
Transforming the children of $I$ into radial coordinates by
\cite{gupta:1997} yields
\begin{align*}
  \int_{f(\x_{\widehat{\mathcal J}},\ff_{\mathcal J})\le R}\prod_{J\in\mathcal
    J}\f_{J}^{n_J-1}d\ff_{\mathcal J} d\x_{\widehat{\mathcal J}}
&=   \int_{f(\x_{\widehat{\mathcal J}},\ff_{\mathcal J})\le R}\(\prod_{K\in{\mathcal K}}
\f_{K}^{n_{K}-1}\)\cdot \(\prod_{J'\in\mathcal
    J'}\f_{J'}^{n_{J'}-1}\) d\ff_{\mathcal J} d\x_{\widehat{\mathcal J}}\\
&=    \int_{f(\x_{\widehat{{\mathcal K}}},\ff_{\mathcal K},\f_I)\le R}\int_{\twbu_{\ell_I-1}\in \mathcal
  V_+^{\ell_I-1}}
\( \(1-\sum_{i=1}^{\ell_I-1}\twu_i^{p_I}\)^{\frac{1-p_I}{p_I}}
\f_I^{\ell_I-1}\)\cdot\( \prod_{K\in{\mathcal K}} \f_{K}^{n_{K}-1} \)\\
  &\times \(\(\f_I
  \(1-\sum_{i=1}^{\ell_I-1}\twu_i^{p_I}\)\)^{\frac{n_{\ell_I}-1}{p_I}}\prod_{k=1}^{\ell_I-1}\(\f_I
  \twu_k\)^{n_k-1}\) d\x_{\widehat{{\mathcal K}}} d\ff_{{\mathcal K}} d\f_I d\twbu_{\ell_I-1}\\
&=    \int_{f(\x_{\widehat{{\mathcal K}}},\ff_{\mathcal K},\f_I)\le R}\int_{\twbu_{\ell_I-1}\in \mathcal
  V_+^{\ell_I-1}}
\( \prod_{K\in{\mathcal K}} \f_{K}^{n_{K}-1} \)\\
  &\times \(\f_I^{\ell_I-1+\sum_{i=1}^{\ell_I}(n_i-1)}
  \(1-\sum_{i=1}^{\ell_I-1}\twu_i^{p_I}\)^{\frac{n_{\ell_I}-p_I}{p_I}}\prod_{k=1}^{\ell_I-1}
  \twu_k^{n_k-1}\) d\x_{\widehat{{\mathcal K}}} d\ff_{{\mathcal K}} d\f_I d\twbu_{\ell_I-1}\\
&=    \int_{f(\x_{\widehat{{\mathcal K}}},\ff_{\mathcal K},\f_I)\le R}
\( \prod_{K\in{\mathcal K}} \f_{K}^{n_{K}-1} \)\f_I^{n_I-1} d\x_{\widehat{{\mathcal K}}} d\ff_{{\mathcal K}} d\f_I \\
  &\times \int_{\twbu_{\ell_I-1}\in \mathcal
  V_+^{\ell_I-1}}
  \(1-\sum_{i=1}^{\ell_I-1}\twu_i^{p_I}\)^{\frac{n_{\ell_I}-p_I}{p_I}}\prod_{k=1}^{\ell_I-1}
  \twu_k^{n_k-1}  d\twbu_{\ell_I-1}.
\end{align*}
Again, by transforming it into a Dirichlet distribution, the latter
integral has the solution
$$\int_{\twbu_{\ell_I-1}\in \mathcal
  V_+^{\ell_I-1}}
  \(1-\sum_{i=1}^{\ell_I-1}\twu_i^{p_I}\)^{\frac{n_{\ell_I}-p_I}{p_I}}\prod_{k=1}^{\ell_I-1}
  \twu_k^{n_k-1}  d\twbu_{\ell_I-1} = \prod_{k=1}^{\ell_I-1}
B\left[\frac{\sum_{i=1}^{k}
    n_{I,k}}{p_I},\frac{n_{I,k+1}}{p_I}\right]$$
while the remaining former integral has the form 
$$\int_{f(\x_{\widehat{{\mathcal K}}},\ff_{\mathcal K},\f_I)\le R}
\( \prod_{K\in{\mathcal K}} \f_{K}^{n_{K}-1} \)\f_I^{n_I-1} d\x_{\widehat{{\mathcal K}}} d\ff_{{\mathcal K}} d\f_I =
\int_{f(\x_{\widehat{\mathcal J}},\ff_{\mathcal J})\le R}\prod_{J\in\mathcal
  J}\f_{J}^{n_J-1}d\ff_{\mathcal J} d\x_{\widehat{\mathcal J}}$$
as claimed. 

By carrying out the integration up to the root node the remaining
integral becomes
$$\int_{\f_\nod \le R} \f_\nod^{n-1}d\f_\nod=\int_0^R
\f_\nod^{n-1}d\f_\nod=\frac{R^n}{n}.$$ Collecting the factors from
integration over the $\twbu$ proves the equations (\ref{eq:volbeta})
and (\ref{eq:surbeta}). Using
$B\left[a,b\right]=\frac{\Gamma[a]\Gamma[b]}{\Gamma[a+b]}$ yields
equations (\ref{eq:volgamma}) and (\ref{eq:surgamma}).
\end{proof}

\section{Layer Marginals}
\label{app:layermarginals}
\begin{proof}[Proposition \ref{pro:layermarginals}]
  \begin{align*}
    \rho(\x) &= \frac{\varrho(f(\x))}{\mathcal  S_f(f(\x)) } \\
    &= \frac{\varrho(f(\x_{1:n-\ell_I},\f_I,\twbu_{\ell_I-1},\Delta_{n}))}{\mathcal
      S_f(f(\x)) }\cdot 
      \f_{I}^{\ell_{I}-1}\left(1-\sum_{i=1}^{\ell_{I}-1}|\twu_{i}|^{p_{I}}\right)^{\frac{1-p_{I}}{p_{I}}}
 \end{align*}
 where $\Delta_n = \text{sign} (x_n)$. Note that $f$ is invariant to
 the actual value of $\Delta_n$. However, when integrating it out, it
 yields a factor of $2$. Integrating out $\twbu_{\ell_I-1}$ and
 $\Delta_n$ now yields 
  \begin{align*}
    \rho(\x_{1:n-\ell_I},\f_I)&= \frac{\varrho(f(\x_{1:n-\ell_I},\f_I))}{\mathcal
      S_f(f(\x)) }\cdot 
      \f_{I}^{\ell_{I}-1} \frac{2^{\ell_I}
        \Gamma^{\ell_I}\left[\frac{1}{p_I}\right]}{p_I^{\ell_I-1}\Gamma\left[\frac{\ell_I}{p_I}\right]}\\
      &= \frac{\varrho(f(\x_{1:n-\ell_I},\f_I))}{\mathcal
      S_f(f(\x_{1:n-\ell_I},\f_I)) }\cdot 
      \f_{I}^{\ell_{I}-1} \\
 \end{align*}
 Now, we can go on an integrate out more subtrees. For that purpose,
 let $\x_{\widehat{\mathcal J}}$ denote the remaining coefficients of $\x$, $\ff_{\mathcal J}$ the
 vector of leaves resulting from the kind of contraction just shown for
 $\f_I$ and $\mathcal J$ the set of multi-indices corresponding to the
 ``new leaves'', i.e the node $\f_I$ after contraction. We obtain the
 following equation
 \begin{align*}
   \rho( \x_{\widehat{\mathcal J}},\ff_{\mathcal J}) &= \frac{\varrho(f(\x_{\widehat{\mathcal J}},\ff_{\mathcal J}))}{S_f(f(\x_{\widehat{\mathcal J}},\ff_{\mathcal J}))}\prod_{J\in\mathcal J}\f_{J}^{n_J-1}.
 \end{align*}
 where $n_J$ denotes the number of leaves in the subtree under the
 node $J$. The calculations for the proof are basically the same as
 the one for proposition (\ref{pro:VolumeSurface}).
\end{proof}

\section{Factorial $L_p$-Nested Distributions}
\label{app:noIndep}
\begin{proof}[Proposition \ref{pro:noIndep}]
  Since the single $x_i$ are independent,
  $f_1(\x_{1}),...,f_{\ell_\nod}(\x_{{\ell_\nod}})$ and, therefore,
  $\f_1,...,\f_{\ell_\nod}$ must be independent as well ($\x_i$ are
  the elements of $\x$ in the subtree below the $i$th child of the
  root node). Using Corollary \ref{cor:layermarginals} we can write
  the density of $\f_1,...,\f_{\ell_\nod}$ as (the function name $g$
  is unrelated to the usage of the function $g$ above)

\begin{align*}
  \rho(\ff_{1:\ell_\nod}) &= \prod_{i=1}^{\ell_\nod} h_i(\f_i) =
  g(\|\ff_{1:\ell_\nod}\|_{p_\nod})
  \prod_{i=1}^{\ell_\nod}\f_i^{n_i-1}
\end{align*}
with
\begin{align*}
g(\|\ff_{1:\ell_\nod}\|_{p_\nod}) &= \frac{p_\nod^{\ell_\nod -1} \Gamma\left[\frac{n}{p_\nod}\right]}
    {f(\f_1,...,\f_{\ell_\nod})^{n -1} 2^m \prod_{k=1}^{\ell_\nod}\Gamma\left[\frac{n_{k}}{p_\nod}\right]}
    \varrho\(\|\ff_{1:\ell_\nod}\|_{p_\nod}\)
\end{align*}
Since the integral over $g$ is finite, it follows from
\cite{sinz:2009a} that $g$ has the form
$g(\|\ff_{1:\ell_{\nod}}\|_{p_\nod}) = \exp(a_\nod
\|\ff_{1:\ell_{\nod}}\|_{p_\nod}^{p_\nod} + b_\nod)$ for
appropriate constants $a_\nod$ and $b_\nod$. Therefore, the marginals
have the form 
\begin{align}
h_i(\f_i) = \exp(a_\nod \f_i^{p_\nod} + c_\nod)
\f_i^{n_i-1}\label{eq:marginalform}.
\end{align}

On the other hand, the particular form of $g$ implies that the radial
density has the form $\varrho(f(\x)) \propto f(\x)^{(n-1)} \exp(a_\nod
f(\x)^{p_\nod} + b_\nod)^{p_\nod}$. In particular, this implies that the root
node's children $f_i(\x_i)$ ($i=1,...,\ell_\nod$) are independent and
$L_p$-nested again. With the same argument as above, it follows that
their children $\ff_{i,1:\ell_i}$ follow the distribution
$\rho(\f_{i,1},...,\f_{i,\ell_i})= \exp(a_i
\|\ff_{i,1:\ell_i}\|_{p_i}^{p_i} + b_i) \prod_{j=1}^{\ell_i}
\f_{i,j}^{n_{i,j}-1}$. Transforming that distribution to
$L_p$-spherically symmetric polar coordinates
$\f_i=\|\ff_{i,1:\ell_i}\|_{p_i}$ and $\twbu = \ff_{i,1:\ell_i-1} /
\|\ff_{i,1:\ell_i}\|_{p_i}$ as in \cite{gupta:1997}, we obtain the
form
\begin{align*}
\rho(\f_{i},\twbu)&= \exp(a_i \f_{i}^{p_i} +
b_i)\f_i^{\ell_i - 1} \(1-\sum_{j=1}^{\ell_i-1}
|\twu_i|^{p_i}\)^\frac{1-p_i}{p_i} 
\(\f_i \(1-\sum_{j=1}^{\ell_i-1}
|\twu_i|^{p_i}\)^\frac{1}{p_i} \)^{n_{i,\ell_i}-1}  \prod_{j=1}^{\ell_i-1}
(\twu_j \f_i)^{n_{i,j}-1}\\
&= \exp(a_i \f_{i}^{p_i} +
b_i) \f_i^{n_i-1} \(1-\sum_{j=1}^{\ell_i-1}
|\twu_i|^{p_i}\)^\frac{n_{i,\ell_i}-p_i}{p_i} 
  \prod_{j=1}^{\ell_i-1}
\twu_j^{n_{i,j}-1},
\end{align*}
where the second equation follows the same calculations as in the
proof of \ref{pro:VolumeSurface}.  After integrating out $\twbu$,
assuming that the $x_i$ are statistically independent, we obtain the
density of $\f_i$ which is equal to (\ref{eq:marginalform}) if and
only if $p_i=p_\nod$. However, if $p_\nod$ and $p_i$ are equal, the
hierarchy of the $L_p$-nested function shrinks by one layer since
$p_i$ and $p_\nod$ cancel themselves. Repeated application of the
above argument collapses the complete $L_p$-nested tree until one
effectively obtains an $L_p$-spherical function. Since the only
factorial $L_p$-spherically symmetric distribution is the
$p$-generalized Normal \citep{sinz:2009a} the claim follows.
\end{proof}

\section{Determinant of the Jacobian for NRF}
\label{app:radialJacobian}
\begin{proof}[Lemma \ref{lem:radialDeterminant}]
  The proof is a generalization of the proof of \cite{lyu:2009}. Due
  to the chain rule the Jacobian of the entire transformation is the
  multiplication of the Jacobians for each single step, i.e. the
  rescaling of a subset of the dimensions for one single inner
  node. The Jacobian for the other dimensions is simply the identity
  matrix. Therefore, the determinant of the Jacobian for each single
  step is the determinant for the radial transformation on the
  respective dimensions. We show how to compute the determinant for a
  single step. 

  Assume that we reached a particular node $I$ in Algorithm
  \ref{alg:non-linearICA}. The leaves, which have been rescaled by the
  preceding steps, are called $\t_I$. Let ${\bm \xi}_I =
  \frac{g_I(f_I(\t_I))}{f_I(\t_I))} \cdot \t_I$ with $g_I(r) =
  (\mathcal F_{\bot\!\!\!\bot}^{-1}\circ \mathcal F_{s})(r)$. The
  general form of a single Jacobian is
  \begin{align*}
    \frac{\partial{\bm \xi}_I}{\partial\t_I}&=\t_I\cdot\frac{\partial}{\partial\t_I}\left(\frac{g_I(f_I(\t_I))}{f_I(\t_I)}\right)+\frac{g_I(f_I(\t_I))}{f_I(\t_I)}I_{n_I},
  \end{align*}
  where
  \begin{align*}
    \frac{\partial}{\partial\t_I}\left(\frac{g_I(f_I(\t_I))}{f_I(\t_I)}\right)&=\(\frac{g_I'(f_I(\t_I))}{f_I(\t_I)}-\frac{g_I(f_I(\t_I))}{f_I(\t_I)^{2}}\)\frac{\partial}{\partial\t_I}f_I(\t_I).
  \end{align*}

  Let $y_{i}$ be a leave in the subtree under $I$ and let $I,
  J_1,...,J_k$ be the path of inner nodes from $I$ to $y_i$, then
\begin{align*}
\frac{\partial}{\partial
  y_{i}}f_I(\t_I)&=v_{I}^{1-p_{I}}v_{J_{1}}^{p_{I}-p_{J_{1}}}\cdot...\cdot
v_{k}^{p_{J_{k-1}}-p_{J_{k}}}|y_{i}|^{p_{J_{k}}-1} \cdot \mbox{sgn} y_{i}.
\end{align*}

If we denote $r=f_I(\t_I)$ and
$\zeta_{i}=v_{J_{1}}^{p_{I}-p_{J_{1}}}\cdot...\cdot
v_{k}^{p_{J_{k-1}}-p_{J_{k}}}|y_{i}|^{p_{J_{k}}-1} \cdot \mbox{sgn} y_{i}$ for the respective $J_k$, we obtain
\begin{align*}
\det\left(\t_I\cdot\frac{\partial}{\partial\t_I}\left(\frac{g_I(f_I(\t_I))}{f_I(\t_I)}\right)+\frac{g_I(f_I(\t_I))}{f_I(\t_I)} I_{n_I}\right)&=\det\left(\left(g_I'(r)-\frac{g_I(r)}{r}\right)r^{-p_{I}}\t_I\cdot{\bm \zeta}^{\top}+\frac{g_I(r)}{r} I_{n_I}\right).
\end{align*}

Now we can use Sylvester's determinant formula $\det(I_{n}+b\t_I{\bm
  \zeta}^{\top})=\det(1+b\t_I^{\top}{\bm \zeta})=1+b\t_I^{\top}{\bm
  \zeta}$ 
or equivalently
\begin{align*}
\det(aI_{n}+b\t_I{\bm \zeta}^{\top})&=\det\left(a\cdot\left(I_{n}+\frac{b}{a}\t_I{\bm \zeta}^{\top}\right)\right)\\&=a^{n}\det\left(I_{n}+\frac{b}{a}\t_I{\bm \zeta}^{\top}\right)\\&=a^{n-1}(a+b\t_I^{\top}{\bm \zeta}),
\end{align*}
as well as $\t_I^{\top}{\bm \zeta}=f_I(\t_I)^{p_{I}} = r^{p_I}$ to see that
\begin{align*}
\det\left(\left(g_I'(r)-\frac{g_I(r)}{r}\right)r^{-p_{I}}\t_I\cdot{\bm \zeta}^{\top}+\frac{g_I(r)}{r}I_{n}\right)&=\frac{g_I(r)^{n-1}}{r^{n-1}}\det\left(\left(g_I'(r)-\frac{g_I(r)}{r}\right)r^{-p_{I}}\t_I^{\top}\cdot{\bm \zeta}+\frac{g_I(r)}{r}\right)\\&=\frac{g_I(r)^{n-1}}{r^{n-1}}\det\left(g_I'(r)-\frac{g_I(r)}{r}+\frac{g_I(r)}{r}\right)\\&=\frac{g_I(r)^{n-1}}{r^{n-1}}\frac{d}{dr}g_I(r).
\end{align*}
$\frac{d}{dr}g_I(r)$ is readily computed via $\frac{d}{dr}g_I(r) = \frac{d}{dr}(\mathcal
  F_{\bot\!\!\!\bot}^{-1}\circ \mathcal F_{s})(r) =
  \frac{\varrho_s(r)}{\varrho_{\bot\!\!\!\bot}(g_I(r))}$. 

Multiplying the single determinants along with $\det W$ for the final
step of the chain rule completes the proof.
\end{proof}
\bibliographystyle{abbrvnat.bst}
\bibliography{SinzBethge2009a} 


\end{document}